\renewcommand{\algocf@captiontext}[2]{#1\algocf@typo. \AlCapFnt{}#2} % text of caption
\def\@algocf@capt@plain{top}
\renewcommand{\algocf@makecaption}[2]{%
  \addtolength{\hsize}{\algomargin}%
  \sbox\@tempboxa{\algocf@captiontext{#1}{#2}}%
  \ifdim\wd\@tempboxa >\hsize%     % if caption is longer than a line
    \hskip .5\algomargin%
    \parbox[t]{\hsize}{\algocf@captiontext{#1}{#2}}% then caption is not centered
  \else%
    \global\@minipagefalse%
    \hbox to\hsize{\box\@tempboxa}% else caption is centered
  \fi%
  \addtolength{\hsize}{-\algomargin}%
}
\newcommand{\bV}{ {\bf V} }
\newcommand{\sX}{ { \mathbb{ X}} }
\newcommand{\sA}{ { \mathcal{ A}} }
\newcommand{\sP}{ { \mathcal{ P}} }
\newcommand{\kl}{\mathrm{KL}}
\DeclareMathOperator{\sign}{\mathrm{sign}}
\newcommand{\R}{\mathbb{R}}
\newcommand{\N}{\mathbb{N}}
\newcommand{\V}{\mathbb{V}}
\newcommand{\E}{\mathbb{E}}
\newcommand{\I}{\mathbb{I}}
\newtheorem{assumption}{Assumption}
\newtheorem{remark}{Remark}
\newtheorem{lemma}{Lemma}
\newtheorem{theorem}{Theorem}
\newtheorem{proposition}{Proposition}
\newtheorem{corollary}{Corollary}
\title{Zeroth-order parallel sampling}
\date{\vspace{-5ex}}
\author{Francesco Pozza\thanks{Bocconi Institute for Data Science and Analytics, Bocconi University, Milan, Italy. \texttt{francesco.pozza2@unibocconi.it}}\;  and Giacomo Zanella\thanks{
Department of Decision Sciences and Bocconi Institute for Data Science and Analytics, Bocconi University, Milan, Italy. \texttt{giacomo.zanella@unibocconi.it}\\
\emph{Both authors acknowledge support from the European Union (ERC), through the Starting grant `PrSc-HDBayLe', project number 101076564.}}
}
\begin{document}

\maketitle
\begin{abstract}
Finding effective ways to exploit parallel computing to accelerate Markov chain Monte Carlo methods is an important problem in Bayesian computation and related disciplines. In this paper, we consider the zeroth-order setting where the unnormalized target distribution can be evaluated but its gradient is unavailable for theoretical, practical, or computational reasons. We also assume access to $m$ parallel processors to accelerate convergence. The proposed approach is inspired by modern zeroth-order optimization methods, which mimic gradient-based schemes by replacing the gradient with a zeroth-order stochastic gradient estimator. Our contribution is twofold. 
First, we show that a naive application of popular zeroth-order stochastic gradient estimators within Markov chain Monte Carlo methods leads to algorithms with poor dependence on $m$, both for unadjusted and Metropolis-adjusted schemes. 
We then propose a simple remedy to this problem, based on a random-slice perspective, as opposed to a stochastic gradient one, obtaining a new class of zeroth-order samplers that provably achieve a polynomial speed-up in $m$. Theoretical findings are supported by numerical studies.
\end{abstract}

\begin{keywords}
Markov chain Monte Carlo;
Parallel computing; 
Log-concave distributions; 
Mixing time; 
Spectral gap;
Zeroth-order optimization
\end{keywords}

\section{Introduction} \label{sec:1}
There is substantial interest in developing Markov chain Monte Carlo (MCMC) methods that can exploit parallel computation in their routines to improve efficiency \citep{rosenthal2000parallel,liu2000multiple,tjelmeland2004using,brockwell2006parallel,calderhead2014general,gagnon2023improving}.
Being MCMC algorithms sequential in their very own nature, this task presents in general non-trivial complications, which resulted in simply running different chains parallel \citep{rosenthal2000parallel} being arguably the most popular solution applied in practice. Besides being relatively straightforward to implement, this approach has the advantage of running essentially without any communication cost between different processors. Moreover, in problems where the burn-in of the chain (i.e., the time needed for each chain to reach convergence) is small compared to the total running time, it is successful in increasing the effective sample size of the simulation.
On the contrary, for complex and high-dimensional target densities, MCMC algorithms can require to run for many iterations before converging to the stationary distribution. In these scenarios, the benefits of using parallel chain MCMC methods instead of classical single-processor ones tend to be more limited, making these solutions less attractive, especially if the length of each independent chain is limited. For this reason, over the last two decades, an alternative line of research has focused on developing MCMC methods that use parallel computation to accelerate convergence. These algorithms incorporate parallel computation within a single Markov chain and often have a more complex structure than parallel chains MCMC methods. In the following, we refer to this class of techniques as \emph{in-step parallelization} methods.
 
In this paper, we consider the problem of using parallel computations to speed-up convergence in the zeroth-order framework. Namely, the setting where only the potential of the target density can be evaluated, but not its derivatives. Although zeroth-order algorithms are generally less efficient than first and second order solutions (i.e., methods that exploit gradient and Hessian information), this framework is receiving renewed attention, particularly in optimization literature. This is because, even though the important developments of automatic differentiation techniques \citep{baydin2018automatic,margossian2019review} have made gradient-based algorithms easier to implement, there are still practical situations where applying these techniques is not always straightforward either for theoretical or computational reasons. 
Examples discussed in the literature \citep{nesterov2017random,kozak2021stochastic} include: i) models whose likelihood is obtained with complex black-box computer programs which do not include the possibility of computing derivatives (at least not without a substantial and time-consuming effort of a human programmer) and ii) target distributions which are not everywhere differentiable. Moreover, even when backward-mode automatic differentiation can be used to obtain the entire gradient at the cost of no more than four target evaluations \citep{nesterov2017random}, this approach still requires storing multiple intermediate steps.
In complex models, this can result in memory overflow, making gradient-based methods challenging to implement in these contexts \citep{kozak2021stochastic}.

In this paper, we adopt a perspective inspired by recent advances in zeroth-order optimization \citep[see e.g.,][]{duchi2015optimal,nesterov2017random,kozak2021stochastic}. In this field, there is increasing evidence that zeroth-order algorithms designed to mimic first-order methods are more efficient than fully derivative-agnostic approaches. These methods generalize classical gradient descent by replacing the exact gradient with a zeroth-order stochastic estimator, typically constructed using finite-difference approximations of partial derivatives along random directions.

We investigate whether this approach can be effectively extended to sampling. Our contribution is twofold. First, we show that naive applications to sampling of strategies developed in optimization fail to yield efficient parallel sampling algorithms, as discussed in Section \ref{sec:3}. Second, we propose a simple and effective zeroth-order sampling method that overcomes these inefficiencies, presented in Section \ref{sec:4}. Our approach is designed for high-dimensional settings where the number of available parallel processors $m$ is strictly smaller than the dimension $d$ of the target distribution.

\subsection{Contributions of the paper}

The paper is organized as follows. Section \ref{sec:2} introduces the zeroth-order framework, outlines the main technical assumptions (Section \ref{sec:sgradient}), and reviews optimization strategies based on zeroth-order stochastic estimators of the gradient (Section \ref{sec:2:2}). Section \ref{sec:3} examines the possibility of directly extending the approaches described in Section \ref{sec:2} to sampling, both for unadjusted and Metropolis-adjusted methods. This is done by replacing the exact gradient with a zeroth-order estimate in the unadjusted Langevin algorithm (ULA) \citep{durmus2019high} and its Metropolis-adjusted counterpart (MALA) \citep{roberts1996exponential}. These approaches are shown to perform suboptimally with respect to their dependence on $d$ and $m$ (Sections \ref{sec:naive:sula} and \ref{sec:naive:smala}, respectively).

Building on the insights from the previous section, Section \ref{sec:4} presents our main methodological contribution: the zeroth-order random slice sampler. Instead of using target evaluations to construct an estimate of the full gradient, we propose to restrict, at each iteration, the proposal distribution to a random $m$-dimensional subspace. On this subspace, the gradient can be well approximated deterministically using finite differences. This simple modification resolves the inefficiencies identified in Section \ref{sec:3}. In Section \ref{sec:4:2}, we combine the random slice sampler with a zeroth-order version of Hamiltonian Monte Carlo (HMC) \citep{neal2011mcmc}. Section \ref{sec:4:3} studies the theoretical properties of the zeroth-order random slice sampler, proving that it can achieve a polynomial improvement in efficiency with respect to the number of parallel workers, thus substantially outperforming existing zeroth-order parallel MCMC methods. The empirical performance of the proposed methodology is illustrated on two Bayesian regression models in Section \ref{sec:5}. All proofs are provided in Appendix \ref{sec:app:a}. Code is available at \url{https://github.com/Francesco16p/ZO_Parallel_sampling}

\subsection{Previous work about in-step parallelization of MCMC}

When the focus is on sampling, and a single computing unit is available, the random-walk Metropolis (RWM) MCMC algorithm is arguably the gold-standard zeroth-order sampler. If correctly implemented, this algorithm requires only one evaluation of the potential function per iteration. In the case of multiple parallel processors, the situation is more varied. Many of the proposed in-step parallelization methods can be cast as generalizations of RWM belonging to the multi-proposal family \citep{liu2000multiple,tjelmeland2004using,calderhead2014general,craiu2007acceleration,bedard2012scaling,casarin2013interacting,martino2017issues,fontaine2022adaptive,gagnon2023improving,yang2023convergence}. These algorithms share a common structure for which, at every iteration, a set of multiple points is proposed as  possible next states of the chain
and, subsequently, one of them is selected with a probability that maintains the transition kernel invariant with respect to the target. Importantly, this construction allows for parallelization at each step of the algorithm. When these methods are properly designed, a higher number of proposed points corresponds to better performance \citep{gagnon2023improving}. However, while these techniques can perform well in low- or moderate-dimensional problems, recent theoretical investigations \citep{pozza2025fundamental, caprio2025analysis} highlight that, in high-dimensional settings, the improvement with the number of parallel processors $m$ is usually limited. Indeed, in important situations such as high-dimensional log-concave targets, the efficiency improvement of multiple-proposal MCMC is is at most poly-logarithmic in $m$ \citep{pozza2025fundamental}.

An alternative form of in-step parallelization, compared to multiple-proposal methods, is pre-fetching \citep{brockwell2006parallel}. This algorithm uses parallel workers to simulate an RWM chain ahead of time by evaluating all of its possible trajectories. Although this approach differs substantially from that employed in multiple-proposal MCMC, the number of target evaluations increases exponentially with the length of the chain. This implies that the improvement for pre-fetching is only logarithmic in the number of parallel workers \citep{brockwell2006parallel}. Consequently, the use of in-step parallelization to improve efficiency and obtain MCMC methods that substantially outperform the RWM algorithm has essentially remained an open problem.

Recently, \citet{grazzi2025picard} developed an in-step parallelization for RWM using Picard maps in which performance increases polynomially with the number of processors $m$. 
Here we also obtain a polynomial increase in $m$, by taking a different perspective inspired by advances in zeroth-order optimization.

\section{Zeroth-order parallel sampling} 
\label{sec:2}
\subsection{The abstract framework}\label{sec:sgradient}
Let $\mathcal{B}(\R^d)$ be the Borel $\sigma$-field on $\R^d$ and denote by $\sP(\R^d)$ the set of probability measures on $(\R^d,\mathcal{B}(\R^d))$. Let $\pi$ be a target probability measure in $\sP(\R^d)$ and, with a slight abuse of notation, let us denote with $\pi(x) = \exp( - U(x) )$, $x \in \R^d,$ its probability density function with respect to the Lebesgue measure. We study the problem of generating samples from $\pi$ under the following assumptions:
\begin{itemize}
\item[(i)] A zeroth-order oracle for $U$ is available, i.e., $U$ can be evaluated but not its derivatives. Note that the distinctive aspect of this framework does not concern the existence or non-existence of derivatives of $U$, but rather the possibility of explicitly evaluating them.
\item[(ii)] $m \ge 1$ processors can be used in parallel to evaluate $U$ at $m$ different locations.
\end{itemize}
In the following, $m$ evaluations of $U$ performed in parallel is referred to as a \emph{parallel round} \citep{anari2024fast}. For simplicity, communication costs among processors are assumed negligible, so that the computational cost of one parallel round equals that of a single sequential evaluation of $U$.

In both optimization and sampling, where applicable, gradient information can be used to accelerate convergence. For this reason, one possible strategy for deriving good zeroth-order algorithms is to implement solutions that mimic derivative-based approaches by numerically estimating the derivatives of $U$ using zeroth-order queries. 
In serial applications, these methods are in general more computationally demanding compared to native first-order algorithms \cite{kozak2021stochastic}. However, if multiple parallel workers are available, part of the computations in their routines can be done in parallel, speeding up the algorithm considerably.

We consider algorithms that make use of the finite-difference approximation of directional derivatives reported in Algorithm \ref{alg:numdev}.
\begin{algorithm}[t]
\caption{$m$-dimensional finite-difference directional derivatives }\label{alg:numdev}
\KwIn{$x \in \R^d$, $m  \in \N$ and $\epsilon > 0$} 
Sample $\bV \sim \nu$\\
Compute
\begin{equation} \label{eq:FinDiff}
     \frac{ U(x + \epsilon v^{(i)} ) - U(x)}{ \epsilon },\,\qquad \mathrm{for} \quad i = 1, \dots,m.
\end{equation}
\end{algorithm}
Given a point $x \in \R^d$, a matrix $(v^{(1)},\dots, v^{(m)})=\bV \in \R^{d \times m} $ containing $m$ directions of interest and a sufficiently small step-size $\epsilon>0$, Algorithm \ref{alg:numdev} relies on the definition of derivative as the limit of the incremental ratio to provide an estimate of the vector of partial derivatives of $U$ at $x$ with respect to directions $\bV$. Taking the limit as $\epsilon \to 0$, every incremental ratio in \eqref{eq:FinDiff} converges to the corresponding directional derivative $ \partial_{v^{(i)}} U(x)$ of $U$ at $x$ (with respect to direction $v^{(i)}$), i.e., $ \partial_{v^{(i)}} U(x) = \lim_{\epsilon \to 0} ( U(x + \epsilon v^{(i)}) - U(x) )/\epsilon$. %Directional derivatives are simpler than the full gradient. % and remain well defined even for non-convex or non-smooth functions \citep{rockafellar1980generalized}.
In sufficiently regular optimization problems, the error introduced by a fixed $\epsilon$ in Algorithm \ref{alg:numdev} is typically negligible compared to the precision required to locate the optimum \citep{nesterov2017random}.
In sampling, this issue is arguably even less compelling, since derivative information is not needed to identify a specific optimum, but rather to guide the algorithm towards regions of high probability. Another important property of directional derivatives is that, when the gradient $\nabla U(x)$ exists, one has
$
\partial_{v^{(i)}} U(x) = \nabla U(x)^\top v^{(i)}.
$
That is, each directional derivative equals the inner product of the gradient of $U$ and the chosen direction. This identity greatly simplifies the theoretical analysis of methods relying on directional derivatives, as it separates aspects involving properties of $\nabla U$ from those depending intrinsically on how $\bV$ is generated. In the following, to use this approach, we ignore the approximation error in Algorithm \ref{alg:numdev}, and consider $\partial_{v^{(i)}} U(x)$ directly in place of its finite-difference approximation \eqref{eq:FinDiff} as done, for example, in \citet{nesterov2017random}. 

We conclude this section with a brief discussion on the amenability of Algorithm \ref{alg:numdev} to parallel computing. Assuming that $U(x)$ has been previously obtained, Algorithm \ref{alg:numdev} requires $m$ additional evaluations of the target function. However, if $m$ parallel workers are available, each operation in \eqref{eq:FinDiff} can be performed independently and in parallel thus resulting in a single parallel round. As a consequence, if communication and synchronization overheads are ignored, with $m$ processors, Algorithm \ref{alg:numdev} can be executed in essentially the same time as a single zeroth-order oracle call.

\subsection{Comparison with zeroth-order optimization} \label{sec:2:2}
Zeroth-order optimization has been studied extensively in the literature and particular emphasis has been placed on stochastic gradient algorithms \citep{nesterov2017random,kozak2021stochastic,kozak2023zeroth}. These methods use directional derivatives, possibly estimated with Algorithm \ref{alg:numdev}, to construct unbiased estimates of the gradient of $U$, which in turn are incorporated into gradient descent-type algorithms to find $x_* = \arg \min_{x \in \R^d} U(x)$. Given the many similarities that exist between sampling and optimization, in this section we briefly summarize important results of the latter. This is done with the aim of highlighting some essential properties that an efficient zeroth-order sampler should have, especially with respect to its algorithmic efficiency in terms of target dimension $d$ and number of parallel workers $m$. 

Given a starting point $x_0 \in \R^d$, an iteration $t \in \N$ of a generic zeroth-order stochastic gradient descent algorithm can be described by the two following steps:

\begin{equation}
\begin{aligned}
    &\text{1. Generate } \bV \in \R^{d\times m} \text{ from a probability distribution } \nu, \\
    &\text{2. Set } x_t = x_{t-1} - l_t \hat{\nabla}_{\bV} U(x_{t-1}),
\end{aligned}
\label{eq:zo_stoch_opt}
\end{equation}
where $\hat{\nabla}_{\bV} U(x) = c_{\nu} \bV \bV^{\top} \nabla U(x)$, for some $c_{\nu}\in \R$ such that $\E[\hat{\nabla}_{\bV} U(x)] = \nabla U(x)$ for $\bV\sim \nu$, is a zeroth-order stochastic estimator of the gradient and $l_t$ is a learning rate. Note that, in view of the equality
$
\partial_{v^{(i)}} U(x) = \nabla U(x)^\top v^{(i)},
$
estimators satisfying the above conditions for $\hat{\nabla}_{\bV} U(x)$ can be obtained (if one ignores the approximation error arising from choosing $\epsilon > 0$) by combining Algorithm \ref{alg:numdev} with an appropriate choice for $\nu$ and $c_{\nu}$ \citep[][Eq. 3]{kozak2021stochastic}.

For two positive functions $f,g : \R^{+} \to \R^{+}$ we say that $f = O(g)$ if there exists $x_0 \in \R^{+}$ and a positive constant $C$ such that $f(x) \leq C g(x) $ for any $x \geq x_0$. When $m = 1$, it is known that zeroth-order stochastic gradient methods require $O(\kappa d\log(1/\epsilon))$ iterations to achieve $\epsilon$-error for strongly convex and smooth functions with condition number $\kappa$, see Assumption \ref{cond:1} for definition. By contrast, under the same conditions, first-order methods require only $O(\kappa \log(1/\epsilon))$
iterations \citep[Ch. 5]{bach2024learning}, meaning that replacing the exact gradient with a $m = 1$ zeroth-order estimate introduces an additional cost factor of order $d$. 
When $m>1$ and $m\leq d$ the number of iterations reduces to $O(\kappa d/m \log(1/\epsilon))$  \citep[][Corollary 1]{kozak2021stochastic}, meaning that increasing the number of partial derivative evaluations per round can reduce convergence time by a factor of $m$. 
This fact has significant practical implications for both serial and parallel implementations of zeroth-order optimization methods. Indeed, the accuracy achieved after $t$ iterations using $m^*>1$ directions is equivalent to that obtained after running $m^* \times t$ iterations with $m = 1$. Thus, in serial implementations and for an equal number of target evaluations, different values of $m$ yield algorithms of comparable efficiency. Conversely, in parallel settings, if $m$ workers are available, one can achieve the same accuracy obtained by running $t$ iterations on a single processor in just $t/m$ parallel rounds.

This linear speed-up is substantially higher than the logarithmic one observed for common in-step parallel MCMC methods \citep{brockwell2006parallel,pozza2025fundamental}. In this paper, we aim to develop zeroth-order parallel MCMC methods whose improvement with respect to $m$ more closely resembles that observed in optimization, thus being at least polynomial in $m$.

We conclude this section by commenting in more detail on the various options regarding the generation of $\bV$. Common choices are independent $d$-dimensional Gaussians and independent uniform on the $d$-dimensional unit sphere (a random variable uniform on the $d$-dimensional unit sphere can always be obtained by sampling $Z \sim N_d(0,\I_d)$ and taking $\bV = Z/ \| Z \|$ \citep[][pg. 68]{vershynin2018high}). While computationally attractive, drawing $m$ independent directions leads to the undesirable property that $m \geq d$ directional derivatives do not exactly reconstruct the gradient since, in general, $\bV \bV^\top \neq \I_d$. Different solutions to overcome this issue can be considered.  In the following, we assume $\bV \in \V_m(\R^d)$, with $\V_m(\R^d) = \{ \bV \in \R^{d \times m}, \bV^{\top} \bV = \I_k \}$ denoting the $m$-dimensional Stiefel manifold on $\R^d$. Generating a sample from the uniform distribution $\nu = \nu_{\mathrm{unif}}$ on $\V_m(\R^d)$ can be done at cost $O(m^2d)$ via QR decomposition of $d \times d$ random matrix with i.i.d  $N_d(0,\I_d)$ distributed columns  \citep[][Sec. 4]{mezzadri2006generate}. Note that, up to a multiplicative correction, $\nu = \nu_{\mathrm{unif}}$ is equivalent to what has been assumed for example in \citet{kozak2021stochastic,kozak2023zeroth}. Alternatively, a less computationally intensive solution is to consider $\nu = \nu_{\mathrm{E}}$ which takes at each iteration an $m$-dimensional random subset of the canonical basis $\mathbf{E} = (e^{(1)}, \dots, e^{(d)}) $ where $e^{(i)}_i = 1$ and $e^{(i)}_j = 0$ for $i \neq j,\, j,i = 1, \dots,d$. In \eqref{eq:zo_stoch_opt}, this choice for $\bV$ is analogous to randomized coordinate descent \citep[][pg. 194]{aggarwal2020linear}, while in sampling it corresponds to Gibbs-type MCMC algorithms \citep{casella1992explaining,gelfand2000gibbs}. Note that, when $m = d$, $c_{\nu} \bV \bV^{\top} \nabla U(x) = \nabla U(x)$ for both $\nu = \nu_{\mathrm{unif}}$ and $\nu = \nu_{\mathrm{E}}$ if $c_{\nu} = d/m$. This means that, ignoring the error due to finite difference approximations, Algorithm \ref{alg:numdev} exactly reconstructs the gradient.

\section{Naive plug-in approaches} \label{sec:3}
\subsection{Unadjusted methods} \label{sec:naive:sula}
Once an estimate of $\nabla U(x)$ is available, a natural question is how to exploit it for designing efficient sampling algorithms. One possibility is to replace $\nabla U(x)$ with $\hat \nabla_{\bV} U(x)$ within any gradient-based sampler. By generating at each iteration a different realization of $\bV$, this strategy can, in principle, be adapted to both unadjusted \cite{durmus2019high} and adjusted \citep[][]{roberts1996exponential,neal2011mcmc,livingstone2022barker} gradient-based MCMC methods. 

In this section, we deal with the unadjusted framework by considering the case in which $\hat \nabla_{\bV} U(x)$ is incorporated in the unadjusted Langevin algorithm (ULA) \citep{roberts2002langevin}. In Algorithm \ref{alg:sula}, at iteration $t$, a set of $m$ random directions $\bV$ is sampled. Conditioned on $\bV$ a step of ULA is performed with the exact gradient replaced by its stochastic estimate obtained as in \eqref{eq:zo_stoch_opt}.

\begin{algorithm}
\caption{Zeroth-order unadjusted Langevin algorithm}\label{alg:sula}
\KwIn{$T \in \N$, $\gamma>0$, $c_{\nu} \in \R$, $\mu \in \mathcal{P}(\R^d)$ and $\nu$ distribution on $\V_m(\R^d)$.}
\vspace{3pt}
$X_0 \sim \mu$ \\
\For{$t = 1,\dots,T$ 
}{
Sample $\bV \sim \nu$ and $Z \sim N_d(0, \I_d)$ \\
Set $X_{t} =X_{t-1} - \gamma \hat{\nabla}_{\bV} U(X_{t-1}) + \sqrt{2 \gamma } Z $ with $\hat{\nabla}_{\bV} U(X_{t-1})= c_{\nu}\bV \bV^{\top} \nabla U(X_{t-1})$
}
\vspace{3pt}
\KwOut{Markov chain trajectory $(X_0,X_1,\dots, X_T)$.}
\end{algorithm}

Theoretical analysis of unadjusted MCMC algorithms typically involves addressing two distinct aspects \citep{durmus2019high}. First, given a particular metric of interest (often the Wasserstein distance of order 2 or the total variation distance), it is necessary to determine the speed at which the process converges to its stationary distribution, $\pi_{\gamma}$. Secondly, the distance between $\pi_{\gamma}$ and $\pi$ must be quantified. In the following we argue that, at least at a theoretical level, Algorithm \ref{alg:sula} does not represent a particularly appealing way of employing parallel computing resources. These inefficiencies stem mainly from how the invariant distribution of the algorithm approximates the target rather than how fast the algorithm reaches stationarity.

For two probability measures $\mu,\eta \in \sP(\R^d)$, a probability measure $\xi$ on $(\R^d \times \R^d, \mathcal{B}(\R^d) \times  \mathcal{B}(\R^d) )$ is a coupling of $\mu$ and $\eta$ if for every measurable set $A \subseteq \R^d$, $\xi(A \times \R^d) = \mu(A)$ and $\xi(\R^d \times A) = \eta(A)$. The set of all couplings of $\mu$ and $\eta$ is denoted with $\Xi(\mu,\eta)$. Finally, we define the Wasserstein distance of order 2 as
\begin{equation*}
    W_2(\mu, \eta) = \sqrt{ \inf_{\xi \in \Xi(\mu,\eta) } \int_{\R^d \times \R^d} \|x-y\|^2 \xi(dx,dy) }. 
\end{equation*}

To investigates the contraction properties, in Wasserstein distance of order 2, of Algorithm \ref{alg:sula} toward its stationary distribution we require two additional assumptions, the first on $\pi$ and the second on $\nu$. 
\begin{assumption} \label{cond:1}
$\pi \in \mathcal{P}(\R^d)$ has density $\exp(-U)$ where $U \, : \, \R^d \to \R$ is 
\begin{enumerate}
    \item $L$-smooth, meaning that $\nabla U : \, \R^d \to \R^d$ is $L$-Lipschitz.
    \item $\lambda$-convex, meaning that $U(x) - (\lambda/2)\|x\|^2$ is convex.
    \item Twice continuously differentiable.
\end{enumerate}
In the following we denote with $\kappa = L/\lambda$ is the condition number of $\pi$.
\end{assumption}
\begin{assumption} \label{cond:2}
$c_\nu=d/m$ and $\nu$ is such that $(d/m) \E [\bV \bV^\top] = \I_d$ for $\bV\sim \nu$.
\end{assumption}

The convexity and smoothness conditions in Assumption \ref{cond:1} are common in theoretical analysis of MCMC methods \citep{chewi2024log}. They allow for explicit quantitative bounds while including a large and non-trivial class of distributions. Assumption \ref{cond:2} is more specific to the problem at hand and it characterizes the sampling mechanism of $\bV$. Conditions similar to Assumption \ref{cond:2} can be found in \citet[][Sec. 8]{ascolani2024entropy} and in the the stochastic optimization literature \citep{kozak2021stochastic}. Distributions satisfying Assumption \ref{cond:2} are the uniform distribution on $\bV_m(\R^d)$ and that obtained by sampling uniformly and without replacement $m$ elements from the natural basis of $\R^d$. 

\begin{proposition} \label{prop:contr:sula}
Let $P^{SULA}$ be the transition kernel of Algorithm \ref{alg:sula}. Then, under Assumptions \ref{cond:1} and \ref{cond:2}, for every $x,y\in\R^d$ and $ \gamma\leq m/(Ld)$ it holds that
\begin{equation} \label{eq:contr:sula}
    W_2(\delta_x P^{SULA},\delta_y P^{SULA})
    \leq
    \left(
       1-  \gamma  \lambda 
    \right)^{1/2}
    \|x-y\|  \leq \exp \left( - \frac{1}{2} \gamma \lambda \right) \|x-y\|,
\end{equation}
where $\delta_x,\delta_y$ denote the Dirac measure on $x$  and $y$, respectively.
\end{proposition}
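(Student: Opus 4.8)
The plan is to couple the two one-step transitions synchronously and control the resulting displacement in $L^2$. I would run one step of Algorithm \ref{alg:sula} from $x$ and one from $y$ using the \emph{same} draw $\bV \sim \nu$ and the \emph{same} Gaussian increment $Z \sim N_d(0,\I_d)$, producing $X = x - \gamma c_\nu \bV\bV^\top\nabla U(x) + \sqrt{2\gamma}\,Z$ and $Y = y - \gamma c_\nu\bV\bV^\top\nabla U(y) + \sqrt{2\gamma}\,Z$. The pair $(X,Y)$ is a valid coupling of $\delta_x P^{SULA}$ and $\delta_y P^{SULA}$, so $W_2^2(\delta_x P^{SULA}, \delta_y P^{SULA}) \le \E\|X-Y\|^2$; since the noise cancels, writing $u := x-y$ and $g := \nabla U(x) - \nabla U(y)$ this equals $\E\| u - \gamma c_\nu\bV\bV^\top g\|^2$.

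The core step is to expand this square and evaluate the two moments in $\bV$ using Assumption \ref{cond:2}. For the cross term, $c_\nu\E[\bV\bV^\top] = \I_d$ gives $u^\top c_\nu\E[\bV\bV^\top] g = u^\top g$. For the quadratic term the key observation is that $\bV\bV^\top$ is an orthogonal projection — because $\bV^\top\bV = \I_m$ — hence idempotent, so $\|\bV\bV^\top g\|^2 = g^\top\bV\bV^\top g$ and therefore $\E\|\bV\bV^\top g\|^2 = g^\top\E[\bV\bV^\top] g = (m/d)\|g\|^2$, again by Assumption \ref{cond:2}. With $c_\nu = d/m$ this yields the clean identity
\[
\E\|X-Y\|^2 = \|u\|^2 - 2\gamma\,u^\top g + \gamma^2\frac{d}{m}\|g\|^2 .
\]

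It then remains to show this is at most $(1-\gamma\lambda)\|u\|^2$ whenever $\gamma \le m/(Ld)$. I would invoke the standard co-coercivity inequality implied by Assumption \ref{cond:1} for an $L$-smooth, $\lambda$-strongly convex potential,
\[
u^\top g \ \ge\ \frac{\lambda L}{\lambda+L}\|u\|^2 + \frac{1}{\lambda+L}\|g\|^2 ,
\]
and split $2\,u^\top g$ into two copies: the first dominates $\lambda\|u\|^2$ (using $2\lambda L/(\lambda+L) \ge \lambda$, which holds since $L \ge \lambda$), while the second absorbs the $\gamma^2(d/m)\|g\|^2$ term, since $\gamma \le m/(Ld) \le 2m/((\lambda+L)d)$ forces $2/(\lambda+L) \ge \gamma d/m$. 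Hence $\E\|X-Y\|^2 \le (1-\gamma\lambda)\|u\|^2$; taking square roots and using $1-t \le e^{-t}$ gives both inequalities in \eqref{eq:contr:sula}. One should also record that $1-\gamma\lambda \in [0,1)$, which is guaranteed by $\gamma \le m/(Ld) \le 1/L \le 1/\lambda$ (using $m\le d$ and $\lambda\le L$).

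The only genuinely delicate point is the step-size bookkeeping: the naive bound $\|\bV\bV^\top g\| \le \|g\|$ is too lossy and would only permit $\gamma = O(m^2/(Ld^2))$, so it is essential to use the exact identity $\E\|\bV\bV^\top g\|^2 = (m/d)\|g\|^2$ and to partition the $u^\top g$ term carefully against the $\|u\|^2$ and $\|g\|^2$ contributions. Everything else is a routine expansion.
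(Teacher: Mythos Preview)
Your proof is correct and follows essentially the same route as the paper: synchronous coupling in $(\bV,Z)$, averaging the squared displacement over $\bV$ using Assumption~\ref{cond:2}, and then applying the strong-convexity/smoothness contraction for the gradient step. The only cosmetic difference is packaging: the paper collects the $\bV$-expectation via an auxiliary identity (Lemma~\ref{lemma:norm:cond2}, which rewrites $\E\|a+\bV\bV^\top b\|^2$ as $\tfrac{d-m}{d}\|a\|^2+\tfrac{m}{d}\|a+b\|^2$) and then cites Nesterov's Theorem~2.1.15 for the bound $\|u-hg\|^2\le(1-h\lambda)\|u\|^2$ when $h\le 1/L$, whereas you expand directly using the idempotence of $\bV\bV^\top$ and write out the co-coercivity inequality by hand --- these are the same computation.
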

For any $\epsilon>0$, Proposition \ref{prop:contr:sula} implies that, taking $ \gamma= m/(Ld)$,  Algorithm \ref{alg:sula} requires $n_* = O(\kappa d/m \log(1/\epsilon))$ parallel rounds to produce a sample from a distribution $\epsilon$-close to its own stationary distribution $\pi_{\gamma}$. 
This rate matches the best one obtained in analogous optimization contexts \citep[][Corollary 1]{kozak2021stochastic}, meaning that in this setting there is no slow-down when moving from optimization to sampling in terms of speed of convergence. 
However, similarly to the case of standard ULA, the stationary distribution $\pi_\gamma$ of Algorithm \ref{alg:sula} can differ significantly from the target $\pi$ if the step-size $\gamma$ is not suitably scaled. 
 %\gzc{
%For example, when $\pi=N_d(0, \I_d)$ it can be shown that $\pi_\gamma=N_d(0, (1-\gamma d/(2 m))^{-1}\I_d)$, which coincides with the invariant measure of standard ULA with stepsize $\gamma d/m$ \citep[Example 2]{wibisono2018sampling}. 
%Since the stepsize of ULA needs to scale as $O(d^{-1})$ to theoretically ensure small bias in the invariant measure \citep{durmus2019high}, this suggests that $\gamma d/m=O(d^{-1})$, or equivalently  $\gamma= O(m/d^2)$, is needed to ensure small bias in $\pi_\gamma$.} 
%These heuristic considerations are confirmed 
In particular, the required conditions have been studied in a similar context in \citet{roy2022stochastic}. Their framework includes the case where $\pi$ is evaluated only up to stochastic error, but the results therein apply also to our context, where such error is zero. Their findings suggest that schemes analogous to Algorithm \ref{alg:sula} require $O(d^2/m)$ parallel rounds to sample from an accurate approximation to the target $\pi$, see Theorem 2.1 therein. In view of Proposition \ref{prop:contr:sula}, this result primarily stems from the need to take a small stepsize, specifically $\gamma=O(m/d^2)$, to ensure small bias in the stationary distribution $\pi_\gamma$. 
Since, under similar conditions, standard RWM requires only $O(\kappa d\log(1/\epsilon))$ sequential zeroth-order evaluations of $U$ for accurate sampling \citep[][Theorem 49]{andrieu2024explicit}, it follows that Algorithm \ref{alg:sula} represents, at least theoretically, a suboptimal use of parallel computing for sampling in this zeroth-order context.

\subsection{Naive approach to Metropolis-adjusted methods} \label{sec:naive:smala}

Section \ref{sec:naive:sula} highlights that the inefficiencies found in Algorithm \ref{alg:sula} are mainly attributable to the discrepancy from its stationary distribution and the target. As a result, a way to possibly avoid this issue is to consider instead a Metropolis-adjusted algorithm. In this section, we further investigate this possibility.

Algorithm \ref{alg:smala} illustrates how $\hat \nabla_{\bV} U(x)$ can be incorporated in the Metropolis-adjusted Langevin algorithm (MALA). At iteration $t$, a set of $m$ random directions $\bV$ is sampled. Conditioned on $\bV$, a step of MALA is performed with the exact gradients replaced by stochastic estimates obtained as in \eqref{eq:zo_stoch_opt}.

\begin{algorithm}[H]
\caption{Naive zeroth-order Metropolis-adjusted Langevin algorithm }\label{alg:smala}
\KwIn{$T \in \N$, $\sigma^2>0$, $\mu \in \mathcal{P}(\R^d)$ and $\nu$ distribution on $\V_m(\R^d)$}
\vspace{3pt}
$X_0 \sim \mu$ \\
\For{$t = 1,\dots,T$ 
}{
Given $X_t = x$, sample:\\
\hspace{25pt} $ \bV \sim \nu $, \\
\hspace{25pt} $y \sim Q_{\bV}(x,\cdot)$ where $ Q_{\bV}(x,\cdot) = N_d(x - (\sigma^2/2) \hat{\nabla}_{\bV}U(x), \sigma^2 \I_d )$.\\
Set $X_{t+1} = y $ with probability 
$$
\min\left(
1,
\frac{\pi(dy) Q_{\bV}(y,dx)}{\pi(dx) Q_{\bV}(x,dy) }
\right),
$$
$X_{t+1} = X_t $ otherwise.
}
\vspace{3pt}
\KwOut{Markov chain trajectory $(X_0,X_1,\dots, X_T)$}
\end{algorithm}

Equivalently, the transition kernel associated to Algorithm \ref{alg:smala} can be written as
\begin{equation} \label{eq:kernel:smala}
    P^{\scriptscriptstyle SMALA}_{\sigma}(x,A) = \int_{\R^d \times \V(\R^d)}
    \mathbbm{1}_{A}(y)
    Q_{\bV}(x,dy)
    \min\left(
               1,
               \frac{\pi(dy) Q_{\bV}(y,dx)
               }
               {
               \pi(dx) Q_{\bV}(x,dy)
               }
\right) \nu(d\bV),
\end{equation}
which can be easily proven to be $\pi$-invariant as shown in Proposition \ref{prop:pi:inv:nsmala} below.
\begin{proposition} \label{prop:pi:inv:nsmala}
    $P^{\scriptscriptstyle SMALA}_{\sigma}$ in \eqref{eq:kernel:smala} is $\pi$-invariant.
\end{proposition}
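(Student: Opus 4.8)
The plan is to verify $\pi$-invariance by establishing that the kernel $P^{\scriptscriptstyle SMALA}_{\sigma}$ satisfies detailed balance with respect to $\pi$, which immediately implies $\pi$-invariance. The key observation is that the kernel in \eqref{eq:kernel:smala} is an average over $\nu$ of Metropolis--Hastings kernels $P_{\bV}$, each built from the proposal $Q_{\bV}$ with the standard Metropolis--Hastings acceptance ratio. Since a mixture of $\pi$-reversible kernels is itself $\pi$-reversible (the defining symmetry relation is preserved under integration against $\nu(d\bV)$), it suffices to show that for each fixed $\bV$ the kernel $P_{\bV}$ is $\pi$-reversible.

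First I would write, for fixed $\bV$, the off-diagonal part of the Metropolis--Hastings kernel as
\begin{equation*}
\pi(dx)\, Q_{\bV}(x,dy)\min\left(1,\frac{\pi(dy)\,Q_{\bV}(y,dx)}{\pi(dx)\,Q_{\bV}(x,dy)}\right)
= \min\bigl(\pi(dx)\,Q_{\bV}(x,dy),\ \pi(dy)\,Q_{\bV}(y,dx)\bigr),
\end{equation*}
which is manifestly symmetric under exchanging $x$ and $y$. This is the classical argument that the Metropolis--Hastings acceptance probability enforces detailed balance; since $Q_{\bV}(x,\cdot)$ is the Gaussian density $N_d(x-(\sigma^2/2)\hat\nabla_{\bV}U(x),\sigma^2\I_d)$, all measures here are dominated by the Lebesgue measure and the ratio $\pi(dy)Q_{\bV}(y,dx)/(\pi(dx)Q_{\bV}(x,dy))$ is well-defined and finite for Lebesgue-a.e.\ pair $(x,y)$, so there is no measure-theoretic subtlety. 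The rejection part contributes a term supported on the diagonal $\{x=y\}$ which is trivially symmetric. Integrating the resulting detailed-balance identity against $\nu(d\bV)$ then gives detailed balance for $P^{\scriptscriptstyle SMALA}_{\sigma}$, and integrating once more over $x\in A$ yields $\int_A \pi(dx) = \int \pi(dy) P^{\scriptscriptstyle SMALA}_{\sigma}(y,A)$ after using $P^{\scriptscriptstyle SMALA}_{\sigma}(x,\R^d)=1$.

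Concretely, the steps in order are: (i) fix $\bV$ and recall the Metropolis--Hastings construction, identifying $P_{\bV}$ as the inner integrand of \eqref{eq:kernel:smala}; (ii) prove $\pi(dx)P_{\bV}(x,dy)=\pi(dy)P_{\bV}(y,dx)$ by the $\min$-symmetry identity above plus the diagonal term; (iii) integrate over $\bV\sim\nu$ using Fubini to obtain $\pi(dx)P^{\scriptscriptstyle SMALA}_{\sigma}(x,dy)=\pi(dy)P^{\scriptscriptstyle SMALA}_{\sigma}(y,dx)$; (iv) conclude $\pi$-invariance by integrating the $x$-marginal over a measurable set $A$. The main (and only mild) obstacle is purely bookkeeping: making sure the Radon--Nikodym ratios are handled cleanly — in particular that $Q_{\bV}$ has a strictly positive density so the acceptance ratio is well-defined, and that Fubini applies to the nonnegative integrand so the $\bV$-average commutes with the $x,y$ integration. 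None of this requires Assumptions \ref{cond:1} or \ref{cond:2}; $\pi$-invariance holds for any distribution $\nu$ on $\V_m(\R^d)$ and any choice of $\hat\nabla_{\bV}U$, since the Metropolis correction absorbs whatever bias the zeroth-order gradient estimator introduces.
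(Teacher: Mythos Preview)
Your proposal is correct and follows essentially the same approach as the paper: both arguments rewrite the off-diagonal part as the symmetric quantity $\min\bigl(\pi(dx)Q_{\bV}(x,dy),\,\pi(dy)Q_{\bV}(y,dx)\bigr)$ and then use Fubini to average over $\nu(d\bV)$, obtaining detailed balance for $P^{\scriptscriptstyle SMALA}_{\sigma}$. The paper's version is simply more compressed, working directly with disjoint test sets $A,B$ rather than separating the argument into the four steps you outline.
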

Algorithm \ref{alg:smala} represents a fairly straightforward way of including gradient estimates from \eqref{eq:zo_stoch_opt} into Metropolis-adjusted gradient-based MCMC. In this section we formally demonstrate that, as for the unadjusted case discussed in Section \ref{sec:naive:sula}, such an approach is inefficient, unless $m$ is close to $d$. In the following, $\bV^{\perp}$ denotes an arbitrary $d \times (d-m)$ matrix with columns forming an orthonormal basis of the orthogonal complement of $\mathrm{span}(\bV)$.
\begin{lemma} \label{lemma:prop:smala}
Let $\bV\in\V_m(\R^d)$, $x\in\R^d$ and 
$y \sim Q_{\bV}(x,\cdot)$, with $Q_{\bV}$ as in Algorithm \ref{alg:smala}. Then $s = \bV^{\top} y$ and $s^{\perp} = (\bV^{\perp})^{\top} y$ are independent random variables, for fixed $x$ and $\bV$, and
\begin{align}
    s &\sim N_{m}\big( \bV^\top x - (\sigma^2/2) \bV^{\top} \hat{\nabla}_{\bV} U(x) , \sigma^2 \bV^{\top} \bV), \label{eq:prop:s:smala} \\
    s^{\perp} &\sim  N_{d-m}\big( (\bV^{\perp})^{\top} x , \sigma^2 \I_{d-m}). \label{eq:prop:sp:smala}
\end{align}
\end{lemma}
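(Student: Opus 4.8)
The plan is to observe that $y \sim Q_{\bV}(x,\cdot) = N_d\big(x - (\sigma^2/2)\hat{\nabla}_{\bV}U(x), \sigma^2 \I_d\big)$, i.e.\ $y = \mu_x + \sigma Z$ where $\mu_x = x - (\sigma^2/2)\hat{\nabla}_{\bV}U(x)$ and $Z \sim N_d(0,\I_d)$, and then apply the linear maps $\bV^\top$ and $(\bV^\perp)^\top$ and use the standard fact that affine images of a Gaussian are Gaussian with the transformed mean and covariance. Concretely, $s = \bV^\top y = \bV^\top \mu_x + \sigma \bV^\top Z$ and $s^\perp = (\bV^\perp)^\top y = (\bV^\perp)^\top \mu_x + \sigma (\bV^\perp)^\top Z$, so the pair $(s, s^\perp)$ is jointly Gaussian with mean $(\bV^\top \mu_x, (\bV^\perp)^\top \mu_x)$.

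The key computational steps are: (i) evaluate $\bV^\top \mu_x = \bV^\top x - (\sigma^2/2)\bV^\top \hat{\nabla}_{\bV}U(x)$, which is exactly the mean in \eqref{eq:prop:s:smala}, and $(\bV^\perp)^\top \mu_x$; (ii) simplify the latter using $\hat{\nabla}_{\bV}U(x) = c_\nu \bV\bV^\top \nabla U(x)$, noting that $(\bV^\perp)^\top \bV = 0$ by orthogonality of the column spaces of $\bV^\perp$ and $\bV$, so $(\bV^\perp)^\top \mu_x = (\bV^\perp)^\top x$, giving the mean in \eqref{eq:prop:sp:smala}; (iii) compute the joint covariance of $(s,s^\perp)$ as $\sigma^2 M M^\top$ where $M = \begin{pmatrix} \bV^\top \\ (\bV^\perp)^\top \end{pmatrix}$, which has blocks $\sigma^2 \bV^\top \bV$, $\sigma^2 (\bV^\perp)^\top \bV^\perp$, and cross-block $\sigma^2 \bV^\top \bV^\perp$; (iv) use $\bV \in \V_m(\R^d)$ and the definition of $\bV^\perp$ to get $\bV^\top \bV^\perp = 0$ and $(\bV^\perp)^\top \bV^\perp = \I_{d-m}$. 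The vanishing cross-covariance block, together with joint Gaussianity, yields independence of $s$ and $s^\perp$, and the diagonal blocks give the stated covariances $\sigma^2 \bV^\top \bV$ (which equals $\I_m$ since $\bV \in \V_m(\R^d)$, though the statement keeps the $\bV^\top\bV$ form) and $\sigma^2 \I_{d-m}$.

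I do not anticipate a genuine obstacle here: the result is essentially a bookkeeping exercise combining the Gaussian affine-transformation rule with the two orthogonality facts $(\bV^\perp)^\top \bV = 0$ and $(\bV^\perp)^\top \bV^\perp = \I_{d-m}$. The only point requiring a word of care is that independence of two jointly Gaussian blocks follows from an uncorrelated cross-covariance block \emph{only because} the full vector $(s^\top, (s^\perp)^\top)^\top$ is Gaussian — which it is, being a fixed linear image of $y$ — so I would state this explicitly rather than leave it implicit. Everything else is a direct substitution of $\hat{\nabla}_{\bV}U(x) = c_\nu \bV\bV^\top\nabla U(x)$ and the Stiefel-manifold constraint.
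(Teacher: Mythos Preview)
Your proposal is correct and follows essentially the same approach as the paper's own proof: both argue that $y$ is Gaussian, apply the linear maps $\bV^\top$ and $(\bV^\perp)^\top$ to compute the means and covariances, use $(\bV^\perp)^\top \bV = 0$ to simplify the mean of $s^\perp$ and to show the cross-covariance vanishes, and conclude independence from joint Gaussianity. Your explicit remark that independence requires joint Gaussianity (not just zero cross-covariance) is a good point that the paper leaves implicit.
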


The decomposition of $Q_{\bV}(x,\cdot)$ in Lemma \ref{lemma:prop:smala} shows that the proposal distribution of Algorithm \ref{alg:smala} behaves differently on the subspace spanned by $\bV$ and on its orthogonal complement, a distinction that has potentially important consequences for its global behavior. As $m$ increases, Algorithm \ref{alg:smala} is expected to become equivalent to a preconditioned version of MALA \citep{roberts2002langevin,girolami2011riemann}. On the contrary, if $d-m$ is not small the random walk component in \eqref{eq:prop:sp:smala} is likely to drive the performance of the algorithm. As a result, it is crucial to understand how the interplay between  \eqref{eq:prop:s:smala} and \eqref{eq:prop:sp:smala} affects the overall efficiency of the algorithm.

Theorem \ref{theo:gap:smala} below further investigates this issue by studying how the right spectral gap of $P_{\sigma}^{MALA}$ is influenced by $m$. Recall that for a $\pi$-reversible transition kernel $P$ the right spectral gap is defined as 
\begin{equation} \label{eq:def:gap}
    Gap_{P} = \inf_{f \in L^2_{\pi}} \frac{ \int_{\R^{2 d}}
    \lbrace 
    f(x) -f(y) 
    \rbrace^2
    \pi(dx)
    P(x, dy)
    }
    { 2 \V_{\pi}(f)},
\end{equation}
where $\V_{\pi}(f) = \int_{\R^d} f(x)^2 \pi(dx) - \lbrace \int_{\R^d} f(x) \pi(dx) \rbrace^2$ and $L^2_{\pi}$ is the class of all functions $f \, : \, \R^d \to \R$ such that $\V_{\pi}(f) < \infty$. The right spectral gap quantifies the difference between the two largest positive eigenvalues associated with $P$ and is equivalent to the spectral gap for positive-definite kernels. While for non-positive kernels the right spectral gap and the spectral gap are generally different, any transition kernel $P$ can always be linked to a positive-definite kernel by considering its lazy version $P_{\textsc{lazy}} = (P + I)/2$. In MCMC theory, the spectral gap is often considered as a standard measure of efficiency for $\pi$-reversible kernels \citep[Sec.12]{levin2017markov}, with higher spectral gaps being associated with better performance.
\begin{theorem} \label{theo:gap:smala}
     Under Assumptions \ref{cond:1}-\ref{cond:2}, if %$m< c_0 d$ for some $c_0 \in (0,1)$ and 
     $\nu$ is such that any column of $\bV$ is an element of the canonical basis $\mathbf{E}$, then
    \begin{equation} \label{eq:theo:gap:smala:result2}
    \sup_{\sigma>0} 
    Gap_{P^{\scriptscriptstyle SMALA}_{\sigma}} = O\Big( \frac{\log(d-m)^2 }{d-m} \Big) = O\Big( \frac{\log(d)^2 }{d} \Big),
    \end{equation}
    as $d\to\infty$ and $m< c_0 d$ for some $c_0 \in (0,1)$.
\end{theorem}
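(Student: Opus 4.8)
The plan is to bound $Gap_{P^{\scriptscriptstyle SMALA}_{\sigma}}$ from above by exhibiting a suitable test function $f \in L^2_\pi$ and controlling the Dirichlet form $\mathcal{E}(f) = \tfrac{1}{2}\int \{f(x)-f(y)\}^2 \pi(dx) P^{\scriptscriptstyle SMALA}_{\sigma}(x,dy)$ relative to $\V_\pi(f)$, uniformly in $\sigma$. The key structural input is Lemma \ref{lemma:prop:smala}: for the canonical-basis version of $\nu$, conditional on $\bV$ the proposal moves the $m$ selected coordinates according to a MALA-type Gaussian but moves the remaining $d-m$ coordinates as an \emph{unadjusted} Gaussian random walk with variance $\sigma^2$. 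Since each coordinate is left out of $\bV$ with probability $1-m/d \geq 1-c_0$, any single fixed coordinate $e^{(j)}$ behaves, a constant fraction of the time, exactly like a coordinate of a random-walk Metropolis chain with step-size $\sigma$. This reduces the problem to a one-coordinate random-walk-type bound, for which the $O(\log(d)^2/d)$-type scaling of the spectral gap is already known for log-concave targets.

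Concretely, I would proceed as follows. First, fix a coordinate index $j$ and take $f(x) = g(x_j)$ for a scalar test function $g$ to be chosen; by the product-type structure it suffices to analyze the marginal dynamics of $x_j$. Conditioning on whether $e^{(j)}$ is among the columns of $\bV$, write
\begin{equation*}
\mathcal{E}(f) = \Pr(e^{(j)}\in\bV)\,\mathcal{E}_{\mathrm{in}}(f) + \Pr(e^{(j)}\notin\bV)\,\mathcal{E}_{\mathrm{out}}(f),
\end{equation*}
and bound $\mathcal{E}_{\mathrm{in}}(f)$ crudely (e.g.\ by the full one-step jump, or by a comparison with the Gaussian proposal). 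For the "out" term, the marginal proposal on coordinate $j$ is $N(x_j,\sigma^2)$ combined with the Metropolis acceptance inherited from the full chain; this is dominated by, or comparable to, an RWM step on a (roughly) one-dimensional log-concave marginal. Next I would import the known upper bound on the RWM spectral gap for $d$-dimensional log-concave targets — of order $\log(d)^2/d$ when optimally tuned over $\sigma$, as in \citet{pozza2025fundamental} or the scaling-limit / conductance arguments for RWM — to conclude that $\mathcal{E}_{\mathrm{out}}(f)/\V_\pi(f) = O(\log(d)^2/d)$ uniformly in $\sigma$. Finally, taking the supremum over $\sigma$ and substituting $d-m \asymp d$ (valid since $m < c_0 d$) gives the stated bound; rewriting $\log(d-m)^2/(d-m)$ as $O(\log(d)^2/d)$ is immediate under $m<c_0d$.

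The main obstacle I anticipate is making the reduction to the "random-walk coordinate" rigorous while retaining uniformity in $\sigma$: one must handle the regime of large $\sigma$ (where acceptance probabilities degenerate and the chain barely moves, so the gap is small for a different reason) separately from the regime of small or moderate $\sigma$ (where the RWM-type diffusive scaling kicks in and the $\log(d)^2/d$ bound is the binding one). A clean way to organize this is to split into cases $\sigma^2 \leq \sigma_0^2$ and $\sigma^2 > \sigma_0^2$ for an appropriate threshold $\sigma_0$ depending on $d$: for large $\sigma$, show directly that the acceptance probability is exponentially small on a set of $\pi$-mass bounded below (because moving $\Theta(\sigma)$ in a fixed coordinate of a log-concave target costs too much potential), so the expected squared jump — hence $\mathcal{E}(f)$ for a Lipschitz $g$ — is tiny; for $\sigma$ below the threshold, invoke the RWM comparison. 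A second, more technical point is controlling the contribution of the "in" event without losing a factor of $m$: here it is enough that $\V_\pi(g(x_j))$ is $\Theta(1)$ while the per-step displacement of $x_j$ is at most $O(\sigma)$ in $L^2$, so $\mathcal{E}_{\mathrm{in}}(f)$ is also $O(\sigma^2)$ and the same case-split over $\sigma$ absorbs it. Assembling these pieces yields \eqref{eq:theo:gap:smala:result2}.
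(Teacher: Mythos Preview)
Your overall intuition --- that the $(d-m)$ unselected coordinates behave as a random walk and cap the gap at the RWM rate --- is correct, and the case split on $\sigma$ is the right structure. However, the plan has one genuine gap and one incorrect claim, and it differs structurally from the paper's argument in a way worth noting.

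The genuine gap is in the large-$\sigma$ regime. You want ``expected squared jump is tiny'' to follow from ``acceptance is exponentially small on a set of $\pi$-mass bounded below''. For an unbounded test function this fails: $(f(y)-f(x))^2$ scales like $\sigma^2$ under the proposal, and on the complementary set $A^c$ you have no control on $\alpha$, so the contribution there is of order $\sigma^2\,\pi(A^c)$, which is not small. Even with a bounded Lipschitz $g$ (so that $\mathcal{E}(f)\leq C\,\E_\pi[\text{acceptance}]$), you still need the \emph{full} SMALA acceptance probability --- which involves the MALA-type ratio on the $\bV$-block as well as the random-walk ratio on the $\bV^\perp$-block --- to be exponentially small in $\sigma^2(d-m)$. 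Your phrase ``dominated by, or comparable to, an RWM step'' hides exactly this reduction, and it is not automatic. The paper handles the large-$\sigma$ side differently: it abandons the Dirichlet form altogether and uses Cheeger's inequality $Gap\leq 2\Phi$, then applies a Jensen-type decomposition (Lemma~\ref{lemma:deco:kernels}) showing that the SMALA conductance is bounded by the average acceptance of a pure $(d-m)$-dimensional Gaussian RWM on the marginal $\pi_{\bV^\perp}$. That marginal is again $\lambda$-strongly log-concave and $L$-smooth (Lemma~\ref{lemma:marginal:convex:smooth}), and an explicit concentration computation (Lemma~\ref{lemma:ave:acc:pr:rw}) gives the exponential bound $f_1(\sigma)\lesssim \exp\big(-c\,\sigma^2(d-m)\big)$.

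The incorrect claim is that the ``in'' displacement of $x_j$ is $O(\sigma)$ in $L^2$: the drift equals $-(\sigma^2/2)(d/m)\,\partial_j U(x)$, so the squared displacement carries an extra term of order $\sigma^4(d/m)^2\,\E_\pi\big[(\partial_j U)^2\big]$. This is correctable --- after averaging over $\nu$ it still yields an $f_2(\sigma)=O(\sigma^2+\sigma^4 d)$-type bound --- and in fact the paper's small-$\sigma$ side is precisely a Dirichlet-form calculation of this kind, but with the linear test function $a=d^{-1/2}(1,\dots,1)$ rather than a single coordinate, obtaining $f_2(\sigma)\lesssim \sigma^2+\sigma^4 d$. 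The proof then combines $Gap\leq\min\big(f_1(\sigma),f_2(\sigma)\big)$ and evaluates at the crossover $\sigma^2\asymp\log(d-m)/(d-m)$ to get \eqref{eq:theo:gap:smala:result2}.
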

The proof of the theorem is based on the decomposition for the proposal distribution of $P^{\scriptscriptstyle SMALA}_{\sigma}$ given in Lemma \ref{lemma:prop:smala} combined with, non-asymptotic, arguments developed for RWM in \citet{andrieu2024explicit} and \citet{pozza2025fundamental}. 
Theorem \ref{theo:gap:smala} implies that, when $P^{\scriptscriptstyle SMALA}_{\sigma}$ is optimally tuned, the right spectral gap of Algorithm \ref{alg:smala} is at most of order $1/(d-m)$ up to a poly-logarithmic term. Under similar conditions, \citet{andrieu2024explicit} provide a lower bound of order $1/d$ for Gaussian random-walk Metropolis on $d$-dimensional targets. As a result, when $m$ is small or moderate compared to $d$, Algorithm \ref{alg:smala} behaves essentially like a RWM algorithm, thus providing an inefficient use of parallel computation.

Theorem \ref{theo:gap:smala} is illustrated in Figure~\ref{fig:gain_naive}. The figure reports, for a 200-dimensional Bayesian logistic regression example given in Section \ref{sec:5:1} and for different values of $m$, the gain over RWM obtained by Algorithm~\ref{alg:smala} (NAIVE-MALA) and compares it with that of the random-slice MALA (RS-MALA) algorithm, which is defined in Section~\ref{sec:4} below as a way to overcome the limitations of the methods analyzed in Section~\ref{sec:3}. For NAIVE-MALA, even when $m \simeq 100$, the improvement over RWM is small ($\leq 5$), while for the same values of $m$ RS-MALA is already 30–35 times more efficient than RWM. Figure~\ref{fig:gain_naive} also shows that the $O(1/(d-m))$ upper bound in Theorem~\ref{theo:gap:smala} describes well the behavior observed in practice.
See \eqref{eq:esjd} below for a formal definition of the measure of improvement displayed in Figure~\ref{fig:gain_naive}.

\begin{figure}[htbp]
  \centering
    \includegraphics[width=0.6\linewidth]{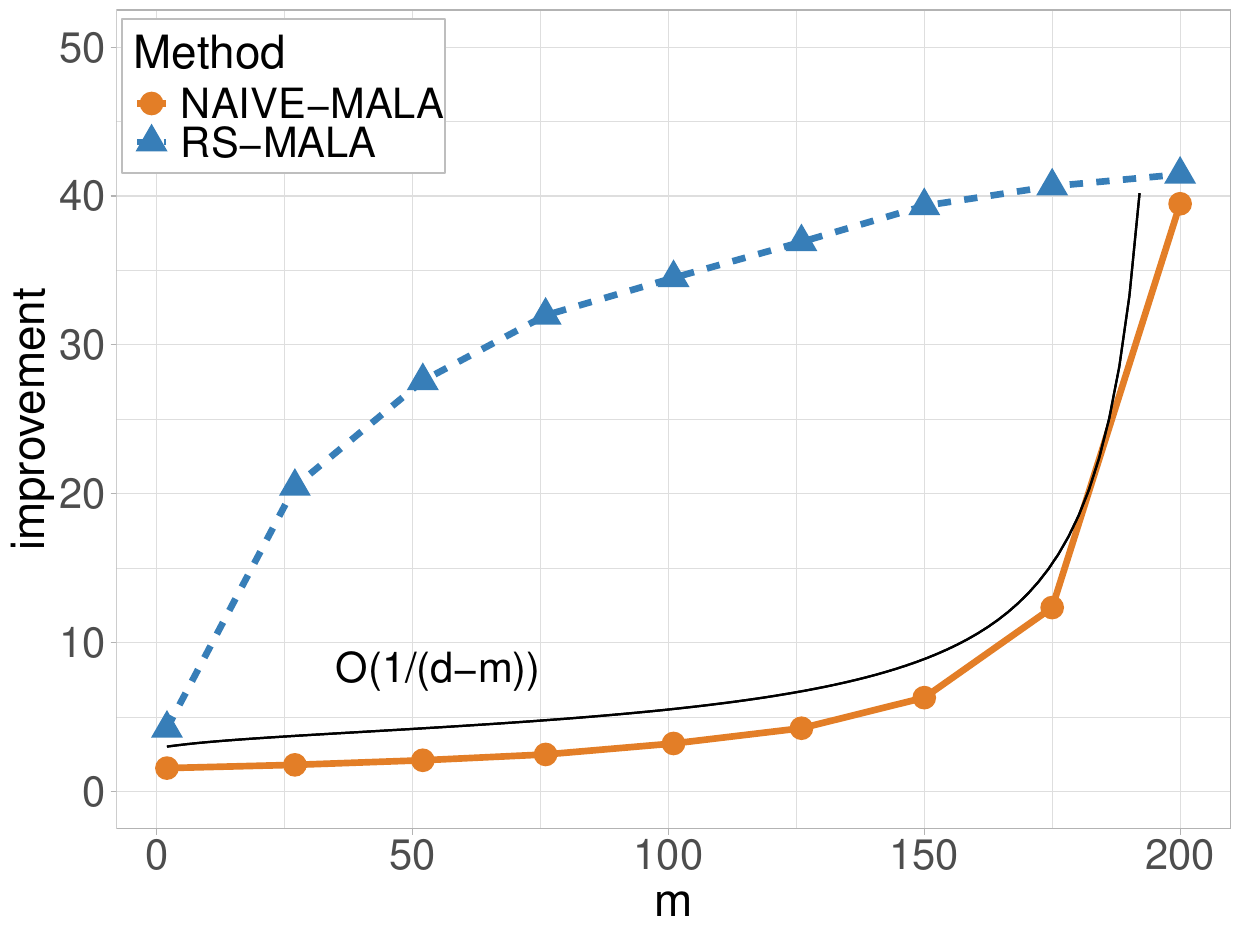}
\caption{Relative improvement over RWM, measured by the estimated average ESJD (as defined in \eqref{eq:esjd}), for NAIVE-MALA and for RS-MALA in the 200-dimensional logistic regression example of Section~\ref{sec:5:1}.}
\label{fig:gain_naive}
\end{figure}

\section{A random slice approach} \label{sec:4}
\subsection{The general methodology} \label{sec:4:1}
%The main conclusion of Section~\ref{sec:naive:smala} is that a direct translation of zeroth-order stochastic optimization techniques into the sampling framework can yield algorithms with poor performance with respect to the number of parallel workers $m$. Compared to optimization schemes, sampling methods tend to be less robust with respect to the noise which is present in the estimator of the gradient.
%Moreover, in Metropolis-adjusted schemes, this limitation arises because $\hat{\nabla}_{\bV} U$ lies entirely within the $m$-dimensional subspace $\mathrm{span}(\bV)$, thus resulting in the proposal distribution containing no information about the behavior of $U$ on the orthogonal complement $\mathrm{span}(\bV^{\perp})$. 

The main message of Section~\ref{sec:3} is that, while successful in optimization, direct plug-in of zeroth-order stochastic gradient estimators into MCMC may lead to algorithms with poor performance as the number of parallel workers increases.
The specific reasons can vary across algorithms, but they all stem from the lack of robustness of the methods presented in Section~\ref{sec:3} to the stochastic error in $\hat{\nabla}_{\bV} U(x)$. As shown in Section~\ref{sec:naive:smala}, in Metropolis-adjusted schemes this issue arises because $\hat{\nabla}_{\bV} U$ lies entirely in the $m$-dimensional subspace $\mathrm{span}(\bV)$. As a result, the proposal carries no information about the behavior of $U$ on the orthogonal complement $\mathrm{span}(\bV^{\perp})$ and, on this subspace, it behaves like a simple RWM.

Motivated by the insights of Section~\ref{sec:3}, in this section we propose to restrict the proposal distribution to act only on $\mathrm{span}(\bV)$, thereby removing the random-walk component entirely. In practice, this is equivalent to correlating the noise of the Metropolis-Hastings proposal with that of the stochastic gradient estimator, so that randomness is injected only in directions where the gradient is reliably estimated. 
For example, to correct the zeroth-order MALA described in Section \ref{sec:naive:smala}, this reduces to using the following proposal distribution:
\begin{equation*}
    y = x - (\sigma^2/2) \bV \bV^{\top}\nabla U(x) + \sigma \bV z, \qquad \text{with} \qquad z\sim N_{m}(0, \I_m),
\end{equation*}
where the $m$-dimensional noise $z$ is projected into $\mathrm{span}(\bV^{\perp})$ through the term $\sigma \bV z$.

More generally, this random slice approach can be combined with any gradient-based Metropolis-Hastings proposal distributions other than MALA, as done below for HMC, or any other high-accuracy gradient-based sampler.
The general methodology is described in Algorithm \ref{alg:mdim:rps}, where we use the following notation. 
Given $x \sim \pi$, let $s$ and $s^{\perp}$ be the random variables defined as $s^{\perp} = (\bV^{\perp})^{\top} x $ and $s = \bV^{\top} x$, respectively. Define the conditional distribution of $s$ given $s^{\perp}$ and $\bV$ as $\smash{\pi_{\bV,s^{\perp}}(s) \propto \exp\lbrace - U_{\bV, s^{\perp}}(s)\rbrace}$  with $ U_{\bV, s^{\perp}}(s) = U(\bV^{\perp} s^{\perp} + \bV s)$ for $s \in \R^m$. Moreover, let $\eta_{\bV,x}$ denote the density of a generic approximation of $\pi_{\bV,s^{\perp}}$, where the subscript $x$ in $\eta_{\bV,x}$ is adopted to highlight the possible dependence of the proposal on the current state of the chain.
\begin{algorithm}
\caption{$m$-dimensional random slice sampler}\label{alg:mdim:rps}
\KwIn{$T\in \N$, $\mu$, $\nu$, $\lbrace \eta_{\bV,x} \in \mathcal{P}(\R^m)\rbrace_{\bV \in \V_m(\R^d), \, x \in \R^{d}} $  }
\vspace{2pt}
$X_0 \sim \mu$\\
\For{$t = 1,\dots,T$ 
}{
Given $X_{t-1} = x$, sample $ \bV \sim \nu $ and set $s^{\perp} = (\bV^{\perp})^{\top} x $\\
\vspace{2pt}
Sample $s' \sim \eta_{\bV,x} $\\ 
\vspace{2pt}
Set $X_{t} = \bV^{\perp}s^{\perp} + \bV s' $ \
}
\vspace{3pt}
\KwOut{Markov chain trajectory $(X_0,X_1,\dots, X_T)$.}
\end{algorithm}

Given the current state $X_t = x$, the following two steps are performed: (i) a random set of $m$ directions $\bV$ is sampled from $\nu$ and (ii) only the projection of $x$ onto $\mathrm{span}(\bV)$ is updated according to $\smash{\eta_{\bV,x}}$.
Equivalently, the transition kernel of Algorithm \ref{alg:mdim:rps} can be expressed as 
\begin{align}\label{eq:kernel:bbhar}
    P(x,A) \, = 
    \,\int_{V_m(\R^d)}\int_{\R^m}
    \mathbbm{1}_A( \bV^{\perp} (\bV^{\perp})^{\top} x +\bV s' ) \eta_{\bV, x}(d s') \nu(d \bV),
\end{align}
for $x\in\R^d$ and $A\subseteq \R^d\backslash \{x\}$.

Algorithm~\ref{alg:mdim:rps} requires specifying the form of $\smash{\eta_{\bV,x}}$. In practice, we propose to take $\smash{\eta_{\bV,x}}$ as an update from a gradient-based sampler targeting $\smash{\pi_{\bV, s^{\perp}}}$. With this choice, Algorithm~\ref{alg:mdim:rps} can be interpreted as a gradient-based conditional update within an $m$-dimensional hit-and-run sampler \citep{chen1996general}. Crucially, since $\smash{\pi_{\bV, s^{\perp}}}$ is $m$-dimensional, its gradient can be numerically estimated using Algorithm \ref{alg:numdev} with a single parallel round of $m$ zeroth-order evaluations per iteration. Below, we show that the Markov kernel defined in~\eqref{eq:kernel:bbhar} exhibits an improved dependence on $m$ compared with the methods considered in the previous sections.

\subsection{Proposed practical methodology: Random-slice zeroth-order HMC} \label{sec:4:2}

The performance of any algorithm expressible as in \eqref{eq:kernel:bbhar} depends on the quality of the conditional proposal distribution $\eta_{\bV,x}$. In this section, we discuss a solution that we found to be particularly effective in practice, namely combining Algorithm \ref{alg:mdim:rps} with a zeroth-order version of HMC \citep{neal2011mcmc} on the $m$-dimensional slice.
Specifically, given $X_t = x$, each iteration of Algorithm \ref{alg:zohmc} consists of the following steps: (i) a new set of directions $\bV$ is sampled from a proposal distribution $\nu$; (ii) an HMC step with target distribution $\smash{\pi_{\bV, s^{\perp}}}$ is performed with the gradient of $\smash{U_{\bV, s^{\perp}}}$, computed via finite differences as in Algorithm~\ref{alg:numdev}.
\begin{algorithm}
\label{alg:zohmc}
\renewcommand{\baselinestretch}{1.4}\selectfont
\caption{Random-slice zeroth-order Hamiltonian Monte Carlo }\label{alg:rshmc}
\KwIn{$X_0 \in \R^d$, $\epsilon,\gamma > 0$, $m,L,T \in \N$, $U$ target's potential and $\nu$ distribution on $\V_m(\R^d)$.} 
\For{$t = 1,\dots,T$ 
}{
Sample: $\bV \sim \nu$ and $k \sim N_m(0, \I_m)$ \\ 
Given $X_{t-1} = x$ set: \\
\hspace{17pt}  $k'_0 = k - \frac{\gamma}{2} \sum_{i=1}^m \Big[ \frac{ U(x + \epsilon v^{(i)}) - U(x)}{\epsilon}\Big] v^{(i)}$  \\ 
\hspace{17pt} $s_0 = 0 \in \R^m$ \\
\vspace{2pt}
\For{$l = 1,\dots,L$ 
}{
  $s_{l} = s_{l-1} + \gamma k'_{l-1}$ \\
 \If{$l < L$}
    {
    $k'_{l} = k_{l-1} - \gamma \sum_{i=1}^m \Big[ \frac{ U(x + \bV s_l + \epsilon v^{(i)}) - U(x + \bV s_l)}{\epsilon}\Big] v^{(i)}$
    }
}
\vspace{5pt}
$k'_{L} = - k_{L-1} + \frac{\gamma}{2} \sum_{i=1}^m \Big[ \frac{ U(x + \bV s_L + \epsilon v^{(i)}) - U(x + \bV s_L)}{\epsilon}\Big] v^{(i)}$\\
Set $X_{t} = x + \bV s_L $ with probability 
$$
\min\left(
1,
\exp\Big( U(x) - U(x + \bV s_L) + 0.5( \|k\|^2 - \|k'_{L}\|^2) \Big)\right),
$$
or $X_{t} = X_{t-1} $ otherwise.
}
\vspace{5pt}
\KwOut{Markov chain trajectory $(X_0,X_1,\dots, X_T)$.}
\end{algorithm}

Some aspects of Algorithm~\ref{alg:rshmc} deserve further discussion. The support of $\nu$ is $\V_m(\R^d)$. Therefore, in principle, many different choices are possible. In our opinion, the most practically appealing are the uniform distribution on $\V_m(\R^d)$ (denoted with $\nu = \nu_{\mathrm{unif}}$) and $\nu = \nu_{\mathrm{E}}$, which selects $m$ directions at random from the canonical basis of $\R^d$ (see also Section \ref{sec:2}). The former may offer theoretical benefits, such as invariance under rotational reparameterizations, but the latter allows for significantly cheaper sampling. For this reason, $\nu_{\mathrm{E}}$ is our preferred choice in practice. When $\nu = \nu_{\mathrm{E}}$, Algorithm~\ref{alg:rshmc} may be interpreted as a zeroth-order variant of an HMC-within-Gibbs sampler.
Another important point is that each iteration of Algorithm~\ref{alg:rshmc} does not perform an exact HMC update for $\smash{\pi_{\bV,s^{\perp}}}$, as the gradient of $\smash{U_{\bV, s^{\perp} }}$ is only numerically approximated. Lemma~\ref{prop:rshmc} below shows that the algorithm remains $\pi$-invariant.
\begin{proposition} \label{prop:rshmc}
The Markov transition kernel $P^{\textsc{rs-hmc}}_{m,L}$ of Algorithm~\ref{alg:rshmc} is $\pi$-invariant.
\end{proposition}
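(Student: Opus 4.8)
The plan is to verify $\pi$-invariance by exploiting the hit-and-run structure of Algorithm~\ref{alg:mdim:rps}, of which Algorithm~\ref{alg:rshmc} is a special case with $\eta_{\bV,x}$ given by one step of (a numerically approximated) HMC targeting $\pi_{\bV,s^{\perp}}$. Concretely, I would condition on the drawn directions $\bV\sim\nu$ and the orthogonal component $s^{\perp}=(\bV^{\perp})^{\top}x$, and argue that the update of the $m$-dimensional slice coordinate $s=\bV^{\top}x$ leaves $\pi_{\bV,s^{\perp}}$ invariant. Since, by the change of variables $x\mapsto(\bV s + \bV^{\perp}s^{\perp})$, the target $\pi$ factorizes as $\pi(dx)=\pi_{\bV,s^{\perp}}(ds)\,\rho_{\bV}(ds^{\perp})$ for the appropriate marginal $\rho_{\bV}$ of $s^{\perp}$ (and this decomposition holds for every fixed $\bV\in\V_m(\R^d)$), invariance of the slice update for $\pi_{\bV,s^{\perp}}$ at every $(\bV,s^{\perp})$ implies $\pi$-invariance of the full kernel~\eqref{eq:kernel:bbhar} after integrating over $\nu(d\bV)$ and $\rho_{\bV}(ds^{\perp})$. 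Note that $s^{\perp}$ is deliberately held fixed by the algorithm, which is exactly what makes this conditioning argument go through.

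The core step is therefore to show that the map $s'\mapsto$ (HMC proposal with finite-difference gradients), followed by the accept/reject rule with probability $\min\{1,\exp(U(x)-U(x+\bV s_L)+\tfrac12(\|k\|^2-\|k'_L\|^2))\}$, defines a $\pi_{\bV,s^{\perp}}$-reversible kernel on $\R^m$. Here one writes $U_{\bV,s^{\perp}}(s)=U(\bV s+\bV^{\perp}s^{\perp})$ and observes that the finite-difference quantities $[U(x+\bV s_l+\epsilon v^{(i)})-U(x+\bV s_l)]/\epsilon$ appearing in Algorithm~\ref{alg:rshmc} are precisely finite-difference approximations of the partial derivatives of $U_{\bV,s^{\perp}}$ along the canonical basis of $\R^m$ (using $\bV^{\top}v^{(i)}=e^{(i)}$ since the columns of $\bV$ are orthonormal). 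Hence the inner loop is an ordinary leapfrog integrator applied to the Hamiltonian $H(s,k)=U_{\bV,s^{\perp}}(s)+\tfrac12\|k\|^2$ but with the exact gradient $\nabla U_{\bV,s^{\perp}}$ replaced by a fixed deterministic vector field $g_\epsilon(s)$. The two classical properties of leapfrog that the Metropolis correction relies on are (a) volume preservation and (b) reversibility under momentum flip — and both of these hold for leapfrog with *any* gradient-like vector field $g_\epsilon$, not only the true gradient, because they are purely structural facts about the shear maps $s\mapsto s+\gamma k$ and $k\mapsto k-\gamma g_\epsilon(s)$. The only thing the numerical error affects is the value of $\|k'_L\|^2$ that enters the acceptance ratio, and this is harmless: one simply uses the proposed kinetic energy in the ratio, as the algorithm does. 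Combining volume preservation, the momentum-flip involution, and the standard Metropolis–Hastings detailed-balance computation for deterministic-proposal samplers (à la Tierney / the HMC derivation in \citet{neal2011mcmc}) yields $\pi_{\bV,s^{\perp}}$-reversibility of the slice kernel, hence its invariance.

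The step I expect to be the main obstacle — or at least the one needing the most care — is the bookkeeping of the change of variables and the momentum augmentation together. One must be careful that: the Gaussian momentum $k\sim N_m(0,\I_m)$ is drawn *and* discarded within the same iteration (so the augmented chain on $(s,k)$ has stationary law $\pi_{\bV,s^{\perp}}(ds)\otimes N_m(0,\I_m)(dk)$ and we read off the $s$-marginal); the half-steps at $l=0$ and $l=L$ together with the momentum negation compose to a genuine involution on phase space; and the orthonormality $\bV^{\top}\bV=\I_m$ is what guarantees that the Euclidean structure used to define the leapfrog kinetic energy on the slice is the one inherited from $\R^d$, so that no Jacobian factor from the embedding $\R^m\hookrightarrow\R^d$ enters. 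Once these are pinned down, the argument is the standard HMC invariance proof with "$\nabla U$" replaced throughout by "$g_\epsilon$", and the outer hit-and-run wrapper (Algorithm~\ref{alg:mdim:rps}) contributes no further difficulty because, as recalled above, each conditional update is applied to the exact conditional $\pi_{\bV,s^{\perp}}$.
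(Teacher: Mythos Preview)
Your proposal is correct and follows essentially the same approach as the paper: conditional on $\bV$ (and hence $s^{\perp}$), show that the leapfrog map with the finite-difference ``gradient'' is a volume-preserving involution after momentum flip, invoke the deterministic-proposal detailed-balance argument of \citet{tierney1998note} to obtain reversibility with respect to $\pi_{\bV,s^{\perp}}\otimes N_m(0,\I_m)$, and then read off $\pi_{\bV,s^{\perp}}$-invariance on the slice. The paper's proof is slightly terser---it writes the Jacobian computation explicitly via the shear decomposition $F_1(s,k)=(s+k,k)$, $F_{2,\gamma_*}(s,k)=(s,\,k+\gamma_*\hat{\nabla}_{\bV}U(x+\bV s))$ and does not spell out the outer hit-and-run wrapper---but the substance is identical.
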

The proof of Proposition~\ref{prop:rshmc} follows from the deterministic proposal example in \citet{tierney1998note}, combined with the fact that the map $(s_L, k'_L) = T_L(s_0, k)$ used in Algorithm~\ref{alg:rshmc} is an involution and that the determinant of its Jacobian equals one. 

As for the standard HMC algorithm, the efficiency of Algorithm~\ref{alg:zohmc} may also benefit from a suitable preconditioning. Given a preconditioning matrix $A \in \R^{d \times d}$, a simple and effective approach is to reparameterize $y_t = A x_t$ and perform one iteration of Algorithm \ref{alg:zohmc} in the transformed space to compute $y_{t+1}$. The resulting point is then mapped back to the original space by taking
$
x_{t+1} = A^{-1}y_{t+1}.
$  
The empirical performance of Algorithm~\ref{alg:rshmc} is studied through two simulation studies in Section \ref{sec:5}. 

\subsection{A theoretical bound on the complexity of Algorithm \ref{alg:mdim:rps}} \label{sec:4:3}
We conclude this section by showing that, when combined with suitable proposal distributions, the performance of Algorithm \ref{alg:mdim:rps} can increase polynomially with the number of parallel processors. This indicates that the proposed method can offer a substantially more efficient alternative in practice to the approaches previously discussed in the literature and to those analyzed in Section \ref{sec:3}, as suggested by Figure \ref{fig:gain_naive} above.

In our analysis, two distinct aspects must be taken into account:
\begin{enumerate}
\item As for any hit-and-run-type algorithm, each iteration of Algorithm~\ref{alg:mdim:rps} acts only on an $m$-dimensional subspace of $\R^d$. The theoretical analysis of such algorithms requires proof techniques that differ from those used for methods operating on the full space. Under Assumption~\ref{cond:1}, convergence properties in terms of Kullback-Leibler contraction for MCMC methods with Gibbs-type and hit-and-run-type updates have been studied by \citet{ascolani2024entropy}. We build on these results by adapting them to the zeroth-order framework considered here.
\item The algorithm is strongly influenced by the properties of the proposal density $\eta_{\bV,x}$, both in terms of approximation quality (relative to $\smash{\pi_{\bV,s^{\perp}}}$) and computational cost. Consequently, it is important to disentangle effects that are intrinsic to the structure described in \eqref{eq:kernel:bbhar} from those that arise from the specific form of $\eta_{\bV,x}$. We deal with this fact by assuming that, for every $\delta>0$, $ x \in \R^d$ and $\bV \in \V_m(\R^d)$, the proposal $\eta_{\bV,x}$ satisfies 
$\mathrm{KL}(\eta_{\bV,x} \mid \pi_{\bV,s^{\perp} }) \leq \delta$, where
$$\mathrm{KL}(\eta_{\bV,x} \mid \pi_{\bV,s^{\perp}})
=
\int \log \left( 
\frac{ \pi_{\bV,s^{\perp}}}
{\eta_{\bV,x}}(s)
\right)
\eta_{\bV,x}(ds),$$
is the Kullback–Leibler divergence of $\pi_{\bV,s^{\perp}}$ from $\eta_{\bV,x}$. The existence of such a $\smash{\eta_{\bV,x}}$ for strongly log-concave and log-smooth target distributions on $\R^m$ and its computational cost is discussed in the last part of this section.
\end{enumerate}
Under Assumptions \ref{cond:1}-\ref{cond:2}, the efficiency of Algorithm \ref{alg:mdim:rps} in terms of Kullback-Leibler contraction from $\pi$ is studied in Theorem \ref{theo:appr:hr}. In the following, $P^t$ denotes the transition kernel obtained by applying $P$ (as defined in \eqref{eq:kernel:bbhar}) for $t$ times while, for a generic probability distribution $\mu \in \mathcal{P}(\R^d),$ $\mu P^t(A) = \int \mu(dy)P^t(y,A)$, for $A\subseteq \R^d$.
\begin{theorem} \label{theo:appr:hr}
Assume that for every $\delta>0$, $\bV \in \V_{m}(\R^d)$ and $x \in \R^d$, 
\begin{equation} \label{eq:cond:theorem:appr:hr}
  \mathrm{KL}(\eta_{\bV,x} \mid \pi_{\bV,s^{\perp} }) \leq \delta,  
\end{equation}
and that Assumptions \ref{cond:1}-\ref{cond:2} hold. Then, the transition kernel $P$ defined in \eqref{eq:kernel:bbhar} satisfies
    \begin{equation} \label{eq:result1:theorem:appr:hr}
        \kl(\mu P^t \mid \pi ) \leq \Big( 1 - \frac{1}{\kappa} \frac{m}{d} \Big)^t \kl(\mu \mid \pi ) + \frac{\delta \kappa d }{m} ,
    \end{equation}
    for all $t\in\mathbb{N}$.
Thus if $ \delta \leq \epsilon m /(2\kappa d)$ and $t\geq (\kappa d 
/m) \log(2\kl(\mu\mid\pi)/\epsilon)$ one has 
$$\kl(\mu P^t \mid \pi ) \leq \epsilon\,.$$ 
\end{theorem}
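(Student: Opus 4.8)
The plan is to establish the one-step contraction inequality
\[
\kl(\mu P \mid \pi) \le \Big(1 - \tfrac{1}{\kappa}\tfrac{m}{d}\Big)\,\kl(\mu\mid\pi) + \delta\,,
\]
and then iterate it, since the recursion $a_{t} \le \rho\, a_{t-1} + \delta$ with $\rho = 1-\tfrac{1}{\kappa}\tfrac{m}{d} \in (0,1)$ unwinds to $a_t \le \rho^t a_0 + \delta/(1-\rho) = \rho^t a_0 + \delta\kappa d/m$, which is exactly \eqref{eq:result1:theorem:appr:hr}. The final "if" clause is then routine: with $\delta \le \epsilon m/(2\kappa d)$ the additive term is at most $\epsilon/2$, and with $t \ge (\kappa d/m)\log(2\kl(\mu\mid\pi)/\epsilon)$ the geometric term $\rho^t \kl(\mu\mid\pi)$ is at most $\epsilon/2$ using $\rho^t \le \exp(-t(1-\rho)) = \exp(-t\,\tfrac{1}{\kappa}\tfrac{m}{d})$; summing gives $\kl(\mu P^t\mid\pi)\le\epsilon$.

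For the one-step bound I would proceed in two stages. First, consider the \emph{idealized} kernel $\tilde P$ that uses the exact conditional $\pi_{\bV,s^{\perp}}$ in place of $\eta_{\bV,x}$ — this is precisely an $m$-dimensional hit-and-run / random-scan Gibbs-type update of the kind analyzed by \citet{ascolani2024entropy}. Under Assumption~\ref{cond:1} ($U$ is $\lambda$-convex and $L$-smooth) and Assumption~\ref{cond:2} (so that $\bV$ spans a "uniformly random" $m$-dimensional direction with $(d/m)\E[\bV\bV^\top]=\I_d$), their entropy-contraction results give $\kl(\mu\tilde P\mid\pi) \le (1-\tfrac{1}{\kappa}\tfrac{m}{d})\kl(\mu\mid\pi)$; the factor $\tfrac{m}{d}$ reflects that each step refreshes an $m$-dimensional slice, and $1/\kappa$ is the usual log-concavity contraction rate. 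I would cite/adapt their theorem directly, checking that Assumption~\ref{cond:2} supplies the averaging hypothesis their proof requires. Second, I would pass from $\tilde P$ to $P$ by controlling the KL error introduced by replacing $\pi_{\bV,s^{\perp}}$ with $\eta_{\bV,x}$. The natural tool here is a convexity/chain-rule argument for KL divergence: writing $\mu P$ and $\mu\tilde P$ as mixtures over $(\bV, s^{\perp})$ of the conditional laws $\eta_{\bV,x}$ and $\pi_{\bV,s^{\perp}}$ respectively, the joint chain rule for KL gives $\kl(\mu P\mid\pi) \le \kl(\mu\tilde P\mid\pi) + \E[\kl(\eta_{\bV,x}\mid\pi_{\bV,s^{\perp}})] \le \kl(\mu\tilde P\mid\pi) + \delta$, using hypothesis~\eqref{eq:cond:theorem:appr:hr} uniformly. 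Combining the two stages yields the claimed one-step inequality.

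The main obstacle I anticipate is the second stage — making the perturbation argument rigorous. One has to be careful about which direction of the KL divergence appears and whether the decomposition of $\pi$ into its $(\bV,s^{\perp})$-marginal and the conditional $\pi_{\bV,s^{\perp}}$ is compatible with the decomposition of $\mu P$. The subtlety is that $\mu P$ keeps the $s^{\perp}$-coordinate of the \emph{current} state fixed and resamples the $s$-coordinate from $\eta_{\bV,x}$, so the post-step marginal on $(\bV, s^{\perp})$ is not $\pi$'s marginal but $\mu$'s (pushed through the projection), and one must track how the $\delta$-bound survives this mismatch. I expect this is handled by the standard observation that a single Gibbs-type update makes the conditional exactly right even if the marginal is still wrong, so that after one step the conditional law of $s$ given $(\bV,s^{\perp})$ under $\mu\tilde P$ is $\pi_{\bV,s^{\perp}}$ (hence the contraction only needs to "fix the marginal"), and under $\mu P$ it is $\eta_{\bV,x}$, which differs by at most $\delta$ in KL — again a chain-rule bookkeeping exercise. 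A secondary technical point is verifying that Assumption~\ref{cond:1}(3) (twice continuous differentiability) and the convexity hypotheses transfer to $U_{\bV,s^{\perp}}(s) = U(\bV^\perp s^{\perp} + \bV s)$ on the slice, which is immediate since restricting a $\lambda$-convex, $L$-smooth function to an affine subspace preserves both constants, so the conditional targets $\pi_{\bV,s^{\perp}}$ are themselves $\lambda$-convex and $L$-smooth on $\R^m$ — this is what legitimizes invoking the \citet{ascolani2024entropy} machinery slice-by-slice.
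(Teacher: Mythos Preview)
Your proposal is correct and matches the paper's proof essentially line for line: the paper applies convexity of KL over the mixture in $\bV$, then the chain rule to split $\kl(\mu P_{\bV}\mid\pi)$ into the marginal term $\kl(\mu_{\bV^\perp}\mid\pi_{\bV^\perp})$ (bounded by $(1-\tfrac{1}{\kappa}\tfrac{m}{d})\kl(\mu\mid\pi)$ via \citet{ascolani2024entropy}, Corollary~8.2) plus the conditional term (bounded by $\delta$), and then iterates and sums the geometric series exactly as you describe. Your ``main obstacle'' paragraph already contains the precise fix for the one loose step in your sketch --- the inequality $\kl(\mu P\mid\pi)\le\kl(\mu\tilde P\mid\pi)+\delta$ should be routed through the common upper bound $\int\kl(\mu_{\bV^\perp}\mid\pi_{\bV^\perp})\,\nu(d\bV)$ rather than through $\kl(\mu\tilde P\mid\pi)$ itself, and that is exactly what the paper does (without ever naming $\tilde P$).
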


For every fixed $\epsilon$, Theorem \ref{theo:appr:hr} implies that Algorithm \ref{alg:mdim:rps} requires $O(\kappa d/m)$ iterations to obtain a sample from a distribution that is $\epsilon$-close to $\pi$ in Kullback--Leibler divergence. The improvement is therefore linear in $m$. This, however, does not account for the computational cost of $\eta_{\bV,x}$. 

Under Assumption \ref{cond:1} and in a first-order framework (i.e., when it is assumed the existence of an oracle providing information both on the target and its gradient),
one can construct a distribution $\smash{\eta_{\bV,x}}$ such that $\smash{\mathrm{KL}(\eta_{\bV,x} \mid \pi_{\bV,s^{\perp}}) \leq \delta}$, with $O\left(\kappa^{3/2} m^{1/2} \log^3(1/\delta)\right)$
sequential oracle calls. This, for example, by following the approach of \citet[Theorem D.1]{altschuler2024faster}. A first-order assumption does not meet the zeroth-order setting we consider. However, if one assumes that the approximation error in Algorithm \ref{alg:numdev} is negligible and that $m$ parallel processors are available, the gradient can be reconstructed in a single parallel round. In view of these considerations, we introduce Assumption \ref{cond:3}.
\begin{assumption} \label{cond:3}
For every $\delta>0$, $\bV \in \V_m(\R^d)$ and $x \in \R^{d} $, there exists an algorithm that uses
\begin{equation*}
    N(m,\kappa,\delta) = O\left(\kappa^{3/2} m^{1/2} \log^3(1/\delta)\right),
\end{equation*}
parallel rounds, each one consisting in $m$ parallel evaluations of $U$, to sample from a distribution $\eta_{\bV,x} \in \mathcal{P}(\R^m)$ such that $\mathrm{KL}(\eta_{\bV,x} \mid \pi_{\bV,s^{\perp} }) \leq \delta$. 
\end{assumption}

Under Assumptions \ref{cond:1}-\ref{cond:2}-\ref{cond:3}, Corollary \ref{corol:mix:hr} shows that methods in the class described by Algorithm \ref{alg:mdim:rps} can achieve a polynomial speed-up in $m$.

\begin{corollary} \label{corol:mix:hr}
 Under Assumptions \ref{cond:1}-\ref{cond:2}-\ref{cond:3}, Algorithm \ref{alg:mdim:rps} requires 
 $$n_*
 =
 O\Big(
 \frac{\kappa^{5/2} d}{m^{1/2}}
 \log^3\Big(
 \frac{2 \kappa d}{\epsilon m }
 \Big)
 \log \Big( \frac{2 \kl(\mu\mid\pi)}{\epsilon} \Big)\Big), $$
parallel rounds with $m$ zeroth-order oracles to produce a sample that is $\epsilon$-close to $\pi$ in Kullback-Leibler divergence. 
\end{corollary}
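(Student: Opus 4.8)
The proof of Corollary \ref{corol:mix:hr} combines the per-iteration contraction estimate of Theorem \ref{theo:appr:hr} with the cost accounting provided by Assumption \ref{cond:3}. The plan is as follows. First, I would fix the target accuracy $\epsilon>0$ and invoke Theorem \ref{theo:appr:hr}: to guarantee $\kl(\mu P^t \mid \pi)\leq\epsilon$ it suffices to (i) choose the per-iteration proposal accuracy $\delta$ small enough that the residual bias term $\delta\kappa d/m$ is at most $\epsilon/2$, i.e. $\delta \leq \epsilon m/(2\kappa d)$, and (ii) run
$$ t_* = \Big\lceil \frac{\kappa d}{m}\log\!\Big(\frac{2\,\kl(\mu\mid\pi)}{\epsilon}\Big)\Big\rceil $$
iterations so that the geometric term $(1-\tfrac{1}{\kappa}\tfrac{m}{d})^t\kl(\mu\mid\pi)$ is at most $\epsilon/2$ (using $1-u\leq e^{-u}$). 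This is exactly the statement already packaged at the end of Theorem \ref{theo:appr:hr}, so this step is essentially a citation.

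Next I would count parallel rounds. Each of the $t_*$ outer iterations of Algorithm \ref{alg:mdim:rps} consists of sampling $\bV\sim\nu$ (whose cost is dominated, or at worst comparable, to the sampling cost and does not involve oracle calls) and then drawing $s'\sim\eta_{\bV,x}$. By Assumption \ref{cond:3}, producing such an $s'$ with $\mathrm{KL}(\eta_{\bV,x}\mid\pi_{\bV,s^\perp})\leq\delta$ costs $N(m,\kappa,\delta)=O(\kappa^{3/2}m^{1/2}\log^3(1/\delta))$ parallel rounds of $m$ zeroth-order evaluations each. Substituting the choice $\delta = \epsilon m/(2\kappa d)$ gives $\log(1/\delta)=\log(2\kappa d/(\epsilon m))$, hence the cost of one outer iteration is
$$ O\!\Big(\kappa^{3/2}m^{1/2}\log^3\!\Big(\frac{2\kappa d}{\epsilon m}\Big)\Big) $$
parallel rounds. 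Multiplying by the number $t_* = O\big((\kappa d/m)\log(2\kl(\mu\mid\pi)/\epsilon)\big)$ of outer iterations and simplifying $\kappa^{3/2}\cdot\kappa = \kappa^{5/2}$ and $m^{1/2}/m = m^{-1/2}$ yields the claimed bound
$$ n_* = O\!\Big(\frac{\kappa^{5/2}d}{m^{1/2}}\log^3\!\Big(\frac{2\kappa d}{\epsilon m}\Big)\log\!\Big(\frac{2\,\kl(\mu\mid\pi)}{\epsilon}\Big)\Big). $$

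There is essentially no deep obstacle here: the corollary is a bookkeeping consequence of Theorem \ref{theo:appr:hr} and Assumption \ref{cond:3}. The one point requiring a little care is the interaction between the two sources of error — one must pick $\delta$ as a function of $\epsilon$, $m$, $d$, $\kappa$ \emph{before} fixing the number of iterations, so that the $\log^3(1/\delta)$ factor in the inner cost becomes the $\log^3(2\kappa d/(\epsilon m))$ term appearing in the statement; doing it in the other order would produce a circular dependence. A secondary (minor) point is to confirm that the cost of sampling $\bV\sim\nu$ and of the cheap linear-algebra steps $s^\perp=(\bV^\perp)^\top x$, $X_t = \bV^\perp s^\perp + \bV s'$ does not dominate: for $\nu=\nu_{\mathrm{E}}$ this is immediate, and for $\nu=\nu_{\mathrm{unif}}$ the $O(m^2 d)$ cost noted in Section \ref{sec:2:2} is non-oracle work that we treat as negligible relative to parallel rounds of $U$-evaluations, consistently with the conventions of the paper. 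Finally I would remark that the exponent gain is genuine: the dependence on $m$ has improved from the $O(\kappa d/m)$ \emph{iteration} count to an $O(d/m^{1/2})$ \emph{parallel-round} count, i.e. a polynomial (square-root) speed-up in the number of workers, in sharp contrast with the at-most-logarithmic gains of the methods discussed in Section \ref{sec:3}.
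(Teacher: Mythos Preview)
Your proposal is correct and follows exactly the same approach as the paper: the corollary is obtained by combining the iteration bound and choice of $\delta$ from Theorem \ref{theo:appr:hr} with the per-iteration cost from Assumption \ref{cond:3}, and multiplying. The paper's own proof is in fact just a one-line citation of these two ingredients, so your write-up simply makes the bookkeeping explicit.
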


With $\epsilon$, $\kappa$ and $\kl(\mu\mid\pi)$ fixed, Corollary \ref{corol:mix:hr} shows that it is possible to sample approximately from $\pi$ with error $\epsilon$ in $O(d/m^{1/2})$ parallel rounds, up to poly-logarithmic terms. In contrast, the RWM algorithm requires $O(d)$ iterations to reach the same accuracy \citep{andrieu2024explicit}. Assuming that evaluating $\pi$ constitutes the main computational bottleneck, the random-slice approach yields a polynomial speed-up of order $m^{1/2}$. This improvement is significantly larger than the poly-logarithmic gains achieved by standard in-step parallel MCMC methods \citep{brockwell2006parallel,pozza2025fundamental}. Furthermore, compared to Algorithm \ref{alg:smala}, whose spectral gap is $O(1/(d-m))$, Algorithm \ref{alg:mdim:rps} displays a more favorable dependence on $m$, especially when $m$ is substantially smaller than $d$. This fact is also empirically validated by the results displayed in Figure \ref{fig:gain_naive}.

\begin{remark}[Optimal choice of $m$] \label{remark:m>m0}
In Algorithm \ref{alg:mdim:rps} we must select the number of random directions $m$ used at each iteration. With $m_0$ parallel processors, a natural option is to take at least $m = m_0$ (in view of Corollary \ref{corol:mix:hr}, choosing $m < m_0$ increases the total number of parallel rounds required to reach the same accuracy). However, one could also consider strategies where $m = c \times m_0$ with $c \in \N$ greater than one. In this case, following the same reasoning as above, one would need $O\left(\kappa^{3/2} (cm_0)^{1/2} \log^3(1/\delta)\right)$ sequential calls from a first-order oracle to obtain $\mathrm{KL}(\eta_{\bV,x} \mid \pi_{\bV,s^{\perp} }) \leq \delta $. In the zeroth-order framework and using $m_0$ parallel processors, the same information obtained from a single first-order oracle call can be reconstructed using $c$ parallel rounds (ignoring the finite-difference error in Algorithm \ref{alg:numdev}). This changes the total number of parallel rounds given in Corollary \ref{corol:mix:hr}, which in this case would be of order
$$
O\left(
 \frac{\kappa^{5/2} d c^{3/2} m^{1/2}_0 }{c m_0}
 \log^3\left(
 \frac{2 \kappa d}{\epsilon m }
 \right)
 \log \left( \frac{2 \kl(\mu\mid\pi)}{\epsilon} \right)\right)
 =
O\left(
 \frac{\kappa^{5/2} d c^{1/2}}{m_0^{1/2}}
 \log^3\left(
 \frac{2 \kappa d}{\epsilon m }
 \Big)
 \log \Big( \frac{2 \kl(\mu\mid\pi)}{\epsilon} \right)\right),
$$
thus being minimized for $c = 1$. Therefore, at a theoretical level, the choice $m = m_0$ is expected to be optimal, which is empirically validated by the results reported in Figure \ref{fig:eff:m:logistic} and Figure \ref{fig:eff:m:svm}. 
\end{remark}

\section{Numerics} \label{sec:5}

\subsection{Logistic regression} \label{sec:5:1}
We first investigate the performance of Algorithm \ref{alg:zohmc} on two Bayesian logistic regressions with dimension $d = 25, 200$ and sample size $n = 25, 200$, respectively. Following standard Bayesian notation, we denote the model parameters by $\beta \in \mathbb{R}^d$. As prior distribution a $\beta \sim N_d(0, \sigma^2 I_d)$ with $\sigma^2 = 25/d$ is assumed. With this choice for the likelihood and the prior, the potential function is $m$-convex, $L$-smooth and equal to
$$
U(\beta)
= \sum_{i=1}^n \Bigl[\log\bigl(1 + e^{z_i^\top \beta}\bigr)
- y_i\,z_i^\top \beta\Bigr]
+ \frac{1}{2\sigma^2}\,\beta^\top \beta \,,
$$
where $y_i \in \{0,1\}$ is the response variable and $z_i \in \mathbb{R}^d$ is the vector of explanatory variables for $i = 1,\dots,n$. In the simulation, the true parameters, $\beta_0$, are generated as $\beta_0 \sim N_d(0,(1/8) \I_d)$, each $z_i \overset{\mathrm{i.i.d.}}{\sim} N_d(0, \I_d)$, and, given $(\beta_0, z_i)$, $y_i \overset{\mathrm{i.i.d.}}{\sim} \mathrm{Bernoulli}\bigl(1/(1+\exp(-z_i^\top \beta_0))\bigr)$.

Two alternative versions of Algorithm \ref{alg:zohmc} are considered. In both, at each iteration, the matrix $\bV$ is generated by sampling uniformly and without replacement $m$ elements from the canonical basis. In the first version, the number of leapfrog iterations $L$ is set to $1$, yielding a random-slice variant of MALA (denoted RS-MALA). In the second version, $L$ is optimized to improve performance. For comparison with other zeroth-order in-step parallelization methods, we consider a multiple-try MCMC algorithm \citep{liu2000multiple} with locally balanced acceptance probability \citep{gagnon2023improving}.
As already discussed in the previous sections, \citet{pozza2025fundamental} demonstrated that the speed-up in $m$ of MTM is at most poly-logarithmic for models like the one considered. It is therefore of interest to observe whether the polynomial speed-up predicted in Section \ref{sec:4} for the random-slice approach can provide substantial gains compared to the logarithmic one of MTM in practical settings. In all three algorithms, the proposal variance is optimized adaptively \citep[Algorithm~4]{andrieu2008tutorial}. For each value of $m$ the chains are run for $5 \times 10^4$ iterations.

For a generic Markov chain $X_0, \dots, X_T$, with $X_i = (X_{i,1},\dots,X_{i,d}) \in \R^d$, we define the (empirical) expected squared jump distance (ESJD) as
\begin{equation} \label{eq:esjd}
    \text{ESJD} = \frac{1}{d(T-1)} \sum_{j = 1}^d \sum_{i = 2}^T ( X_{i+1,j} - X_{i,j} )^2.
\end{equation}
Figure \ref{fig:1} shows the ESJD for the three methods in the two models considered, normalized by the ESJD of a RWM–MCMC algorithm. For RS-HMC, the ESJD is further divided by the number of leapfrog steps $L$, so that, for any value of $m$, the three methods have approximately the same computational cost per iteration. Both RS-MALA and RS-HMC outperform MTM substantially: even with a moderate number of parallel processors, the speed-up obtained over RWM–MCMC is larger than that achieved by MTM at $m = d$, especially for the $d=200$ case. Note also that, when $m = d$, RS-MALA and RS-HMC are equivalent to the full-gradient MALA and HMC algorithms, up to the discretization error in Algorithm \ref{alg:numdev}.
In this regard, it is interesting to evaluate how the two methods approach their $m = d$ version as $m$ increases and to compare this with the corresponding behavior observed for MTM.
For MTM, improvements stabilize quickly: when $d = 200$ results at $m=75$ are similar to those at $m=200$. In contrast, RS-MALA and RS-HMC show rapid gains for $m<75$ and continue to improve noticeably as $m$ increases beyond 75. This is especially true for RS-HMC.

\begin{figure}[htbp]
  \centering
  \begin{minipage}[b]{0.48\textwidth}
    \centering
    \includegraphics[width=\linewidth]{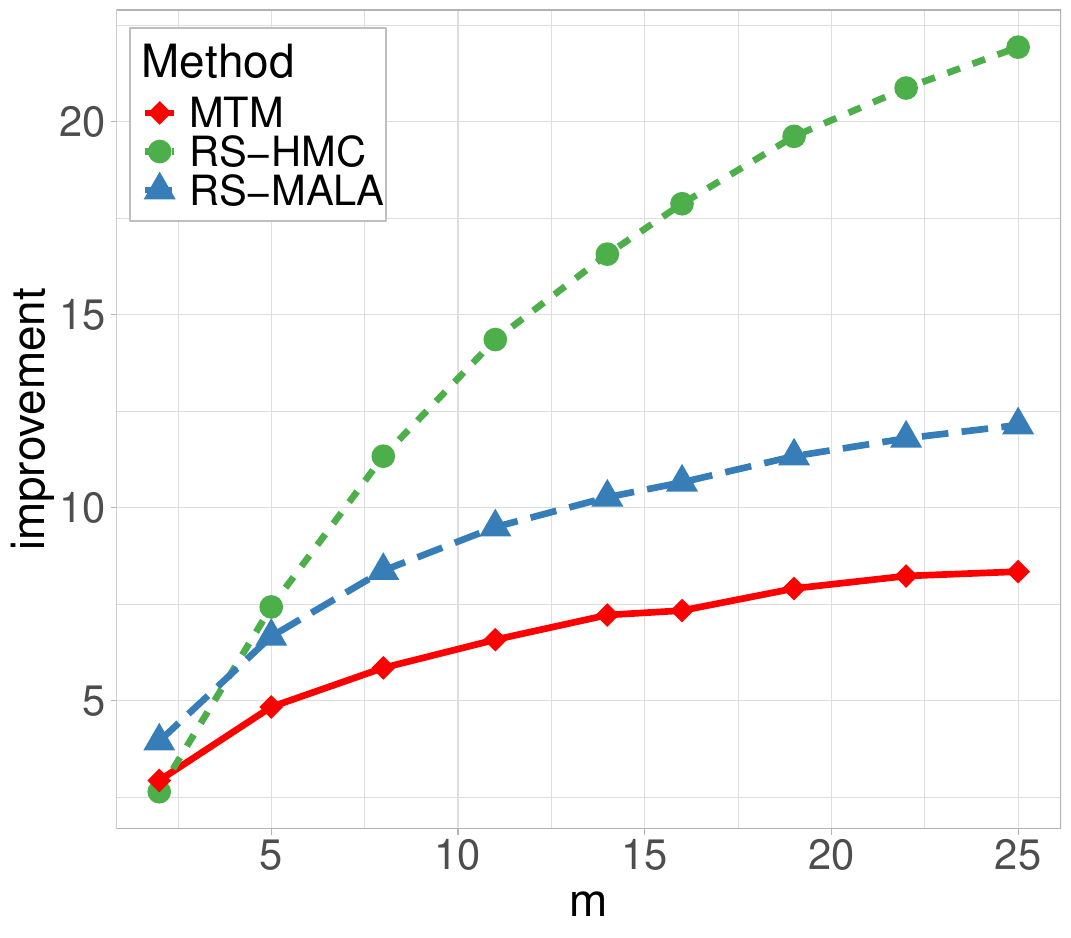}
    \vspace{-1pt}\\
    { a) d = 25}
    \label{fig:gain_d25}
  \end{minipage}
  \hfill
  \begin{minipage}[b]{0.48\textwidth}
    \centering
    \includegraphics[width=\linewidth]{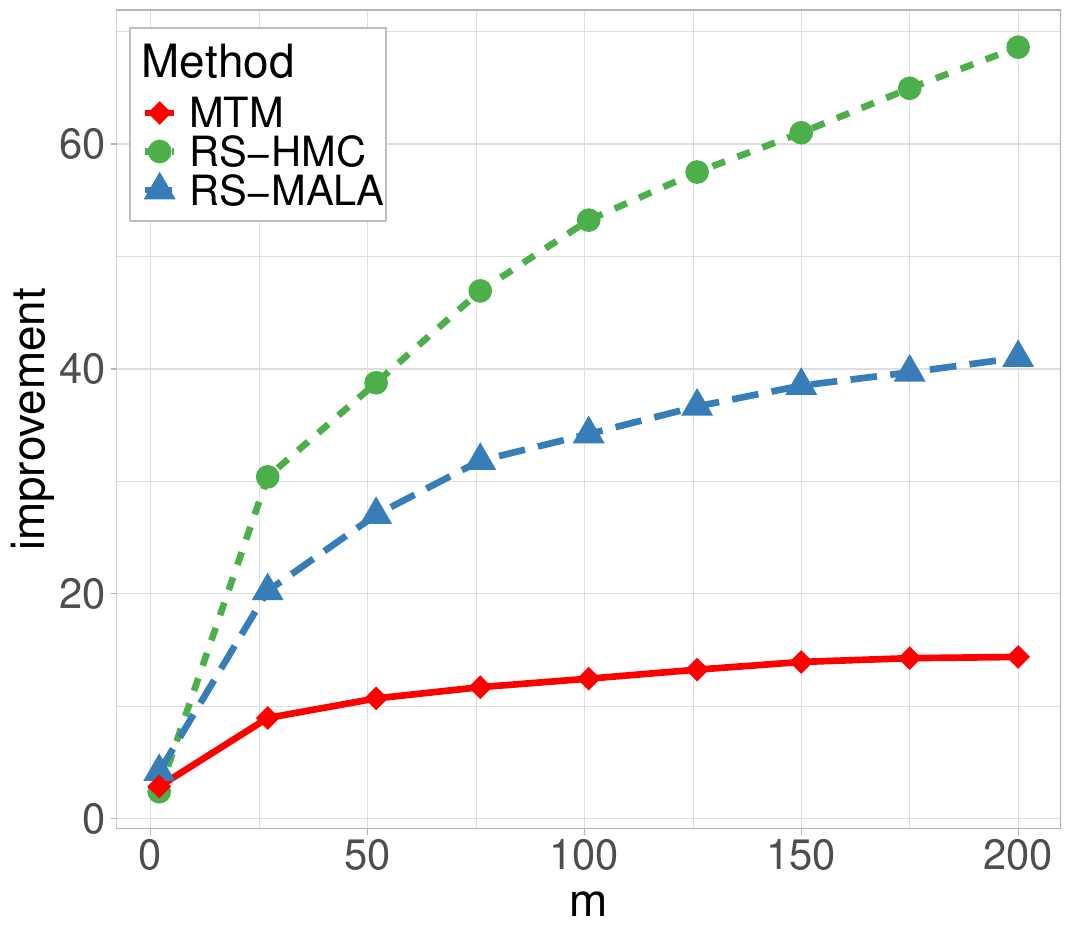}
    \vspace{-1pt}\\
    { b) d = 200}
    \label{fig:gain_d200}
  \end{minipage}
\caption{Relative improvement, over RWM, evaluated in terms of estimated average ESJD for RS-MALA, RS-HMC and MTM for the two logistic regressions considered in Section \ref{sec:5:1}. Results are standardized for the total number of $m$ parallel target evaluations at each iteration of the algorithm. }
\label{fig:1}
\end{figure}

\subsection{Stochastic volatility model}
\label{sec:5:2}
We now compare RS-MALA, RS-HMC and MTM on a stochastic volatility model for time series data \citep{kim1998stochastic}. The model consists of $n + 3$ latent parameters $\mu, \phi, \log \sigma \in \R$ and $\eta \in \R^n$ which are used to parametrize the observed volatility of a time series $\{y_i\}_{i = 1}^n$ by assuming 
\begin{equation*}
    y_i \sim N(0, \exp(h_i)), \qquad i = 1,\dots,n,
\end{equation*}
with $h_1 = \mu + \exp(\log \sigma)/\lbrace 1 - \tanh(\phi)^2 \rbrace \eta_1$ and
\begin{equation*}
    h_t = \tanh(\phi)(h_{t-1} - \mu ) + \exp( \log \sigma )\eta_t, \qquad t = 2, \dots, n.
\end{equation*}
As priors for the model parameters, we take
$$
\mu \sim N(0,10),\quad \phi \sim N(0,1),\quad \log\sigma \sim N(0,1),
\quad \eta_i \overset{\mathrm{iid}}{\sim} N(0,1)\quad i=1,\dots,n.
$$
Data are then generated by setting $n=200$, $\mu_0=1$, $\tanh(\phi_0)=0.5$, $\log\sigma_0=0$ and sampling
$\eta_{0,i} \overset{\mathrm{iid}}{\sim} N(0,1)$, thus resulting in a model with $d = 203$ parameters. For each algorithm and each value of $m$ considered, a chain of length $5\times10^4$ was obtained.
Figure \ref{fig:2} shows the performance as $m$ varies for the RS-MALA, RS-HMC and MTM algorithms. Compared with Section \ref{sec:5:1}, all three methods exhibit slightly greater improvement over RWM–MCMC. Nonetheless, RS-MALA and RS-HMC continue to outperform MTM, yielding the same qualitative conclusions as in Section \ref{sec:5:1}. Both RS-MALA and RS-HMC achieve better performance than MTM, with $m = d$, using substantially fewer target evaluations. Moreover, their performance does not stabilize as $m$ approaches $d$ but increase notably even for large values of $m$. 

\begin{figure} \label{fig:2}
    \centering
    \includegraphics[width=0.55\textwidth]{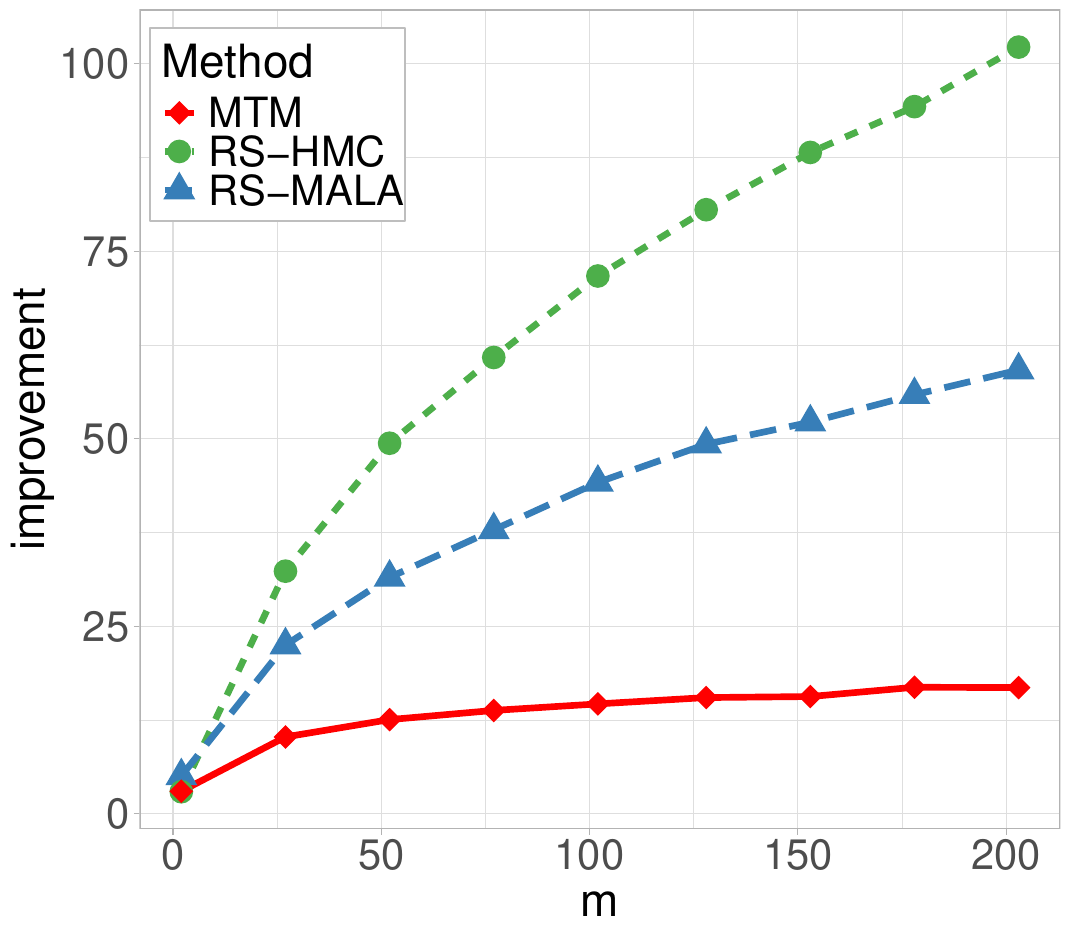}
    \caption{Relative improvement, over RWM, evaluated in terms of estimated ESJD for RS-MALA, RS-HMC and MTM for the stochastic volatility model considered in Section \ref{sec:5:2}. Results are standardized for the total number of $m$ parallel target evaluations at each iteration of the algorithm. }
\end{figure}

\subsection{Optimal choices for $m$}
The implementation of Algorithm \ref{alg:zohmc} requires specifying the number $m$ of random directions used at each iteration. With $m_0$ parallel processors, Remark \ref{remark:m>m0} suggests that setting $m=m_0$ is optimal to maximize algorithmic efficiency, at least at the theoretical level. In this section, we empirically investigate this issue by considering as measure of algorithmic efficiency the ratio between speed of convergence and cost per iteration of the resulting kernel. We empirically estimate speed of convergence using the ESJD (as defined in \eqref{eq:esjd}), thus resulting in a measure of efficiency for
Algorithm \ref{alg:zohmc} taking the form
\begin{equation} \label{eq:eff:m}
    Eff(m) 
    =
    \frac{ESJD_{P^{\textsc{rs-hmc}}_{m,L}}}{(m/m_0)L},
\end{equation}
where $ESJD_{P^{\textsc{rs-hmc}}_{m,L}}$ denotes the ESJD of the Markov chain with transition kernel $P^{\textsc{rs-hmc}}_{m,L}$. Figure \ref{fig:eff:m:logistic} and Figure \ref{fig:eff:m:svm} show the relative efficiency $Eff(m)/Eff(m_0)$ for various values of $m_0$ and $m$ in the $200$-dimensional logistic regression and in the stochastic volatility model discussed in Section \ref{sec:5:1} and Section \ref{sec:5:2}, respectively. In all the cases considered, maximal efficiency is obtained when $m = m_0$. Choosing $m > m_0$ leads to a rapid decrease of the performance of the algorithm, especially when $m$ is substantially smaller than $d$. This suggests that the simple choice $m=m_0$ is not only theoretically principled but also effective in practice.

\begin{figure}
  \centering
  \begin{minipage}{0.95\textwidth}
    \centering
    \includegraphics[width=\linewidth]{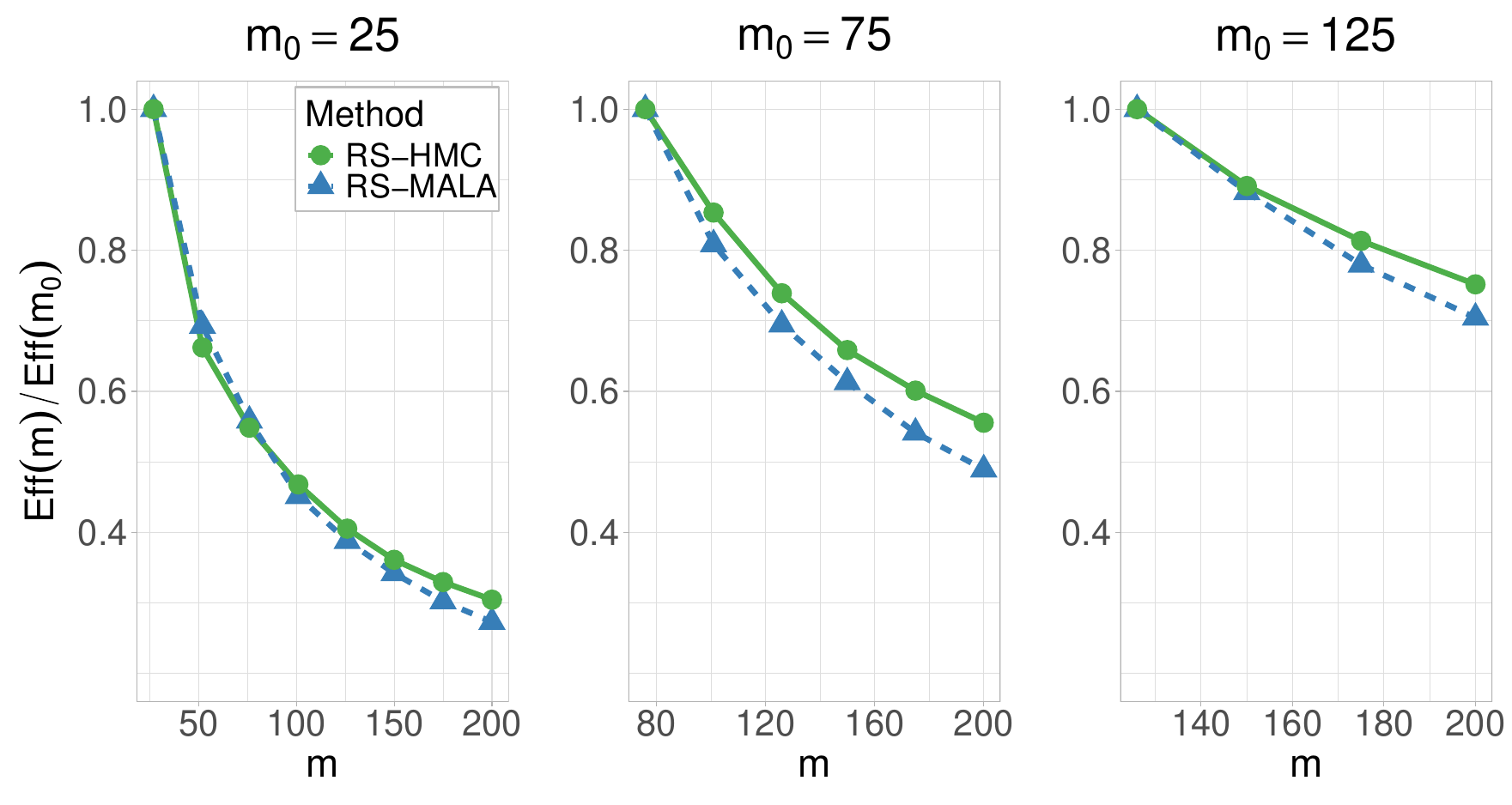}
    \caption{Relative efficiency for different values of $m_0$ in the $200$-dimensional logistic regression described in Section \ref{sec:5:1}}
    \label{fig:eff:m:logistic}
  \end{minipage}

  \vspace{\baselineskip} 
  \begin{minipage}{0.95\textwidth}
    \centering
    \includegraphics[width=\linewidth]{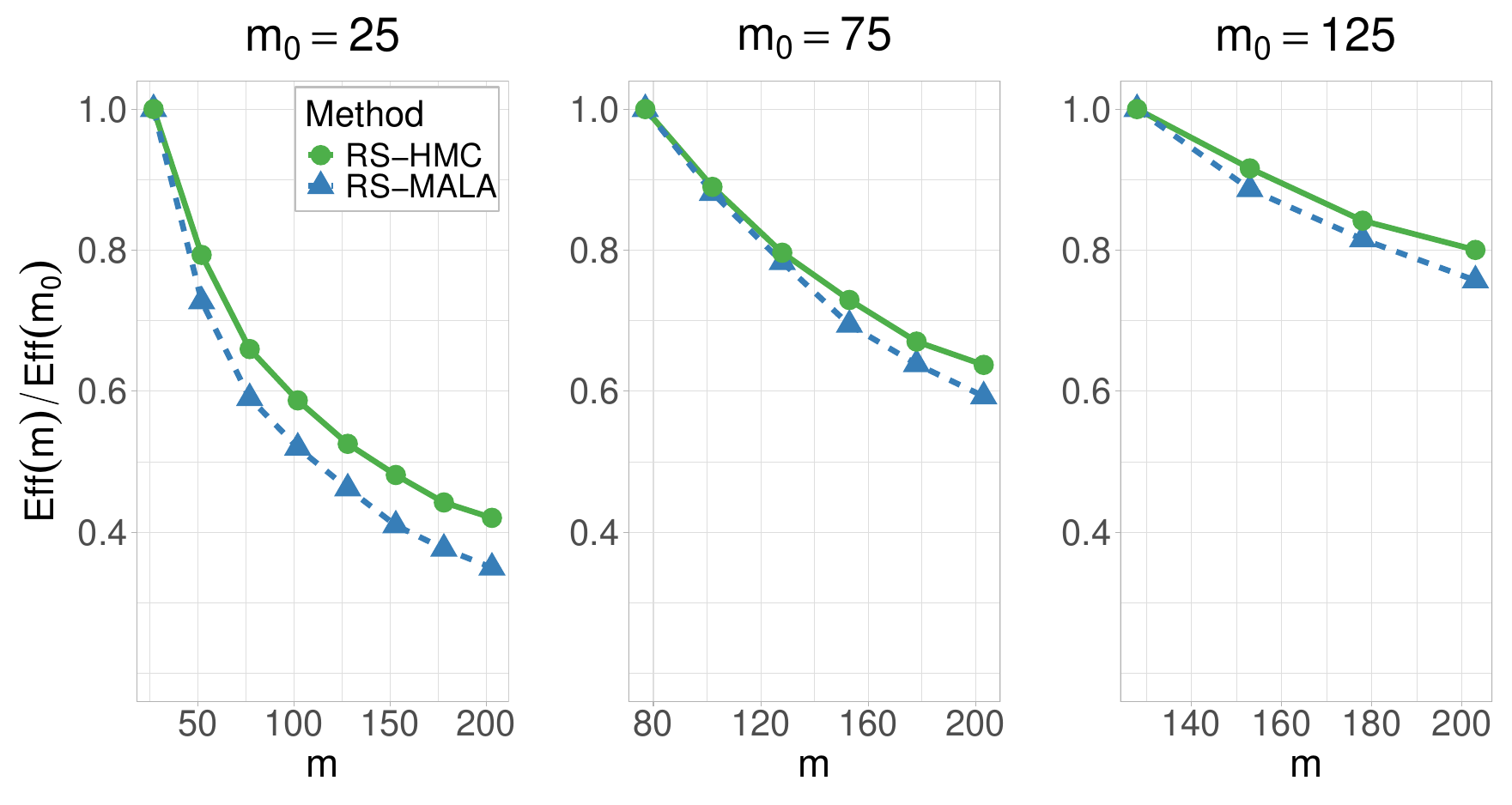}
    \caption{Relative efficiency for different values of $m_0$ in the $203$-dimensional stochastic volatility model described in Section \ref{sec:5:1}}
    \label{fig:eff:m:svm}
  \end{minipage}
\end{figure}

\paragraph*{Acknowledgements}
The authors thank Martin Chak for useful discussions and suggestions, especially regarding Theorem \ref{theo:appr:hr}.

\bibliography{bibliography}
\newpage

\appendix

\section{Proofs} \label{sec:app:a}
\subsection{Proof of Proposition \ref{prop:contr:sula}}
\begin{proof}
Let $\gamma>0$ and $x,y\in\R^d$.
We use a synchronous coupling of $X\sim \delta_{x} P^{SULA}$ and $Y\sim\delta_{y} P^{SULA}$ defined as
\begin{align*}
X&=
x -
\gamma \frac{d}{m}\bV \bV^{\top} \nabla U(x) +
\sqrt{2 \gamma} Z, 
\\
Y&=y-
\gamma \frac{d}{m}\bV \bV^{\top} \nabla U(y) +
\sqrt{2 \gamma} Z.
\end{align*}
with $Z \sim N_d(0, \I_d)$, $\bV\sim\nu$, and $Z$ and $\bV$ independent of each other.
The above coupling leads to the upper bound
\begin{align*}
W_2^2(\delta_{x} P^{SULA},\delta_{y} P^{SULA}) 
\leq &\;
%\int 
\E[\|Y-X\|^2] 
%\nu(d\bV)
\\
= &\;
\int \|
y-x
- \gamma \frac{d}{m}\bV \bV^{\top}
\left(
\nabla U(x) - \nabla U(y)
\right)
\|^2
\nu(d \bV)\,.
\end{align*}

From Assumption \ref{cond:2} and Lemma \ref{lemma:norm:cond2} we obtain
\begin{align}
&\int \|
y-x
- \gamma \frac{d}{m}\bV \bV^{\top}
\left(
\nabla U(x) - \nabla U(y)
\right)
\|^2
\nu(d \bV) 
\nonumber
\\
& = \frac{(d-m)}{d}\|
y-x\|^2
+\frac{m}{d}\|
y-x
- \gamma\frac{d}{m} (\nabla U(x)-\nabla U(y))
\|^2.
\label{eq:help:ula:1}
\end{align}
By Assumption \ref{cond:1} and \citet[][Theorem 2.1.15]{nesterov2018lectures} 
we get for $\gamma \leq m/(dL)$
\begin{equation} \label{eq:help:ula:2}
\|
y-x
- \gamma\frac{d}{m} (\nabla U(x)-\nabla U(y))
\|^2
\leq
(1- \gamma \frac{d}{m}\lambda)
\|
y-x
\|^2\,.
\end{equation}
The combination of \eqref{eq:help:ula:1} and \eqref{eq:help:ula:2} gives
\begin{align*}
 W_2^2(\delta_{x} P^{SULA},\delta_{y} P^{SULA}) 
\leq &
\frac{(d-m)}{d}\|
y-x\|^2
+\frac{m}{d}\left(1- \gamma \frac{d}{m}\lambda\right)
\|
y-x
\|^2 \\
= & 
\left(1 - \gamma \lambda  \right)
\|
y-x
\|^2,
\end{align*}
as desired.
\end{proof}

\subsection{Proof of Proposition \ref{prop:pi:inv:nsmala}}
\begin{proof}
    Let $A,B \subset \R^d$ with $A \cap B = \emptyset$. It follows from a direct application of Fubini's theorem that 
    \begin{align*}
        \int  \mathbbm{1}_{B}(x) P^{\scriptscriptstyle SMALA}_{\sigma}(x,A) \pi(dx) 
        = &
        \int_{\V(\R^d)} \Big( \int_{\R^d \times \R^d}
        \mathbbm{1}_{B}(x) 
        \mathbbm{1}_{A}(y) 
        \min
        \left(
        \pi(dx) Q_{\bV}(x,dy),
        \pi(dy) Q_{\bV}(y,dx)
        \right)
        \Big)
        \nu(d\bV)
        \\
        = &
        \int  \mathbbm{1}_{A}(y)  P^{\scriptscriptstyle SMALA}_{\sigma}(y,B)\pi(dy),
    \end{align*}
which proves the proposition.
\end{proof}
\subsection{Proof of Lemma \ref{lemma:prop:smala}}
\begin{proof}
We start from the marginal distribution of $s$. With $Q_{\bV}(x,dy)$ as in Algorithm \ref{alg:smala}, it follows from the Gaussianity of $y$ that $s$ is Gaussian with mean 
\begin{align*}
\E(s) = & \bV^{\top} \E(y) =  \bV^{\top} x - \frac{\sigma^2}{2} \bV^{\top} \hat{\nabla}_{\bV} U(x),
\end{align*}
and covariance matrix $\V(s) =  \bV^{\top} \V(y) \bV =  \sigma^2 \bV^{\top} \bV$.
This proves \eqref{eq:prop:s:smala}. Similarly, since $\bV^{\perp}$ and $\bV$ have orthogonal columns, $s^{\perp}$ is a $(d-m)$-dimensional Gaussian with mean 
\begin{align*}
\E(s^{\perp}) = & (\bV^{\perp})^{\top} \E(y)\\
= & (\bV^{\perp})^{\top} x
            - \frac{\sigma^2}{2} (\bV^{\perp})^{\top} \hat{\nabla}_{\bV} U(x)\\
         = & (\bV^{\perp})^{\top} x
         - \frac{\sigma^2 c_{\nu}}{2} (\bV^{\perp})^{\top} \bV \bV^\top \nabla U(x) \\
         = &  (\bV^{\perp})^{\top} x,
    \end{align*}
    and covariance matrix $\V(s^{\perp}) =  (\bV^{\perp})^{\top} \V(y) (\bV^{\perp})=  \sigma^2 \I_{d-m}$, which proves \eqref{eq:prop:sp:smala}. Finally, note that 
    \begin{align*}
        \E\Big( (s - \E(s) )(s^{\perp} - \E(s^{\perp}))^{\top} \Big) = &\bV^{\top}( \E(y y^{\top}) - \E(y)\E(y^{\top}) )\bV^{\perp}\\
        =& \sigma^2 \bV^{\top} \bV^{\perp} = \mathbf{0}_{m\times(d-m)},
    \end{align*}
    with the last equality following again from the fact that $\bV^{\perp}$ and $\bV$ have orthogonal columns and with $\mathbf{0}_{m\times(d-m)}$ denoting the $m\times(d-m)$ matrix of zeros. This concludes the proof of the lemma.
\end{proof}

\subsection{Proof of Theorem \ref{theo:gap:smala}}
\begin{proof}
    Recall that, for $x \sim \pi$, $s^{\perp} = (\bV^{\perp})^{\top} x$ and $s = \bV^{\top} x$. Moreover, $\pi_{\bV^{\perp}}(ds^{\perp})$ denotes the marginal distribution of $s^{\perp}$ and $\pi_{\bV,s^{\perp}}(
    ds)$ the conditional distribution of $s$ given $s^{\perp}$ having density $\pi_{\bV,s^{\perp}}(s)  \propto \exp(-U(\bV^{\perp} s^{\perp} + \bV s)) $. Using Lemma \ref{lemma:prop:smala}, $P^{\scriptscriptstyle SMALA}_{\sigma}$ can be re-written as
    \begin{align} 
       P^{\scriptscriptstyle SMALA}_{\sigma}(x,A) 
       = &
       \int_{\R^{d-m} \times \R^m } \mathbbm{1}_{A}(
       \bV^{\perp} r^{\perp} + \bV r )
       Q_{\bV^{\perp} }(s^{\perp}, dr^{\perp} )
       Q_{\bV}((s,s^{\perp}), dr ) \nonumber \\
       & \times \min \left( 1, 
       \frac{
       \pi_{\bV^{\perp}}(dr^{\perp})
       \pi_{\bV,r^{\perp}}(dr)
       Q_{\bV}((r,r^{\perp}), ds )
       }{
       \pi_{\bV^{\perp}}(ds^{\perp})
       \pi_{\bV,s^{\perp}}(ds)
       Q_{\bV}((s,s^{\perp}), dr )
       }
       \right) \nu(d \bV) \label{eq:help1:cond:smala},
    \end{align}
    for every $A \subseteq \R^{d}\backslash\{x\}$, with $Q_{\bV^{\perp} }(s^{\perp}, dr^{\perp} ) =  \phi_{d-m}\big(dr^{\perp}; s^{\perp} , \sigma^2 \I_{d-m})$
    and
    $$Q_{\bV}((s,s^{\perp}), dr ) 
    =
    \phi_{m}\big( dr; s + (\sigma^2/2) \bV^{\top} \hat{\nabla}_{\bV} U( \bV s + \bV^{\perp} s^{\perp}) , \sigma^2\I_{m} ),$$
    where the covariance of $Q_{\bV}$ is $\sigma^2\I_{m}$ since, by assumption, each column of $\bV$ is an element of the canonical basis $\mathbf{E}$, which implies $\sigma^2 \bV^{\top} \bV = \sigma^2\I_{m}$.
    Recall that $Gap_{P^{\scriptscriptstyle SMALA}_{\sigma}} \leq 2 \Phi_{P^{\scriptscriptstyle SMALA}_{\sigma}}$ \citep[][Lemma 5]{andrieu2024explicit} where 
    \begin{equation} \label{eq:def:cond}
      \Phi_{P^{\scriptscriptstyle SMALA}_{\sigma}} = 
      \inf
      \left \lbrace
        \frac{
        \int_A 
        P^{\scriptscriptstyle SMALA}_{\sigma}(x, A^c)
        \pi(dx)
        }
        {
        \pi(A)
        }
        ,
        \, 
        A \subset \R^d,
        \,
        0 < \pi(A) \leq 1/2
      \right \rbrace,
    \end{equation}
    is the conductance of $P^{\scriptscriptstyle SMALA}_{\sigma}$.
    Let
    $\sA_{1/4} = \{ A \subset \R^d \, : \, 1/4 \leq \pi(A) < 1/2  \}$.
    By \eqref{eq:def:cond}, $P^{\scriptscriptstyle SMALA}_{\sigma}$ falling in the class described by  \eqref{eq:kernel:prep:nmala} and Lemma \ref{lemma:deco:kernels} we get
    \begin{align}
        \frac{1}{2} Gap_{ P^{\scriptscriptstyle SMALA}_{\sigma}} \leq &
        \inf_{A \in \sA_{1/4}} \frac{
        \int_A 
        P^{\scriptscriptstyle SMALA}_{\sigma}(x, A^c)
        \pi(dx)
        }
        {
        \pi(A)
        } \nonumber \\
        \leq &
        \inf_{A \in \sA_{1/4}}
        4 \int
         \int
         \mathbbm{1}_{A^{\perp}}(s^{\perp}) \pi_{\bV^{\perp}}(ds^{\perp})P^{\scriptscriptstyle RW}_{\bV^{\perp}}(s^{\perp}, \R^{d-m} \backslash \{s^{\perp}\})  \nu(d\bV) 
        \label{eq:help2:cond:smala},
    \end{align}
    with $A^{\perp} = \{ s^{\perp} = (\bV^{\perp})^{\top} x \, : \, x \in A  \}$ and 
    \begin{equation*} 
        P^{\scriptscriptstyle RW}_{\bV^{\perp}}(s^{\perp}, \R^{d-m} \backslash \{s^{\perp} \}) = \int \mathbbm{1}_{\lbrace \R^{d-m} \backslash \{s^{\perp} \} \rbrace }(r^{\perp}) Q_{\bV^{\perp} }(s^{\perp}, dr^{\perp} )
        \min
        \left( 
        1,
        \frac{
        \pi_{\bV^{\perp}}(dr^{\perp})
        }{
        \pi_{\bV^{\perp}}(ds^{\perp})
        }
        \right), 
    \end{equation*}
    being a Gaussian random-walk Metropolis-Hastings kernel targeting $\pi_{\bV^{\perp}}$. Note also that from Assumption \ref{cond:1} and Lemma \ref{lemma:marginal:convex:smooth} the potential of $\pi_{\bV^{\perp}}$ is $m$-concave, $L$-smooth and twice continuously differentiable for every $\bV^{\perp}$. As a consequence, it is possible to apply Lemma \ref{lemma:ave:acc:pr:rw} to \eqref{eq:help2:cond:smala} to obtain
    \begin{align}
        Gap_{P^{\scriptscriptstyle SMALA}_{\sigma}}
        \leq
        24 \exp \Big(
        \frac{- \lambda^3 \sigma^2 (d-m) }{128 L^2 }
        \Big) + 
        32\exp\Big(
        -\frac{(d-m)}{32}
        \Big).  \label{eq:help2:cond:smala:bis}
    \end{align}
    
    Let now $g_a(x) = a^{\top}( x - \E_{\pi}(x))$ where $a \in \R^d$. From the definition of spectral gap in \eqref{eq:def:gap} it follows 
    \begin{align*}
        Gap_{P^{\scriptscriptstyle SMALA}_{\sigma}}
        \leq&
        \inf_{a \in \R^d, \, \|a\| =1} 
        \frac{\int (g_a(x)-g_a(y))^2 \pi(dx) P^{\scriptscriptstyle SMALA}_{\sigma}(x, dy)}{  2 \V_\pi(g_a)} \\
        = &
        \inf_{a \in \R^d, \, \|a\| =1} 
        \frac{\int \lbrace a^{\top}(x-y) \rbrace^2 \pi(dx) P^{\scriptscriptstyle SMALA}_{\sigma}(x, dy)}{ 2 \V_\pi(g_a)}.
\end{align*}
To upper bound the quantity in the right-hand-side of the previous display we choose to fix $a$ to be $a = (1/\sqrt{d})(1,\dots,1)$. Using \eqref{eq:help1:cond:smala} and $\bV \bV^{\top} + \bV^{\perp}(\bV^{\perp})^{\top} = \I_d$ the numerator in the right-hand-side of the previous display can be expressed as
\begin{align}
        & \int \lbrace a^{\top}(x-y) \rbrace^2 \pi(dx) P^{\scriptscriptstyle SMALA}_{\sigma}(x, dy) \nonumber \\
        & \leq
        \int \big \lbrace
         a^{\top} \bV (s - r) 
        \big \rbrace ^2
        \pi_{\bV^{\perp}}(ds^{\perp})
        \pi_{\bV,s^{\perp}}(ds)
        Q_{\bV}((s,s^{\perp}),dr)
        \nu(d\bV)\nonumber \\
        & + \int \big \lbrace
        a^{\top} \bV^{\perp} (s^{\perp} - r^{\perp}) 
        \big \rbrace ^2 \pi_{\bV^{\perp}}(ds^{\perp}) Q_{\bV^{\perp}}(s^{\perp},dr^{\perp})
        \nu(d\bV) \nonumber \\
        & + 2\int
        \big[
         a^{\top} \bV (s - r)  
        \big]
        \big[
        a^{\top} \bV^{\perp} (s^{\perp} - r^{\perp})  
        \big]
        \pi_{\bV^{\perp}}(ds^{\perp})
        \pi_{\bV,s^{\perp}}(ds) Q_{\bV^{\perp}}(s^{\perp},dr^{\perp})
        Q_{\bV}((s,s^{\perp}),dr)\nu(d\bV). \label{eq:help3:cond:smala}
    \end{align}
    We deal with the three terms in the right-hand-side of the previous display separately. The definition of $Q_{\bV^{\perp}}$, implies that, given $s^{\perp}$ and $\bV^{\perp}$, $a^{\top} \bV^{\perp}(s^{\perp} - r^{\perp}) \sim N( 0 , \sigma^2 a^{\top} \bV^{\perp}(\bV^{\perp})^{\top} a  )$ with $ a^{\top} \bV^{\perp}(\bV^{\perp})^{\top} a  \leq \|a\|^2 = 1$.
    As a result, the second element in the right-hand-side of \eqref{eq:help3:cond:smala} is upper bounded by
    \begin{equation} \label{eq:help4:cond:smala}
        \int \big \lbrace
        a^{\top} \bV^{\perp} (s^{\perp} - r^{\perp}) 
        \big \rbrace ^2 \pi_{\bV^{\perp}}(s^{\perp}) Q_{\bV^{\perp}}(s^{\perp},dr^{\perp})\nu(d\bV) \leq \sigma^2. 
    \end{equation}
    Second, from Lemma \ref{lemma:prop:smala}, given $s$, $s^{\perp}$ and $\bV$, the random  variables
    $\big[
         a^{\top} \bV (s - r)  
    \big]
    $
    and
    $
    \big[
        a^{\top} \bV^{\perp} (s^{\perp} - r^{\perp})  
    \big]$
    are independent which implies that the third element in the right-hand-side of \eqref{eq:help3:cond:smala} is zero.
    
   The first element the right-hand-side of \eqref{eq:help3:cond:smala} can be further decomposed as
   {\small
   \begin{align}
   & \int \big \lbrace 
    a^{\top} \bV (s - r)  
    \big \rbrace ^2
    \pi_{\bV^{\perp}}(ds^{\perp})
    \pi_{\bV,s^{\perp}}(ds)
    Q_{\bV}((s,s^{\perp}),dr) 
    \nu(d\bV) \nonumber \\
    =& \int \big \lbrace
    a^{\top}
    \bV 
    \big(s - \frac{\sigma^2}{2} \frac{d}{m} \bV^{\top} \nabla U(\bV 
       s  + \bV^{\perp} s^{\perp})  - r \big)
    \big \rbrace ^2
    \pi_{\bV^{\perp}}(ds^{\perp})
    \pi_{\bV,s^{\perp}}(ds)
    Q_{\bV}((s,s^{\perp}),dr)
    \nu(d\bV) \nonumber \\
    + & \int \big \lbrace
    \frac{\sigma^2}{2}
   \frac{d}{m}
    a^{\top}
    \bV
   \bV^{\top} \nabla U(\bV 
       s  + \bV^{\perp} s^{\perp})
    \big \rbrace ^2
    \pi_{\bV^{\perp}}(ds^{\perp})
    \pi_{\bV,s^{\perp}}(ds)
    Q_{\bV}((s,s^{\perp}),dr)
    \nu(d\bV) \nonumber \\
    + & 2 \int 
    \big \lbrace
    a^{\top}
    \bV
    \big(
    \frac{\sigma^2}{2} \frac{d}{m} \bV^{\top} \nabla U(\bV s  + \bV^{\perp} s^{\perp})
    \big)  
    \big \rbrace 
    \big \lbrace
     a^{\top}
     \bV
    \big(s - 
    \frac{\sigma^2}{2} \frac{d}{m} \bV^{\top} 
    \nabla U(\bV s  + \bV^{\perp} s^{\perp})
    - r \big) 
    \big \rbrace 
    \nonumber
    \\
    & \qquad \times
    \pi_{\bV^{\perp}}(ds^{\perp})
    \pi_{\bV,s^{\perp}}(ds)
    Q_{\bV}((s,s^{\perp}),dr)
    \nu(d\bV).
    \label{eq:help4:cond:smala:bis:1}
    \end{align}  }
   
    Conditioned on $(s, s^{\perp} )$ and $\bV$, Lemma \ref{lemma:prop:smala} implies 
    \begin{equation} \label{eq:help4:1:cond:smala}
       a^{\top} \bV 
       \Big(s - 
       \frac{\sigma^2}{2}
       \frac{d}{m}
       \bV^{\top} 
       \nabla U(\bV 
       s  + \bV^{\perp} s^{\perp})
       -
       r\Big)
       \sim
        N( 0 , \sigma^2 a^{\top} \bV \bV^{\top} a ).
    \end{equation}
    Moreover, from Assumption \ref{cond:2} and $\| a \| = 1$
    \begin{equation} \label{eq:help4:2:cond:smala}
        \int \big(
        a^{\top} \bV \bV^{\top} a
        \big)
        \nu(d \bV) = 
        \frac{m}{d} \|a\|^2
        =
        \frac{m}{d}.
    \end{equation}
   From \eqref{eq:help4:1:cond:smala} and \eqref{eq:help4:2:cond:smala} it follows
    \begin{align} 
    &\int \big \lbrace
    a^{\top}
    \bV 
    \big(s - \frac{\sigma^2}{2} \frac{d}{m} \bV^{\top} U(\bV 
       s  + \bV^{\perp} s^{\perp})  - r \big)
    \big \rbrace ^2
    \pi_{\bV^{\perp}}(ds^{\perp})
    \pi_{\bV,s^{\perp}}(ds)
    Q_{\bV}((s,s^{\perp}),dr)
    \nu(d\bV) \nonumber \\
    = & \,
    \sigma^2  \int \big( a^{\top}\bV \bV^{\top}a \big) \nu(d\bV) 
    \leq \,
    \sigma^2  \frac{m}{d}. \label{eq:help4:cond:smala:bis:2}
    \end{align}
Similarly,
    \begin{align}
    & \int 
    \big \lbrace
    a^{\top}
    \bV
    \big(
    \frac{\sigma^2}{2} \frac{d}{m} \bV^{\top} \nabla U(\bV s  + \bV^{\perp} s^{\perp})
    \big)  
    \big \rbrace 
    \big \lbrace
     a^{\top}
     \bV
    \big(s - 
    \frac{\sigma^2}{2} \frac{d}{m} \bV^{\top} 
    \nabla
    U(\bV s  + \bV^{\perp} s^{\perp})
    - r \big) 
    \big \rbrace 
    \nonumber
    \\
    &
    \qquad
    \times 
    \pi_{\bV^{\perp}}(ds^{\perp})
    \pi_{\bV,s^{\perp}}(ds)
    Q_{\bV}((s,s^{\perp}),dr)
    \nu(d\bV)
    =
    0.
   \label{eq:help4:cond:smala:bis:3}
    \end{align}
    Finally, using that, by assumption, every column of $\bV$ belongs to the canonical basis of $\R^d$ combined with $ x = \bV s + \bV^{\perp} s^{\perp} $, $a = (1/\sqrt{d})(1,\dots,1)$, Cauchy–Schwarz inequality and Assumption \ref{cond:2} give
    \begin{align}
    & \int \big \lbrace
   \frac{\sigma^2}{2}
   \frac{d}{m}
    a^{\top}
    \bV
   \bV^{\top} \nabla U(\bV 
       s  + \bV^{\perp} s^{\perp})
    \big \rbrace ^2
    \pi_{\bV^{\perp}}(ds^{\perp})
    \pi_{\bV,s^{\perp}}(ds)
    Q_{\bV}((s,s^{\perp}),dr)
    \nu(d\bV) \nonumber \\
    \leq & \,
    \frac{\sigma^4}{4} 
    \frac{d^2}{m^2}
    \int
    (a^{\top} \bV \bV^{\top} a)
    ( \nabla U(x)^{\top} \bV \bV^{\top} \nabla U(x) ) 
    \nu(\bV)
    \pi(dx)
    \nonumber
    \\
    =
    &
    \, \frac{\sigma^4}{4}
    \int
    \| \nabla U(x) \|^2
    \pi(dx) 
    \, \leq \,
    \sigma^4
    \frac{L^2}{\lambda}
     \frac{d}{4}, \label{eq:help4:cond:smala:bis:4} 
    \end{align}
    with the last inequality following from the fact that, from Assumption \ref{cond:1} and Lemma \ref{lemma:norm:ucon}, $\| \nabla U(x) \|^2 \leq L^2 \|x - x_* \|^2$ and $\|x - x_* \|^2 \overset{s}{\preccurlyeq} \lambda^{-1} \chi^2_d $ where $\chi_d^2$ is a chi-squared random variable with $d$ degrees of freedom, $x_* \in \R^d$ is the minimizer of $U(x)$ and $\overset{s}{\preccurlyeq}$ denotes the stochastic ordering. Thus, combining \eqref{eq:help4:cond:smala:bis:2}-\eqref{eq:help4:cond:smala:bis:3}-\eqref{eq:help4:cond:smala:bis:4} we obtain
    \begin{equation} \label{eq:help5:cond:smala}
    \int \big \lbrace 
    a^{\top} \bV (s - r)  
    \big \rbrace ^2
    \pi_{\bV^{\perp}}(ds^{\perp})
    \pi_{\bV,s^{\perp}}(ds)
    Q_{\bV}((s,s^{\perp}),dr) 
    \nu(d\bV)
    \leq
    \sigma^2 \frac{m}{d} +
    \sigma^4
    \frac{L^2}{\lambda}
    \frac{d}{4}.
    \end{equation}

    Finally, $\mathrm{Var}({a^{\top}}  X) \geq \|a\|^2/L= 1/L$ from Assumption \ref{cond:1} and Equation 10.25 of \citet{saumard2014log}. Combining this with \eqref{eq:help2:cond:smala:bis}, \eqref{eq:help4:cond:smala} and \eqref{eq:help5:cond:smala} gives 
    \begin{equation} \label{eq:help6:cond:smala}
    Gap_{P^{\scriptscriptstyle SMALA}_{\sigma}}
    \leq
    \min\Big(
    f_1(\sigma)
    ,
    f_2(\sigma)
    \Big),
    \end{equation}
    where
    \begin{align*}
    f_1(\sigma) 
    = & \,
    24 \exp \Big( 
    \frac{- \lambda^3 \sigma^2 (d-m) }{128 L^2 }
    \Big) + 
    32\exp\Big(
    -\frac{(d-m)}{32}
    \Big), \\
    f_2(\sigma) 
    = & \,
    \sigma^2 \frac{L}{2} \frac{m}{d} +
    \frac{\sigma^4}{8} \frac{d L^3}{ \lambda}.
    \end{align*}
    Note that $f_1(\sigma)$ is positive, continuous and monotonically decreasing in $\sigma$ while $f_2(\sigma)$ is positive, continuous and monotonically increasing in $\sigma$. Moreover, $\lim_{\sigma \to 0^+} f_{1}(\sigma)/ f_{2}(\sigma) = + \infty$ and $\lim_{\sigma \to 0} f_{1}(\sigma)/ f_{2}(\sigma) = 0$. 
    This and \eqref{eq:help6:cond:smala} imply
   \begin{equation} \label{eq:help7:cond:smala}
   \sup_{\sigma > 0 } Gap_{P^{\scriptscriptstyle SMALA}_{\sigma}}
    \leq 
    \sup_{\sigma > 0 }
    \min\Big(
    f_1(\sigma)
    ,
    f_2(\sigma)
    \Big)
    \leq
    \max \Big(
    f_1(\sigma)
    ,
    f_2(\sigma)
    \Big),
    \end{equation}
   for every $\sigma > 0 $.
   To prove \eqref{eq:theo:gap:smala:result2}, let $\sigma^2_* = 128 L^2 \log(d-m) /\lbrace \lambda^3 (d-m) \rbrace$. The fact that $m < c_0 d$ implies $m/(d-m) < c_1$ (for $c_1 = c_0/(1-c_0)$). Combining this with \eqref{eq:help7:cond:smala} give
    \begin{align*}
        \sup_{\sigma}
        Gap_{P^{\scriptscriptstyle SMALA}_{\sigma}} 
        \leq &
        \max \Big(
        f_1(\sigma^*)
        ,
        f_2(\sigma^*)
        \Big) \\
        = &
        \max\Big(
        \frac{24}{d-m} + 
        32\exp\Big(
        -\frac{(d-m)}{32}\Big),
         \frac{64 c_0 L^3 \log(d-m)}{(d-m)} + 
          \frac{2048 L^{7} \log(d-m)^2}{ \lambda^{7} (d-m)^2} d 
        \Big) \\
        = & O\Big( \frac{\log(d-m)^2}{(d-m)} \Big),
    \end{align*}
    which concludes the proof of the theorem.
    \end{proof}

\subsection{Proof of Theorem \ref{theo:appr:hr}}
\begin{proof}
    Given a generic probability measure $\mu \in \sP(\R^d)$ and $\bV \in \V_m(\R^d)$, we denote with $\mu_{\bV^{\perp}}$ the marginal distribution of $s^{\perp} = \bV^{\perp} x $ for $x \sim \mu$ and with $\mu_{\bV,s^{\perp}}$ the conditional distribution of $s = \bV^{\top} x$ given $s^{\perp}$. We start by considering the case $t = 1$. Exploiting the structure of $P(x,\cdot)$ given in \eqref{eq:kernel:bbhar}, the convexity and the chain-rule properties of the Kullback-Leibler divergence together with the fact that \smash{$(\mu P_{\bV})_{\bV^{\perp}} 
    =
    \mu_{\bV^{\perp}} $} and \smash{$(\mu P)_{\bV, s^{\perp}} 
    =
    \eta_{\bV, x}$}
    (with \smash{$\eta_{\bV,x}$} defined as in \eqref{eq:cond:theorem:appr:hr} )
    we get 
    \begin{align*}
        \kl(\mu P \mid \pi )
        \leq
        &
        \int \kl(\mu P_{\bV} \mid \pi ) \nu(d\bV)
        \\
        = &  \int \kl( \mu_{\bV^{\perp}}
        \mid 
        \pi_{\bV^{\perp}} ) \nu(d\bV) 
        + \int \E_{s^{\perp}
        \sim
        \mu_{\bV^{\perp}}}
        \left[
        \kl(
        \eta_{\bV, x }
        \mid
        \pi_{\bV, s^{\perp}}
        )
        \right]
        \nu(d\bV).
    \end{align*}
    Under Assumptions \ref{cond:1}-\ref{cond:2}, it follows from the results given in the proof of Corollary 8.2 of \citet{ascolani2024entropy} that
    \begin{equation*}
        \int \kl(  \mu_{\bV, s^{\perp}}
        \mid 
         \pi_{\bV, s^{\perp}} ) \nu(d\bV) 
        \leq
        \left(
        1 - \frac{1}{\kappa} \frac{m}{d} 
        \right)
        \kl( \mu \mid \pi ).
    \end{equation*}
    From \eqref{eq:cond:theorem:appr:hr}, $\kl(
        \eta_{\bV, x }
        \mid
          \pi_{\bV, s^{\perp} }
         ) < \delta$, which implies
    \begin{equation*}
        \kl(\mu P \mid \pi ) \leq \left( 1 - \frac{1}{\kappa} \frac{m}{d} \right)\kl( \mu \mid \pi ) + \delta.
    \end{equation*}
   Iterating the same reasoning for $t>1$ (replacing $\mu$ with $\mu P^{t-1}$) gives
    \begin{align}
        \kl(\mu P^t \mid \pi ) \leq & \Big( 1 - \frac{1}{\kappa} \frac{m}{d} \Big)^t \kl(\mu \mid \pi ) + \delta \sum_{i = 1}^{t-1} \Big( 1 - \frac{1}{\kappa} \frac{m}{d} \Big)^i \nonumber \\
        \leq & \Big( 1 - \frac{1}{\kappa} \frac{m}{d} \Big)^t \kl(\mu \mid \pi ) +\frac{\delta \kappa d }{m}, \label{eq:help0:thm:appr:hr}
    \end{align}
    with the last inequality following from $1 - m/(\kappa d) < 1$. This proves \eqref{eq:result1:theorem:appr:hr}.  
    
    To conclude, observe that from $(1-x) \leq \exp(-x)$. As a result, the first term in the right-hand-side of \eqref{eq:result1:theorem:appr:hr} is upper bounded by $\exp(-t m/(\kappa d))$. Thus,
    \begin{equation} \label{eq:help1:thm:appr:hr}
       \kl(\mu P^t \mid \pi ) \leq \frac{\epsilon}{2}, \quad  \mathrm{if} \quad t \geq (\kappa d)/m \log(2 \kl(\mu\mid\pi)/\epsilon),
    \end{equation}
    for any $\epsilon>0$.
    The theorem is proven by combining \eqref{eq:help0:thm:appr:hr} with \eqref{eq:help1:thm:appr:hr} and by choosing $\delta \leq \epsilon m/(2 \kappa d) $
\end{proof}

\subsection{Proof of Corollary \ref{corol:mix:hr}}
\begin{proof}
    The result follows directly from Theorem \ref{theo:appr:hr} and Assumption \eqref{cond:3}.
\end{proof}

\subsection{Proof of Proposition \ref{prop:rshmc}}
\begin{proof}
  To prove the statement we show that, given $\bV \in \V_m(\R^{d})$ and $x \in \R^d$, every iteration of Algorithm \ref{alg:rshmc} is $\pi_{\bV,s^{\perp}}$-invariant. With this regard, first note that if we define $T_L$ to be the map for which $(s_L,k'_L) = T_L(s_0,k)$ (where $(s_0,k)$ and $(s_L,k'_L)$ are as described in Algorithm \ref{alg:rshmc}) then $T_L^{-1} = T_L$. Moreover, $T_L$ is the result of multiple compositions of the functions $F_{1}(s,k) = (s + k, k)$ and $F_{2,\gamma_*}(s,k) = ( s \,, \, k + \gamma_* \hat{\nabla}_{\bV} U(x + \bV s) )$, $\gamma_* > 0$. The determinants of the Jacobian matrices of $F_{1}$ and $F_{2,\gamma_*}$, satisfy $|J_{F_{1}}(s,k)| = 1$ and $|J_{F_{2,\gamma_*}}(s,k)| = 1$ for every $(s,k) \in \R^{2m}$ which implies (in view of the chain rule for the Jacobian matrix) $|J_{T_L(s_0,k)}| = 1$ for every $(s_0,k) \in \R^{2m}$. As a consequence, the transition rule of Algorithm \ref{alg:rshmc} belongs to the family of deterministic proposals discussed in \citet{tierney1998note}. Thus, choosing 
  $$ 
  \min\left(
  1,
  \exp\Big( U(x + \bV s_0) - U(x + \bV s_L) + 0.5(\|k\|^2 - \|k'_{L}\|^2) \Big)\right),
$$
  as acceptance probability makes the transition kernel reversible with respect to the joint distribution $\pi_{\bV,s^{\perp}} \otimes N_m(0,\I_d)$ and, therefore, $\pi_{\bV,s^{\perp}}$-reversible. This concludes the proof of the proposition.
\end{proof}

\section{Technical results}
\subsection{Efficiency of decomposable kernels}
Let $(X_1, X_2) = X \sim \pi$, $\pi \in \mathcal{P}(\R^d)$. We denote with $\pi_1 \in \mathcal{P}(\R^{d-p})$ the marginal distribution of $X_1$ and with $\pi_{2,x_1}$ the conditional distribution of $X_2$ given $X_1 = x_1$.  In the following, we consider kernels of the form 
\begin{align}
    P(x,A) = & \int_{\R^{m\times (d-m)}} \mathbbm{1}_A( (y_1,y_2) )
    Q_{1}(x_1, dy_1)
    Q_{2,y_1}(x, dy_2) \nonumber \\
    & \times
    \min\Big(1, 
    \frac{
    \pi_{1}(dy_1)
    \pi_{2,y_1}(dy_2)
    Q_{1}(y_1, dx_1)
    Q_{2,x_1}(y,dx_2)
    }{
    \pi_{1}(dx_1)
    \pi_{2,x_1}(dx_2)
     Q_{1}(x_1, dy_1)
    Q_{2,y_1}(x, dy_2)
    }  
 \Big), \label{eq:kernel:prep:nmala}
\end{align}
where $Q_{1}\, : \, \R^{d-m} \to \sP\big(\R^{d-m} \big)  $ and $Q_{2,y_1}\, : \, \R^m \times \R^{d-m} \to \sP\big(\R^{m} \big) $ are absolutely continuous with respect to the Lebesgue measure. Note that $Q_{1}$ depends only on the previous value of $x_1$ while $Q_{2,y_1}$ can depend on $x_1$, $x_2$ and $y_1$.

\begin{lemma} \label{lemma:deco:kernels}
    Let $\pi\in\sP(\R^d)$ be absolutely continuous with respect to Lebesgue and $P$ as in \eqref{eq:kernel:prep:nmala}. Then, for every $x_1 \in \R^{d-m}$, 
    \begin{equation} \label{eq:result1:lemma:deco:kernels}
        \int_{\R^{m} } \pi_{2,x_1}(dx_2) P( (x_1,x_2), \R^{d} \backslash \{x\})
        \leq
        P_1(x_1, \R^{d-m}\backslash \{x_1\}),
    \end{equation}
    where 
    \begin{align*}
        P_1(x_1, \R^{d-m} \backslash \{x_1\} ) =
        \int 
        \mathbbm{1}_{\lbrace \R^{d-m} \backslash \{x_1\} \rbrace }(x_1)
        Q_{1}(x_1,dy_1)
        \min\Big(1, 
        \frac{
        \pi_{1}(dy_1)
        Q_{1}(y_1, dx_1 )
        }{
        \pi_{1}(dx_1)
        Q_{1}(x_1 , dy_1)
        }  
        \Big).
\end{align*}
\end{lemma}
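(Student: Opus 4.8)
The plan is to reduce the claimed inequality to a pointwise bound on acceptance ratios, obtained after stripping off the common proposal in the first block of coordinates, and then to conclude by a short two-case argument. First I would note that, since $Q_1(x_1,\cdot)$ is absolutely continuous with respect to Lebesgue measure, a proposal drawn from $Q_1(x_1,dy_1)Q_{2,y_1}(x,dy_2)$ has $y_1\neq x_1$ almost surely, hence $(y_1,y_2)\neq x$; therefore the indicator $\mathbbm{1}_{\R^d\setminus\{x\}}$ may be dropped and
$$
P\big(x,\R^d\setminus\{x\}\big)=\int Q_1(x_1,dy_1)\,Q_{2,y_1}(x,dy_2)\,\alpha\big(x,(y_1,y_2)\big),
$$
where $\alpha$ denotes the Metropolis--Hastings acceptance probability in \eqref{eq:kernel:prep:nmala}. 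The same remark makes the indicator in the definition of $P_1$ immaterial (the indicator there is naturally read as acting on $y_1$, and $Q_1(x_1,\cdot)$ is non-atomic), so $P_1(x_1,\R^{d-m}\setminus\{x_1\})=\int Q_1(x_1,dy_1)\,\alpha_1(x_1,y_1)$, with $\alpha_1$ the acceptance probability of the marginal Metropolis--Hastings kernel with proposal $Q_1$ and target $\pi_1$. Expanding the left-hand side of \eqref{eq:result1:lemma:deco:kernels} and using Tonelli to pull the factor $Q_1(x_1,dy_1)$ outside, I would reduce the lemma to showing, for $Q_1(x_1,\cdot)$-almost every $y_1$,
$$
\int \pi_{2,x_1}(dx_2)\int Q_{2,y_1}\big((x_1,x_2),dy_2\big)\,\alpha\big((x_1,x_2),(y_1,y_2)\big)\;\le\;\alpha_1(x_1,y_1).
$$

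The next step is to factorize the acceptance ratio. Writing the four kernels and the target marginals/conditionals via their Lebesgue densities, the ratio inside $\alpha((x_1,x_2),(y_1,y_2))$ splits as $r_1 r_2$, where $r_1=[\pi_1(y_1)q_1(y_1,x_1)]/[\pi_1(x_1)q_1(x_1,y_1)]$ depends only on $(x_1,y_1)$, so that $\alpha_1(x_1,y_1)=\min(1,r_1)$, and
$$
r_2=\frac{\pi_{2,y_1}(y_2)\,q_{2,x_1}\big((y_1,y_2),x_2\big)}{\pi_{2,x_1}(x_2)\,q_{2,y_1}\big((x_1,x_2),y_2\big)}.
$$
The identity I would rely on is $\pi_{2,x_1}(x_2)\,q_{2,y_1}\big((x_1,x_2),y_2\big)\,r_2=\pi_{2,y_1}(y_2)\,q_{2,x_1}\big((y_1,y_2),x_2\big)$, whose integral over $(x_2,y_2)$ equals $1$ because $q_{2,x_1}((y_1,y_2),\cdot)$ and $\pi_{2,y_1}$ are probability densities; symmetrically, $\int\pi_{2,x_1}(dx_2)\int Q_{2,y_1}((x_1,x_2),dy_2)=1$.

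To finish, I would split on the sign of $r_1-1$. If $r_1\ge1$, then $\alpha_1(x_1,y_1)=1$ and, bounding $\alpha((x_1,x_2),(y_1,y_2))=\min(1,r_1r_2)\le1$, the displayed left-hand side is at most $\int\pi_{2,x_1}(dx_2)\int Q_{2,y_1}((x_1,x_2),dy_2)=1=\alpha_1(x_1,y_1)$. If $r_1<1$, then $\alpha_1(x_1,y_1)=r_1$, and bounding $\min(1,r_1r_2)\le r_1r_2$ and invoking the identity above gives left-hand side $\le r_1\int\pi_{2,y_1}(y_2)\,q_{2,x_1}((y_1,y_2),x_2)\,dx_2\,dy_2=r_1=\alpha_1(x_1,y_1)$. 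Integrating back against $Q_1(x_1,dy_1)$ then yields \eqref{eq:result1:lemma:deco:kernels}.

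The only genuinely delicate point, and the one I would be most careful about, is the measure-theoretic handling of the Radon--Nikodym-type expression in \eqref{eq:kernel:prep:nmala} when some density vanishes: I would restrict to the set where the denominator is strictly positive (outside it $\alpha$ is $0$ by convention) and check that the factorization $r=r_1r_2$ and the identity $\pi_{2,x_1}(x_2)q_{2,y_1}(\cdot)r_2=\pi_{2,y_1}(y_2)q_{2,x_1}(\cdot)$ remain valid there, so that the two integrals above are controlled by $1$ as stated. Beyond this routine bookkeeping I do not anticipate any real obstacle; the substance of the argument is the product decomposition of the acceptance ratio together with the two-line case split.
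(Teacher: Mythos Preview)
Your proof is correct and follows essentially the same route as the paper's: both rely on the factorization of the acceptance ratio into $r_1 r_2$ and the identity $\int \pi_{2,x_1}(dx_2)\,Q_{2,y_1}(x,dy_2)\,r_2 = \int \pi_{2,y_1}(dy_2)\,Q_{2,x_1}(y,dx_2)=1$. The only difference is cosmetic: where you do an explicit two-case split on $r_1\gtrless 1$, the paper invokes Jensen's inequality for the concave map $t\mapsto\min(1,t)$ with respect to the probability measure $\pi_{2,x_1}(dx_2)\,Q_{2,y_1}(x,dy_2)$, obtaining $\E[\min(1,r_1 r_2)]\le\min(1,r_1\,\E[r_2])=\min(1,r_1)$ in one line; your case split is effectively a hands-on proof of that Jensen step.
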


\begin{proof}
    From the definition of $P$ in \eqref{eq:kernel:prep:nmala} and $\pi$ being absolutely continuous with respect to the Lebesgue measure, Fubini theorem and an application of Jensen inequality to $\min(1,t)$ give
    \begin{align*}
    \int_{ \R^{m} }
    \pi_{2,x_1}(d x_2) P(x, \R^{d} \backslash \{x\}) 
    = & \int_{ \R^{m} }
    \int_{ \R^{d} \backslash \{x\}}
    \pi_{2,x_1}(dx_2)  
    Q_{1}(x_1, dy_1)
    Q_{2,y_1}(x, dy_2) \nonumber \\
    & \times
    \min\Big(1, 
    \frac{
    \pi_{1}(dy_1)
    \pi_{2,y_1}(dy_2)
    Q_{1}(y_1, dx_1)
    Q_{2,x_1}(y,dx_2)
    }{
    \pi_{1}(dx_1)
    \pi_{2,x_1}(dx_2)
     Q_{1}(x_1, dy_1)
    Q_{2,y_1}(x, dy_2)
    } 
    \Big) \\
    \leq &
    \int_{\R^{d-m} \backslash \{x_1\}} 
    Q_{1}(x_1, dy_1)
    \min\Big(1,
    \frac{
    \pi_{1}(dy_1)
    Q_{1}(y_1, dx_1)
    }{
    \pi_{1}(dx_1)
     Q_{1}(x_1, dy_1)
    } 
    \\
    & \times  
    \int_{\R^{m} \times \R^{m} }
    \pi_{2,y_1}(dy_2)
     Q_{2,x_1}(y,dx_2)
    \Big) \\
    \leq & 
    \int_{\R^{d-m} \backslash \{x_1\} } 
    Q_{1}(x_1, dy_1)
    \min\Big(1,
    \frac{
    \pi_{1}(dy_1)
    Q_{1}(y_1, dx_1)
    }{
    \pi_{1}(dx_1)
     Q_{1}(x_1, dy_1)
    } 
    \Big),
    \end{align*}
    with the last inequality following from $\int_{\R^{m} \times \R^{m}}
    \pi_{2,y_1}(dy_2)
     Q_{2,x_1}(y,dx_2) = 1$. 
     This concludes the proof.
\end{proof}

\subsection{Additional technical results}
\begin{lemma}[Average acceptance RWM]
\label{lemma:ave:acc:pr:rw}
    Let $\pi$ satisfying Assumption \ref{cond:1} and $P^{\textsc{rw}}$ being a random-walk Metropolis–Hastings transition kernel
    \begin{equation} \label{eq:lemma:average:acc:rw:kernel}
        P^{\textsc{rw}}(x, A )
        =
        \int_{\R^d} \mathbbm{1}_{A}(x) \phi_d(y; x, \sigma^2 \I_d)
        \min\Big(
        1,
        \exp( U(x) - U(y) )
        \Big)
        dy, 
    \end{equation}
    for every $A \subset \R^d$. Then 
      \begin{equation} 
      \label{eq:lemma:average:acc:rw:result1}
         P^{\textsc{rw}}(x, \R^d \backslash \{x\} )
         \leq
         \exp\left( -\frac{\lambda \sigma^2 d}{4} \right)
         +2 \left \lbrace
         \exp\left(
         - \frac{(\lambda \sigma d)^2}{64 \|\nabla U(x)\|^2}
         \right) + 
         \exp\left( - \frac{d}{32}  \right) \right \rbrace,
    \end{equation}
    and
    \begin{equation} \label{eq:lemma:average:acc:rw:result2}
        \int_{\R^d}
        \pi(dx)
         P^{\textsc{rw}}(x, \R^d \backslash \{x\} ) 
         \leq
        3 \exp \Big(
        \frac{- \lambda^3 \sigma^2 d }{128 L^2 }
        \Big) + 
        4\exp\Big(
        -\frac{d}{32}
        \Big).
    \end{equation}
\end{lemma}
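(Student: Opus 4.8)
The plan is to prove the pointwise bound \eqref{eq:lemma:average:acc:rw:result1} by a Gaussian concentration argument based on strong convexity, and then to obtain the averaged bound \eqref{eq:lemma:average:acc:rw:result2} by integrating \eqref{eq:lemma:average:acc:rw:result1} against $\pi$ and using an a priori control on $\|\nabla U(X)\|$ for $X\sim\pi$.

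\emph{Pointwise bound.} I would first rewrite the rejection-less mass of the kernel as a Gaussian expectation, $P^{\textsc{rw}}(x,\R^d\setminus\{x\}) = \E_{Z\sim N_d(0,\I_d)}[\min(1,\exp(U(x)-U(x+\sigma Z)))]$, and then lower bound the increment via the $\lambda$-convexity part of Assumption \ref{cond:1}: $U(x+\sigma Z)-U(x)\ge \sigma\langle\nabla U(x),Z\rangle+(\lambda\sigma^2/2)\|Z\|^2$. Since $t\mapsto\min(1,e^t)$ is nondecreasing, this yields the deterministic bound $\min(1,e^{U(x)-U(x+\sigma Z)})\le \min(1,\exp(-\sigma\langle\nabla U(x),Z\rangle-(\lambda\sigma^2/2)\|Z\|^2))$. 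I would then isolate a good event of the form $\{\|Z\|^2\ge\beta d\}\cap\{\langle\nabla U(x),Z\rangle\ge-\alpha\}$, with thresholds $\beta\in(1/2,1)$ and $\alpha>0$ chosen so that on this event the exponent on the right-hand side is at most $-\lambda\sigma^2 d/4$; there $\min(1,\cdot)\le e^{-\lambda\sigma^2 d/4}$, while on the complement $\min(1,\cdot)\le1$. A union bound gives $P^{\textsc{rw}}(x,\R^d\setminus\{x\})\le e^{-\lambda\sigma^2 d/4}+P(\|Z\|^2<\beta d)+P(\langle\nabla U(x),Z\rangle<-\alpha)$, after which the chi-squared lower-tail inequality controls the first probability (producing the $\exp(-d/32)$ term) and the Gaussian tail inequality $P(N(0,\|\nabla U(x)\|^2)<-\alpha)\le\exp(-\alpha^2/(2\|\nabla U(x)\|^2))$ controls the second (producing the $\exp(-(\lambda\sigma d)^2/(64\|\nabla U(x)\|^2))$ term), with the prefactor $2$ absorbing the leftover slack in the constants (equivalently, using two-sided tail bounds).

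\emph{Averaged bound.} To get \eqref{eq:lemma:average:acc:rw:result2} I would integrate \eqref{eq:lemma:average:acc:rw:result1} in $x$ against $\pi$. The terms $e^{-\lambda\sigma^2 d/4}$ and $2e^{-d/32}$ are constants, and since $\kappa=L/\lambda\ge1$ one has $e^{-\lambda\sigma^2 d/4}\le e^{-\lambda^3\sigma^2 d/(128L^2)}$; the remaining task is to bound $\int\pi(dx)\exp(-(\lambda\sigma d)^2/(64\|\nabla U(x)\|^2))$. Here $L$-smoothness (Assumption \ref{cond:1}) and $\nabla U(x_*)=0$ give $\|\nabla U(x)\|\le L\|x-x_*\|$, so the integrand is at most $\exp(-(\lambda\sigma d)^2/(64L^2\|x-x_*\|^2))$, an increasing function of $\|x-x_*\|^2$. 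Splitting the integral at $\|x-x_*\|^2=R$ with $R=2d/\lambda$: on $\{\|x-x_*\|^2\le R\}$ the integrand is at most $\exp(-\lambda^3\sigma^2 d/(128L^2))$, and the complementary region has $\pi$-mass at most $P(\chi^2_d>2d)\le e^{-d/32}$, using the stochastic domination $\|X-x_*\|^2\overset{s}{\preccurlyeq}\lambda^{-1}\chi^2_d$ under $\pi$ (a consequence of Assumption \ref{cond:1}, as used in the proof of Theorem \ref{theo:gap:smala} via Lemma \ref{lemma:norm:ucon}) together with a chi-squared upper-tail bound. Collecting everything, the averaged acceptance probability is at most $e^{-\lambda^3\sigma^2 d/(128L^2)}+2[e^{-\lambda^3\sigma^2 d/(128L^2)}+e^{-d/32}]+2e^{-d/32}=3e^{-\lambda^3\sigma^2 d/(128L^2)}+4e^{-d/32}$, which is \eqref{eq:lemma:average:acc:rw:result2}.

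\emph{Main obstacle.} The tail computations behind \eqref{eq:lemma:average:acc:rw:result1} are routine; the only delicate point is calibrating $\beta,\alpha$ (and the chi-squared truncation) so that the three explicit constants $1/4$, $1/64$, $1/32$ come out, which is bookkeeping. The genuine content is the averaging step: the pointwise acceptance probability degrades when $\|\nabla U(x)\|$ is large, so one needs quantitative control of the size of $\|\nabla U(X)\|$ under $\pi$, and the truncation radius $R=2d/\lambda$ must be tuned to balance the tail mass $P(\chi^2_d>2d)$ against the target exponent $\lambda^3\sigma^2 d/(128L^2)$; it is strong log-concavity of $\pi$ that makes this balance possible.
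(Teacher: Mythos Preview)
Your proposal is correct and follows essentially the same approach as the paper: the pointwise bound via the strong-convexity lower bound $U(x+\sigma Z)-U(x)\ge\sigma\langle\nabla U(x),Z\rangle+(\lambda\sigma^2/2)\|Z\|^2$, the inequality $\E[\min(1,X)]\le t+\Pr(X>t)$, a union bound splitting into a Gaussian-linear tail and a chi-squared lower tail, and then the averaged bound via $\|\nabla U(x)\|\le L\|x-x_*\|$ together with the stochastic domination $\|X-x_*\|^2\preccurlyeq\lambda^{-1}\chi^2_d$ from Lemma~\ref{lemma:norm:ucon}. The only cosmetic difference is that the paper packages your truncation-at-$R=2d/\lambda$ step into a separate auxiliary result (Lemma~\ref{lemma:chiq}, bounding $\E_W[\exp(-c/W)]$ for $W\sim\chi^2_d$ by the same splitting), whereas you do it inline; the arithmetic leading to the constants $3$ and $4$ is identical.
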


\begin{proof}

We start by decomposing $y$ in \eqref{eq:lemma:average:acc:rw:kernel} as $y = x + \sigma z$ where $z \sim N_d(0, \I_d)$. From $U$ being $m$-convex it follows that 
\begin{align*}
       U(x) - U(x + \sigma z)
       \leq &
       - \sigma \nabla U(x)^{\top} z
       - \frac{\lambda \sigma^2}{2}\|z\|^2  \\  
       = &\, -\sigma  \nabla U(x)^{\top} z
       - \frac{\lambda \sigma^2 d}{2}
       - \frac{\lambda \sigma^2}{2}
       \sum_{i = 1}^d (|z_i|^2 -1).
\end{align*}
This, together with an application of the union bound, implies that
\begin{align}
    \Pr\left(
    \exp\big( U(x) - U(x + \sigma z) \big)
    \geq
    \exp\Big(
    - \frac{\lambda \sigma^2 d}{4} 
    \Big)
    \right)
    \leq &
    \Pr\left(
    \sigma  \nabla U(x)^{\top} z 
    + \frac{\lambda \sigma^2 d}{2}
    + \frac{\lambda \sigma^2}{2}
    \sum_{i = 1}^d (|z_i|^2 -1)
    <
    \frac{\lambda \sigma^2 d}{4}
    \right) \nonumber \\
    =& \Pr\left(
    \sigma  \nabla U(x)^{\top} z
    +\frac{\lambda \sigma^2}{2} 
    \sum_{i = 1}^d (|z_i|^2 - 1 )
    <
    -\frac{\lambda \sigma^2 d}{4}
    \right) \nonumber \\
    \leq &
    \Pr\left(
    \nabla U(x)^{\top} z
    <
    -\frac{\lambda \sigma d}{8}
    \right)
    + \Pr\left(
    \frac{1}{d} \sum_{i = 1}^d (|z_i|^2 - 1 )
    < -\frac{1}{4}
    \right) \nonumber \\
    \leq & 2 \left \lbrace
    \exp \left(
    - \frac{( \sigma \lambda d)^2 }
    {64 \|\nabla U(x)\|^2}
    \right)
    + \exp\left(
    -\frac{d}{32}
    \right)
    \right\rbrace \label{eq:lemma:average:acc:rw:help0},
\end{align}
with the last line following from Bernstein's inequality.
Leveraging \eqref{eq:lemma:average:acc:rw:help0} and using that, for every $t>0$, $\E(\min(1,x)) \leq t + \Pr(x>t)$ we obtain
 \begin{align*}
       P^{\textsc{rw}}\big(x, \R^d \backslash \{x\} \big)
       \leq
       \exp\Big(
       -\frac{\lambda \sigma^2 d}{4}
       \Big)
       + 2
       \Big \lbrace
       \exp \Big(
       \frac{-(\sigma \lambda d)^2 }{64 \|\nabla U(x)\|^2} \Big)
       + 
       \exp\Big(
       -\frac{d}{32}
       \Big)
       \Big
       \rbrace,
\end{align*}  
which proves \eqref{eq:lemma:average:acc:rw:result1}. As regards \eqref{eq:lemma:average:acc:rw:result2}, let $x_*$ be the unique minimizer of $U(x)$ and recall that for $L$-smooth potential functions $\|\nabla U(x)\|^2 \leq L^2 \| x - x_* \|^2$. Moreover, from Assumption \ref{cond:1} and Lemma \ref{lemma:norm:ucon} it follows that if $X \sim \exp(-U)$ then $\| X - x_* \|^2 \overset{s}{\preccurlyeq} \lambda^{-1} \chi^2_d$. By combining these observations with \eqref{eq:lemma:average:acc:rw:help0}, the fact that $\exp(-1/k)$ is monotone increasing in $k$ for $k>0$, and Lemma \ref{lemma:chiq}  we obtain
\begin{align*}
  \int_{\R^d}
  \pi(dx)
  P^{\textsc{rw}}(x, \R^d \backslash \{x\} )
  \leq &
  \exp\Big(
   -\frac{\lambda \sigma^2 d}{4}
   \Big)
   + 
   2\exp\Big(
   -\frac{d}{32}
   \Big)
   + 2
   \int 
   \pi(dx)
   \exp \Big(
   \frac{-(\sigma \lambda d)^2 }{64 \|\nabla U(x)\|^2} \Big)
   \\
  \leq &
  \exp\Big(
   -\frac{\lambda \sigma^2 d}{4}
   \Big)
   + 
   2\exp\Big(
   -\frac{d}{32}
   \Big)
   + 2
   \int 
   \pi(dx)
   \exp \Big(
   \frac{- \lambda^3 (\sigma d)^2 }{64 L^2 \|x - x_* \|^2 } \Big)
    \\
    \leq &
  \exp\Big(
   -\frac{\lambda \sigma^2 d}{4}
   \Big)
   + 
   2\exp\Big(
   -\frac{d}{32}
   \Big)
   + 2
   \exp \Big(
   \frac{- \lambda^3 \sigma^2 d }{128 L^2 } \Big)
   +
   2\exp\Big(-\frac{d}{8}\Big) \\
   \leq &
   3 \exp \Big(
   \frac{- \lambda^3 \sigma^2 d }{128 L^2 }
   \Big) + 
   4\exp\Big(
   -\frac{d}{32}
   \Big).
\end{align*}
This concludes the proof.
\end{proof}
\begin{lemma}\label{lemma:chiq} 
Let $W\sim \chi_d^2$ and $c>0$
    \begin{equation*}
    \E_W  \exp\left( -\frac{c}{W} \right) \leq \exp\left(-\frac{c}{2d}\right) + \exp\left(-\frac{d}{8}\right) ,
    \end{equation*}
\end{lemma}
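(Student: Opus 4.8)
The plan is to exploit that $w \mapsto \exp(-c/w)$ is increasing on $(0,\infty)$, together with the fact that $W$ concentrates around its mean $d$. On the bulk event $\{W \le 2d\}$ monotonicity forces the integrand below $\exp(-c/(2d))$, while on the complementary upper tail the integrand is trivially at most $1$ and the tail probability is exponentially small.

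Concretely, first I would fix the threshold $t = 2d$ and split
\begin{equation*}
\E_W \exp\!\left(-\frac{c}{W}\right)
= \E_W\!\left[\exp\!\left(-\frac{c}{W}\right)\mathbbm{1}\{W \le 2d\}\right]
+ \E_W\!\left[\exp\!\left(-\frac{c}{W}\right)\mathbbm{1}\{W > 2d\}\right].
\end{equation*}
On $\{W \le 2d\}$, monotonicity of $w\mapsto\exp(-c/w)$ gives $\exp(-c/W)\le\exp(-c/(2d))$, so the first term is bounded by $\exp(-c/(2d))\,\Pr(W\le 2d)\le \exp(-c/(2d))$. On $\{W > 2d\}$ I bound $\exp(-c/W)\le 1$, so the second term is bounded by $\Pr(W > 2d)$.

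It then remains to check that $\Pr(W > 2d)\le \exp(-d/8)$. This is a standard upper-tail estimate for a chi-squared variable: a Chernoff/MGF argument yields $\Pr(\chi_d^2 \ge (1+\varepsilon)d)\le \exp\!\big(-\tfrac{d}{2}(\varepsilon-\log(1+\varepsilon))\big)$, and at $\varepsilon = 1$ one has $\tfrac12(1-\log 2) > \tfrac18$, so $\Pr(W\ge 2d)\le \exp(-d/8)$; alternatively one may invoke the Laurent--Massart deviation inequality. Adding the two pieces gives the claimed bound. The only real point of care is numerical — verifying that the single threshold $t = 2d$ is simultaneously compatible with the target $\exp(-c/(2d))$ on the bulk and with a clean $\exp(-d/8)$ tail bound — and the constants close with room to spare.
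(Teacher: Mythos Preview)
Your proof is correct and, at its core, identical to the paper's: both threshold at $W=2d$, bound the bulk by monotonicity, and bound the tail by a standard $\chi^2_d$ concentration estimate giving $\Pr(W>2d)\le\exp(-d/8)$. The only difference is presentational: the paper first rewrites $\E_W\exp(-c/W)=\int_0^1\Pr(W>-c/\log u)\,du$ via the layer-cake formula and then splits the $u$-integral at $u^*=\exp(-c/(2d))$ (which is exactly the point where $-c/\log u=2d$), whereas you split the expectation directly with indicators $\{W\le 2d\}$ and $\{W>2d\}$. Your route is the more direct of the two and avoids the survival-function detour; both arrive at the same two terms with the same constants.
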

\begin{proof}
    Since $\exp(-c/W)$ is positive and smaller than 1, we write the expectation in terms of survival function as
    \begin{equation*}
        \begin{aligned}
            \E_W  \exp\left( -\frac{c}{W} \right) = & \int_{0}^1 \Pr\Big( W > -\frac{c}{\log u} \Big) du \\
            =& \int_{0}^1 \Pr\Big( \frac{1}{d}\sum_{i =1}^n Z_i^2 - 1 > -\frac{c}{d \log u} - 1  \Big) du, 
        \end{aligned}
    \end{equation*}
    where $Z_i, \, i = 1,\dots,n,$ are i.i.d standard normal random variables. Note that $ -c/(d \log u) - 1 > 1 $ if $u > u^* = \exp(-c/(2d))$. Therefore, in view of Eq. (2.18) of \citet{wainwright2019high}, it follows  
    \begin{equation*}
         \E_W  \exp\left( -\frac{c}{W} \right) \leq \exp\Big(- \frac{c}{2d}\Big) + \int_{u^*}^1 \Pr\Big( \frac{1}{d}\sum_{i =1}^n Z_i^2 - 1 > -\frac{c}{d \log u} - 1  \Big) \leq \exp\Big(- \frac{c}{2d}\Big) + \exp\Big( -\frac{d}{8} \Big),
    \end{equation*}
    which concludes the proof of the lemma.
\end{proof}
\begin{lemma} \label{lemma:marginal:convex:smooth} 
Let $x \sim \pi $ with $\pi$ satisfying Assumption \ref{cond:1}. Let $\bV \in \V_{m}(\R^d)$ and consider the linear transformation $g_{\bV} \, : \, \R^d \to \R^m \times \R^{d-m}$ such that $g_{\bV}(x) = (s^{\perp}, s)$ where $s^{\perp} = (\bV^{\perp})^{\top} x$, with $\bV^{\perp}$ defined as in Lemma \ref{lemma:prop:smala}, and $s = \bV^{\top} x$. Then, the marginal distribution of $s^{\perp}$, $\pi_{\bV}$, satisfies Assumption \ref{cond:1} for the same constants $\lambda$ and $L$.
\end{lemma}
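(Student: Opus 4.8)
The plan is to identify the potential of the marginal $\pi_{\bV}$ and bound its Hessian from above by $L\,\I_{d-m}$ and from below by $\lambda\,\I_{d-m}$, which is exactly Assumption~\ref{cond:1} with the same constants. First I would fix coordinates: since $\bV^{\top}\bV=\I_m$ and the columns of $\bV^{\perp}$ form an orthonormal basis of $\mathrm{span}(\bV)^{\perp}$, the $d\times d$ matrix with columns $\bV,\bV^{\perp}$ is orthogonal, so $x\mapsto(s,s^{\perp})=(\bV^{\top}x,(\bV^{\perp})^{\top}x)$ is a linear isometry with unit Jacobian determinant and $\|\bV s+\bV^{\perp}s^{\perp}\|^{2}=\|s\|^{2}+\|s^{\perp}\|^{2}$. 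Hence the joint density of $(s,s^{\perp})$ is $\exp(-U(\bV s+\bV^{\perp}s^{\perp}))$ and $\pi_{\bV}$ has density proportional to $Z(s^{\perp})=\int_{\R^{m}}\exp(-U(\bV s+\bV^{\perp}s^{\perp}))\,ds$; I would write $W=-\log Z$ for its potential, up to an additive constant. Since $U$ is $\lambda$-convex with $\lambda>0$, it has a unique minimizer $x_*$ with $U(y)\ge U(x_*)+(\lambda/2)\|y-x_*\|^{2}$, which already gives $0<Z(s^{\perp})<\infty$ for every $s^{\perp}$.

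For the $\lambda$-convexity of $W$ I would invoke Pr\'ekopa's theorem. Decomposing $U(y)=(\lambda/2)\|y\|^{2}+V(y)$ with $V$ convex and using $\|\bV s+\bV^{\perp}s^{\perp}\|^{2}=\|s\|^{2}+\|s^{\perp}\|^{2}$, one has
\[
e^{(\lambda/2)\|s^{\perp}\|^{2}}\,Z(s^{\perp})=\int_{\R^{m}}\exp\!\Big(-\tfrac{\lambda}{2}\|s\|^{2}-V(\bV s+\bV^{\perp}s^{\perp})\Big)\,ds,
\]
whose integrand is the exponential of a jointly convex function of $(s,s^{\perp})$ (a convex quadratic plus $V$ precomposed with a linear map), hence is log-concave in $(s,s^{\perp})$. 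By Pr\'ekopa's theorem (marginals of log-concave densities are log-concave) the right-hand side is log-concave in $s^{\perp}$, so $e^{(\lambda/2)\|s^{\perp}\|^{2}}Z(s^{\perp})$ is log-concave, that is, $W(s^{\perp})-(\lambda/2)\|s^{\perp}\|^{2}=-\log\big(e^{(\lambda/2)\|s^{\perp}\|^{2}}Z(s^{\perp})\big)$ is convex; so $W$ is $\lambda$-convex.

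For the $C^{2}$-regularity and $L$-smoothness I would differentiate under the integral sign. From $L$-smoothness, $\|\nabla U(y)\|\le L\|y-x_*\|$ and $\|\nabla^{2}U(y)\|\le L$, so the partial derivatives up to order two of $s^{\perp}\mapsto\exp(-U(\bV s+\bV^{\perp}s^{\perp}))$ are dominated, locally uniformly in $s^{\perp}$, by a polynomial in $s$ times $\exp(-(\lambda/2)\|y-x_*\|^{2})$, which is integrable in $s$; dominated convergence then yields $Z\in C^{2}$, hence $W\in C^{2}$, together with the identities (with $\mu_{s^{\perp}}(ds)\propto\exp(-U(\bV s+\bV^{\perp}s^{\perp}))\,ds$ the conditional law of $s$ and $y=\bV s+\bV^{\perp}s^{\perp}$)
\[
\nabla W(s^{\perp})=\E_{\mu_{s^{\perp}}}\!\big[(\bV^{\perp})^{\top}\nabla U(y)\big],\qquad
\nabla^{2}W(s^{\perp})=\E_{\mu_{s^{\perp}}}\!\big[(\bV^{\perp})^{\top}\nabla^{2}U(y)\,\bV^{\perp}\big]-\mathrm{Cov}_{\mu_{s^{\perp}}}\!\big((\bV^{\perp})^{\top}\nabla U(y)\big).
\]
Since the covariance term is positive semidefinite, and $\nabla^{2}U(y)\preceq L\,\I_{d}$ with $(\bV^{\perp})^{\top}\bV^{\perp}=\I_{d-m}$ forces $(\bV^{\perp})^{\top}\nabla^{2}U(y)\bV^{\perp}\preceq L\,\I_{d-m}$, we get $\nabla^{2}W(s^{\perp})\preceq L\,\I_{d-m}$; combined with $\nabla^{2}W(s^{\perp})\succeq\lambda\,\I_{d-m}\succ 0$ from the previous step this gives $\|\nabla^{2}W(s^{\perp})\|\le L$, so $\nabla W$ is $L$-Lipschitz and the lemma follows. (Alternatively, the lower bound also follows directly from the Hessian identity by controlling the covariance term with the Brascamp--Lieb inequality and recognizing the result as a Schur complement of $\nabla^{2}U(y)$ written in the orthonormal basis $(\bV^{\perp},\bV)$, which is $\succeq\lambda\,\I_d$.) I expect the differentiation-under-the-integral step — establishing $W\in C^{2}$ and the covariance identity for $\nabla^{2}W$ — to be the only genuinely delicate point; once it is in place, both matrix bounds are immediate from convexity of $U$ and semidefiniteness of the covariance term.
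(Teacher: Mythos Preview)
Your proposal is correct and follows essentially the same approach as the paper: both reduce to an orthonormal coordinate system, obtain $\lambda$-convexity of the marginal potential via the Pr\'ekopa/strong-log-concavity preservation result (the paper cites \citet[Theorem 3.8]{saumard2014log}, you argue it directly), and obtain $L$-smoothness from the Hessian identity $\nabla^{2}W=\E[(\bV^{\perp})^{\top}\nabla^{2}U\,\bV^{\perp}]-\mathrm{Cov}((\bV^{\perp})^{\top}\nabla U)$ by discarding the positive semidefinite covariance term. Your write-up is in fact more careful than the paper's in justifying differentiation under the integral sign; the only cosmetic slip is the phrase ``exponential of a jointly convex function'' where you mean the exponential of the \emph{negative} of a jointly convex function.
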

\begin{proof}
    $L$-smoothness and $\lambda$-convexity are preserved under linear orthonormal transformations. Therefore, it is sufficient to prove the lemma for a generic partition $(x_1, x_2)$ of $x$, where $x_1 \in \R^{d-m}$ and $x_2 \in \R^m$. Let us denote with $\pi_1$ the marginal distribution of $x_1$ and with $ U_1(x_1) = - \log \big( \int_{\sX_2} \pi(x_1,d x_2)\big) $ the potential of $\pi_1$. Under Assumption \ref{cond:1}, $\lambda$-convexity of $U_1(x)$ is proven in \citet[][Theorem 3.8]{saumard2014log}.
    To prove $L$-smoothness, let 
    $$\pi(dx_2 \mid x_1)
    = 
    \pi(x_1,dx_2) /  \int_{\sX_2} \pi(x_1,dx_2).
    $$
    By applying the Leibniz integral rule and simple algebraic manipulations it follows that the first and second derivatives of $U_1(x)$ are equal to 
    \begin{equation*}
        \frac{\partial}{\partial x_1} U_1(x_1)
        =
        \int_{\sX_2} \left(
        \frac{\partial}{\partial x_1} 
        U(x_1,x_2)
        \right) 
        \pi(dx_2 \mid x_1),
    \end{equation*}
    and 
    \begin{equation} \label{eq:marginal:2nd}
     \frac{\partial^2}{\partial x_1 \partial x_1^{\top}} U_1(x_1)
     =
     \int_{\sX_2} \left(
     \frac{\partial^2}{\partial x_1  \partial x_1^{\top} } 
     U(x_1,x_2)
     \right) 
    \pi(dx_2 \mid x_1)
        -
        \mathrm{Cov}\left(
        \frac{\partial}{\partial x_1} 
        U(x_1,x_2)
        \right),
    \end{equation}
    where, for a given $x_1 \in \R^m$, the last term in the right-hand-side of \eqref{eq:marginal:2nd} denotes the covariance matrix of $(\partial/\partial x_1) U(x_1,x_2)$ under the conditional law $\pi(dx_2 \mid x_1)$. From Assumption \ref{cond:1} and Lemma C.3 in the Supplementary Material of \citet{durante2024skewed} it follows that $ ( \partial^2/(\partial x_1  \partial x_1^{\top} )) 
     U(x_1,x_2) \preccurlyeq L \I_{m}$ while 
     $\mathrm{Cov}
     \big(
        (\partial/\partial x_1) 
        U(x_1,x_2)
    \big)$
    is, by construction, positive semidefinite. This implies
    \begin{equation*} 
     \frac{\partial^2}{\partial x_1 \partial x_1^{\top}} U_1(x_1)
     \preccurlyeq
     L \I_{m},
    \end{equation*}
    which concludes the proof of the lemma.
\end{proof}

\begin{lemma}{(\textbf{Distribution on the sphere induced by the l2 norm})} \label{lemma:l2:sphere}
Let $p(x)$ be a probability density function with support on $\mathbb{R}^d$ and, for every $x \in \R^d$, define the map 
\begin{equation*}
 (x_1, \dots x_d) \to ( y_1, \dots, y_{d-1}, r),   
\end{equation*}
where $y_i = x_i/\|x\|_2,\, i = 1,\dots,d-1,$ and $r = \sign(x_d) \|x\|_2$ with $\|x\|_2 = ( \sum_{i = 1}^d |x_i|^2)^{1/2}$.

Then, the probability density function of the vector $(y,r)$ takes the form
\begin{equation*}
  p(y,r) = p(x(y,r)) |r|^{d-1}(1-\|y\|^2_2)^{-1/2},
\end{equation*}
where 
\begin{equation} \label{expr:inv:x}
 x(y,r) = (|r| y_1, |r| y_2, \dots, r (1-\|y\|^2_2)^{1/2}).   
\end{equation}
\end{lemma}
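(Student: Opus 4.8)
The plan is to recognize Lemma~\ref{lemma:l2:sphere} as a pure change-of-variables statement and prove it with the standard density-transformation formula. First I would check that the map $T\colon x\mapsto (y,r)$, with $y_i=x_i/\|x\|_2$ for $i\le d-1$ and $r=\sign(x_d)\|x\|_2$, is a bijection from the full-measure set $\{x\in\R^d:\,x_d\neq 0\}$ onto $\{y\in\R^{d-1}:\|y\|_2<1\}\times(\R\setminus\{0\})$, with inverse exactly \eqref{expr:inv:x}. This is immediate: $\|x\|_2=|r|$ and $\sign(x_d)=\sign(r)$ force $x_i=|r|y_i$ for $i\le d-1$ and $x_d=\sign(x_d)\sqrt{\|x\|_2^2-\sum_{i<d}x_i^2}=r(1-\|y\|_2^2)^{1/2}$; since $x_d\neq 0$ we have $\|y\|_2<1$ and $r\neq 0$, and both $T$ and $T^{-1}$ are $C^\infty$ on the relevant open sets. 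Hence, writing $J(y,r)=\partial x/\partial(y,r)$ for the Jacobian of $T^{-1}$, the density of $(y,r)$ is $p(y,r)=p(x(y,r))\,|\det J(y,r)|$, and it only remains to evaluate $|\det J|$.

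Next I would write $J$ explicitly from \eqref{expr:inv:x}. Setting $\rho=(1-\|y\|_2^2)^{1/2}$ and $y=(y_1,\dots,y_{d-1})^\top$, differentiation gives $\partial x_i/\partial y_j=|r|\,\delta_{ij}$ and $\partial x_i/\partial r=\sign(r)\,y_i$ for $i,j\le d-1$, while $\partial x_d/\partial y_j=-r\,y_j/\rho$ and $\partial x_d/\partial r=\rho$. Thus
\[
J(y,r)=\begin{pmatrix} |r|\,\I_{d-1} & \sign(r)\,y\\[4pt] -(r/\rho)\,y^\top & \rho\end{pmatrix}.
\]

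Finally, I would compute the determinant with the Schur-complement identity: for a block matrix with invertible top-left block $A$, column $b$, row $c^\top$ and scalar corner $e$, one has $\det=\det(A)\,(e-c^\top A^{-1}b)$. Applying this with $A=|r|\,\I_{d-1}$ gives $\det A=|r|^{d-1}$ and $A^{-1}=|r|^{-1}\I_{d-1}$; using $r\,\sign(r)=|r|$ one finds $c^\top A^{-1}b=-\|y\|_2^2/\rho$, hence $e-c^\top A^{-1}b=\rho+\|y\|_2^2/\rho=(\rho^2+\|y\|_2^2)/\rho=1/\rho$. Therefore $\det J=|r|^{d-1}/\rho=|r|^{d-1}(1-\|y\|_2^2)^{-1/2}>0$, and substituting into $p(y,r)=p(x(y,r))\,|\det J(y,r)|$ yields the claimed formula.

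I do not anticipate a genuine obstacle here; the only points that need care are (i) restricting at the outset to the full-measure set $\{x_d\neq 0\}$, which is what makes $\sign(x_d)$ (hence $r$) unambiguous and nonzero and turns $T$ into a genuine diffeomorphism onto $\{\|y\|_2<1\}\times(\R\setminus\{0\})$, and (ii) handling the non-differentiability of $r\mapsto|r|$ correctly when assembling $J$, differentiating it as $\sign(r)$, which is valid precisely because $r\neq 0$. Everything else is a routine determinant computation.
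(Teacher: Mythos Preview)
Your proof is correct and follows exactly the same approach as the paper, which simply states that the result is a direct consequence of the change-of-variables formula for probability densities. You have filled in the Jacobian computation that the paper omits, and your Schur-complement evaluation of $\det J=|r|^{d-1}(1-\|y\|_2^2)^{-1/2}$ is accurate.
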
 
\begin{proof}
   The result follows directly from the change of variables formula for probability density functions.
\end{proof}

\begin{lemma}\label{lemma:norm:ucon}
Let $U$ be $\lambda$-convex with minimum at $x_{*} \in \R^d$. Then $\|X-x^*\|^2 \overset{s}{\preccurlyeq}  \lambda^{-1}\chi ^2_d$ where $\overset{s}{\preccurlyeq}$ denotes the stochastic ordering.
\end{lemma}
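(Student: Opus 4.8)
The plan is to compare $X$ with the ``matching'' Gaussian $Y\sim N_d(x_*,\lambda^{-1}\I_d)$, for which $\|Y-x_*\|^2$ has exactly the $\lambda^{-1}\chi^2_d$ law; the statement then asserts that a $\lambda$-strongly log-concave law is at least as concentrated around its mode as this Gaussian. After the affine change of variables $x\mapsto \sqrt\lambda\,(x-x_*)$ --- which replaces $X$ by a vector with density $\propto e^{-\tilde U}$, where $\tilde U$ is $1$-strongly convex with minimizer $0$, and replaces $\|X-x_*\|^2$ by $\lambda\|X-x_*\|^2$ --- it suffices to prove the normalized statement $\|X\|^2\overset{s}{\preccurlyeq}\chi^2_d$ in the case $\lambda=1$, $x_*=0$. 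In that case set $W(x)=U(x)-U(0)-\tfrac12\|x\|^2$. By the definition of $1$-strong convexity $W$ is convex; since $0$ minimizes $U$, the strong-convexity lower bound at the minimizer gives $U(x)\ge U(0)+\tfrac12\|x\|^2$, i.e.\ $W\ge 0=W(0)$, so $0$ is a global minimizer of the convex function $W$.

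Next I would pass to polar coordinates $x=r\omega$ with $r\ge 0$ and $\omega\in S^{d-1}$. The radius $R=\|X\|$ then has a density on $[0,\infty)$ proportional to $r^{d-1}\int_{S^{d-1}}e^{-U(r\omega)}\,d\omega=e^{-U(0)}\,r^{d-1}e^{-r^2/2}\,g(r)$, where $g(r):=\int_{S^{d-1}}e^{-W(r\omega)}\,d\omega$ is positive, finite, and bounded above by $g(0)<\infty$, while $\|Z\|$ for $Z\sim N_d(0,\I_d)$ has radial density proportional to $r^{d-1}e^{-r^2/2}$; integrability of both is immediate from $W\ge 0$ and the Gaussian factor. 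The key observation is that $g$ is non-increasing: for each fixed $\omega$ the map $r\mapsto W(r\omega)$ is the restriction of the convex function $W$ to a half-line and attains its minimum at $r=0$, hence is non-decreasing on $[0,\infty)$, so $r\mapsto e^{-W(r\omega)}$ is non-increasing, and integrating in $\omega$ gives the claim. Consequently the likelihood ratio between the radial densities of $\|X\|$ and of $\|Z\|$ is, up to a positive constant, equal to $g$, hence non-increasing.

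Finally I would invoke the standard single-crossing argument: writing the density of $\|X\|$ as $c\,g(r)$ times that of $\|Z\|$, with $g$ non-increasing and both densities integrating to one, the non-increasing function $c\,g-1$ has zero integral against the $\|Z\|$-density, hence is non-negative on an interval $[0,r_0]$ and non-positive on $[r_0,\infty)$; splitting $\int_0^s$ at $s\le r_0$ and at $s\ge r_0$ then shows $\Pr(\|X\|\le s)\ge\Pr(\|Z\|\le s)$ for every $s$, i.e.\ $\|X\|\overset{s}{\preccurlyeq}\|Z\|$, and therefore (squaring preserves stochastic order) $\|X\|^2\overset{s}{\preccurlyeq}\|Z\|^2\sim\chi^2_d$. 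Undoing the normalization yields $\|X-x_*\|^2\overset{s}{\preccurlyeq}\lambda^{-1}\chi^2_d$. I expect no conceptual obstacle; the two points requiring a little care are (i) checking that $g$ is well defined, finite, and non-increasing --- which reduces to the elementary fact that a finite convex function on a half-line with minimum at the endpoint is non-decreasing --- and (ii) phrasing the monotone-likelihood-ratio step so that only monotonicity of $g$, and no regularity, is used.
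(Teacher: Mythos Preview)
Your proof is correct and follows the same high-level strategy as the paper --- pass to radial coordinates, show the likelihood ratio between the radial law of $X$ and that of the matching Gaussian is monotone, then invoke the implication ``monotone likelihood ratio $\Rightarrow$ stochastic order'' --- but your execution is both simpler and slightly more general. The paper works with the signed radius $R_X=\sign(X_d)\|X\|$ conditional on the reduced direction $Y_X=(X_1/\|X\|,\dots,X_{d-1}/\|X\|)$, proves monotonicity of the ratio by differentiating under the integral sign and using the Hessian bound $\tilde y^\top(\nabla^2 U)\tilde y\ge\lambda$ (hence needing $U\in C^2$, which is part of the paper's Assumption~\ref{cond:1}), and then treats the two half-lines $R_X\ge 0$ and $R_X<0$ separately via an external lemma. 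You instead use ordinary polar coordinates, integrate directly over $S^{d-1}$, and get monotonicity of the ratio from the purely convex-analytic fact that a convex function on a half-line with minimum at the endpoint is nondecreasing --- no second derivatives needed. The single-crossing argument you write out is exactly the content of the external likelihood-ratio lemma the paper cites. So your route avoids the signed-radius bookkeeping, the $C^2$ hypothesis, and the external reference, at no real cost.
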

\begin{proof}
    Without loss of generality assume $x^* =0$. Let $X \sim \pi$, with density $\pi(x) \propto \exp(-U(x))$. Let $Z \sim N_d(0, \lambda^{-1} \I_d)$. To prove the statement consider the random variables $R_X = \sign(X_{d}) \|X\|_2$, $R_Z = \sign(Z_{d}) \|Z\|_2$ and note that $ |R_Z| \sim \sqrt{ \lambda^{-1}\chi ^2_d}$. Let  $Y_{X,i} = X_{i}/\|X\|_2,$ $i = 1, \dots,d-1$. In view of Lemma \ref{lemma:l2:sphere}, the conditional probability density function of $R_X$ given $Y_X = (Y_{X,1}, \dots,Y_{X,d} )$ is $p_{R_X|Y_X=y}(r) \propto \exp( -U( x(y,r)) + (d-1)\log(|r|))$ with $x(y,r)$ defined in \eqref{expr:inv:x}. Similarly, the marginal probability density function of $R_Z$ satisfies $p_{R_Z}(r) \propto \exp(-\lambda r^2/2 + (d-1)\log( |r|)).$ The statement of the lemma is proven by showing that $R_X | \{ R_X \geq 0\} \overset{s}{\preccurlyeq} R_Z | \{ R_Z \geq 0\}$ and $R_X |\{R_X < 0 \} \overset{s}{\succcurlyeq} R_Z | \{ R_Z < 0\}$. 
    To do that, we first study the behavior of the ratio
    \begin{equation*}
            \frac{p_{R_X}(r) \mathbbm{1}_{r>0}}{p_{R_Z}(r)} =\int  \frac{p_{R_X|Y_X=y}(r) }{p_{R_Z}(r)} p_{Y_X}(dy)  = \int \exp \left \lbrace \frac{\lambda r^2}{2} -U(x(r,y) )\right \rbrace  p_{Y_X}(dy). 
    \end{equation*}
    The derivative of the quantity reported in the last display can be rearranged by applying the Leibniz integral rule as follows 
    \begin{equation*}
        \begin{aligned}
            \frac{\partial }{ \partial r}
            \int
            \exp
            \left\lbrace
            \frac{\lambda r^2}{2} -U(x(r,y))
            \right\rbrace 
            p_{Y_X}(dy)
            =
            \int
            \exp
            \left\lbrace
            \frac{\lambda r^2}{2} -U(x(r,y))
            \right\rbrace
            \Big(
            \lambda r 
            - 
            \frac{\partial }{ \partial r}
            U(x(r,y)
            \Big)
            p_{Y_X}(dy).
\end{aligned}
\end{equation*}
Moreover, let $\tilde y = (y_1,\dots,y_{d-1}, \sqrt{1-\|y\|_2^2})$ and note that 
\begin{equation*}
     \frac{\partial }{ \partial r} U(x(r,y)) \, = \,  \Big \lbrace  \frac{\partial }{ \partial x} U(x(r,y)) \Big \rbrace^{\top} \tilde y, 
\end{equation*}
since $U(x)$ is $\lambda$-convex,
 \begin{equation} 
 \label{eq:2nd:dev:ratio}
     \frac{\partial^2 }{ \partial^2 r} U(x(r,y)) \,=\, \tilde y^{\top}  \Big \lbrace  \frac{\partial^2 }{ \partial x \partial x^{\top}} U(x(r,y)) \Big \rbrace \tilde y \geq \lambda.  
\end{equation}
 As result, using \eqref{eq:2nd:dev:ratio} and for $r \geq 0$,
\begin{equation*}
  \lambda r -  \frac{\partial }{ \partial r} U(x(r,y))
  =
  \, \lambda r - 
  \int_{0}^r
  \tilde y^{\top}
  \Big \lbrace
  \frac{\partial^2 }{ \partial x \partial x^{\top}}
  U(x(r^*,y))
  \Big \rbrace 
  \tilde y
  dr^* 
  \leq
  \, \lambda r - \lambda \int_{0}^r  dr^* \,
  =
  \, 0.
\end{equation*}
This implies that the ratio $p_{R_X}(r)/p_{R_Z}(r)$ is decreasing in $r$ (i.e., $p_{R_Z}(r)/p_{R_X}(r)$ is increasing). Thus, in view of Lemma 5 of \citet{ascolani2023clustering}, $R_X | \{ R_X \geq 0\} \overset{s}{\preccurlyeq} \{R_Z | R_Z \geq 0\}$. Following the same steps, it is possible to prove $R_X |\{ R_X < 0\} \overset{s}{\succcurlyeq} R_Z | \{ R_Z < 0\}$. The combination of these two results proves the statement of the lemma. 
\end{proof}

\begin{lemma} \label{lemma:norm:cond2}
Let $a,b\in\R^d$ and $\bV \sim \nu$ with $ \nu $ satisfying Assumption \ref{cond:2}. Then
\begin{equation*}
    \E \big( 
    \|a + \bV \bV^\top b \|^2
    \big)
    =
    \frac{d-m}{d} \|a\|^2 +
    \frac{m}{d} \| a + b \|^2.
\end{equation*}

\end{lemma}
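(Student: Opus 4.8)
The plan is to expand the squared norm, take expectations term by term, and exploit two facts about $\bV$: membership in the Stiefel manifold gives $\bV^\top\bV=\I_m$ (so $\bV\bV^\top$ is an orthogonal projection), and Assumption~\ref{cond:2} gives $\E[\bV\bV^\top]=(m/d)\I_d$.

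First I would write, for fixed $\bV$,
\begin{equation*}
\|a+\bV\bV^\top b\|^2 = \|a\|^2 + 2\,a^\top\bV\bV^\top b + b^\top\bV\bV^\top\bV\bV^\top b,
\end{equation*}
using that $\bV\bV^\top$ is symmetric. Since $\bV\in\V_m(\R^d)$ satisfies $\bV^\top\bV=\I_m$, the quartic-looking last term collapses: $\bV\bV^\top\bV\bV^\top = \bV(\bV^\top\bV)\bV^\top = \bV\bV^\top$, so it equals $b^\top\bV\bV^\top b$. Taking expectations over $\bV\sim\nu$ and applying Assumption~\ref{cond:2},
\begin{equation*}
\E\big(\|a+\bV\bV^\top b\|^2\big) = \|a\|^2 + \frac{2m}{d}\,a^\top b + \frac{m}{d}\,\|b\|^2.
\end{equation*}

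It then remains to check that this matches the claimed expression: expanding $\|a+b\|^2 = \|a\|^2 + 2a^\top b + \|b\|^2$ gives $\tfrac{d-m}{d}\|a\|^2 + \tfrac{m}{d}\|a+b\|^2 = \|a\|^2 + \tfrac{2m}{d}a^\top b + \tfrac{m}{d}\|b\|^2$, which is exactly the right-hand side above. There is no real obstacle here; the only point requiring a moment's care is the idempotence of $\bV\bV^\top$ (equivalently, that it is the orthogonal projection onto $\mathrm{span}(\bV)$), which is what allows the degree-four term in $\bV$ to be reduced to a quadratic form before the expectation is taken.
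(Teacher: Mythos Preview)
Your proof is correct and follows essentially the same approach as the paper: expand the squared norm, use the idempotence $\bV\bV^\top\bV\bV^\top=\bV\bV^\top$ coming from $\bV\in\V_m(\R^d)$, apply $\E[\bV\bV^\top]=(m/d)\I_d$, and regroup. The only cosmetic difference is that the paper rewrites $\|a\|^2$ via the decomposition $\I_d=\bV\bV^\top+\bV^{\perp}(\bV^{\perp})^\top$ to group the terms directly, whereas you verify the final identity by expanding $\|a+b\|^2$.
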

\begin{proof}
Recall that $\I_d = \bV \bV^\top + \bV^{\perp} (\bV^{\perp} )^\top $ for any $\bV \in \V_m(\R^d)$ Stiefeld. This, and the fact that $\E ( \bV \bV^\top ) = (m/d) \I_d$, imply $\E( \bV^{\perp} (\bV^{\perp} )^\top ) = \{(d-m)/d\} \I_d$. As a consequence, using the identity $a =  \bV \bV^\top a + \bV^{\perp} (\bV^{\perp} )^\top a$, it follows from simple algebraic manipulations that 
\begin{align*}
    \E \big( 
    \|a + \bV \bV^\top b \|^2
    \big)
    = &
    a^\top \left \lbrace
    \E ( \bV \bV^\top ) + 
    \E( \bV^{\perp} (\bV^{\perp} )^\top )
    \right \rbrace 
    a +
    2 a^{\top} \E ( \bV \bV^\top ) b +
     b^{\top} \E ( \bV \bV^\top ) b \\
     = & 
     \frac{(d-m)}{d} \| a \|^2 +
     \frac{m}{d} \| a + b\|^2.
\end{align*}
This concludes the proof of the lemma.
\end{proof}
\end{document}